%
%
\documentclass[a4paper,12pt]{article}
\usepackage{fullpage}
%
%
\usepackage[bookmarksnumbered]{hyperref}
\usepackage[english]{babel}
\usepackage[utf8]{inputenc}
\usepackage{mathtools}
\usepackage{verbatim}
\usepackage{amsmath}
\usepackage{amssymb}
\usepackage{amsthm}
\usepackage{lipsum}
\usepackage{bbold}
\usepackage[capitalize]{cleveref}
\title{On the Universality and Membership problems for quantum gates}
\author{Lorenzo Mattioli${}^{1,*}$, Adam Sawicki${}^{1,**}$}
\date{
${}^{1}$ Center for Theoretical Physics, Polish Academy of Sciences, \\
Al. Lotników 32/46, 02-668 Warsaw, Poland \\
${}^{*}$mattioli@cft.edu.pl, ${^{**}}$ a.sawicki@cft.edu.pl \\
[2ex]
\today
}

\theoremstyle{definition}
\newtheorem{definition}{Definition}
\theoremstyle{plain}
\newtheorem{lemma}{Lemma}
\theoremstyle{plain}
\newtheorem{theorem}{Theorem}
\theoremstyle{plain}

\theoremstyle{remark}
\newtheorem{remark}{Remark}
\begin{document}
\maketitle
\begin{abstract}
We study the \emph{Universality} and \emph{Membership Problems} for gate sets consisting of a finite number of quantum gates. Our approach relies on  the techniques from compact Lie groups theory.
We also introduce an auxiliary problem called \emph{Subgroup Universality Problem}, which helps in solving some instances of the Membership Problem, and can be of interest on its own.
The resulting theorems are mainly formulated in terms of centralizers and the adjoint representations of a given set of quantum gates.
\end{abstract}
\section{Introduction}
It is well known that to construct any logic operation a classical computer needs just a single $2$-bit gate (such as NAND), since we can compose such gate many times in different ways and build any other gate.
Due to nonabelian structure of the unitary group the situation is manifestly not the same for quantum computers. Compositions of one quantum gate can only generate an abelian subgroup. Moreover, the set of all unitary operations is uncountable. Thus using \emph{parametrized} gates (e.g. all the rotations about certain axes), we can, under some conditions on the choice of the axes, precisely build any other gate. On the other hand if we only have access to a finite set $\mathcal{S}$ of \emph{not-parametrized} gates instead, we can implement precisely only countable subset of unitary operations, which we denote by $<\mathcal{S}>$. If this countable set is dense in the set of unitary operations, we call $\mathcal{S}$ a universal set of gates. Alternatively one can say that $\mathcal{S}$ is universal iff by composing gates from $\mathcal{S}$  can can approximate arbitrarily well any unitary operations.
A celebrated result in quantum computing states that a universal gate-set for one qubit together with one additional $2$-qubit entangling gate is a universal gate-set for any number of qubits \cite{Brylinski,Nielsen} (see also \cite{Oszmaniec1} for fermionic systems).

It is not difficult to choose a universal gate-set, since almost any gate-set of fixed cardinality bigger or equal two is universal (i.e. universal gate-sets form a Zariski open set \cite{Kuranishi}) and particular universal gate-sets are known \cite{Nielsen}. Despite this, recently there has been some activity regarding the \emph{Universality Problem}, which asks whether a given gate-set is universal. An algorithm for universality checking has been proposed \cite{Adam,Sawicki2, Sawicki3}. There are a few natural generalizations of the universality problem that we study in this paper. First of all, one can consider the \emph{Membership Problem}, which asks whether we can approximate arbitrarily well just certain gates (no matter whether the gate-set is universal or not).
Next, since there are proposed physical implementation of quantum computers which rely on quantum optics \cite{Exp1,Exp2,Reck,Sawicki1} and exploit $d \geq 2$ modes of light, we can generalize the problems to qudits (\cite{Fermion} provides another motivation in the context of fermionic linear optics).
Finally, since the unitary group is a compact Lie group, we can generalize the problems to any other compact Lie group $K$. 
This generalization can be of some relevance in the context of VQAs, where the phenomenon of \emph{barren plateaus} can in principle be avoided at the cost of reducing the expressibility of the ansatz (i.e. restricting the reachable set of gates to a subgroup) \cite{BarrenPlateaus}.
Our generalization also allows to reformulate the problems at the level of the Lie algebra $\mathfrak{k}$ of $K$, in the hope that solving these new problems help for solving the original ones. 

Following this line of reasoning we consider a subalgebra $\mathfrak{g}$ of $\mathfrak{k}$ generated by a subset $\mathcal{X}$ of $\mathfrak{k}$,
but we do \emph{not} want to generate $\mathfrak{g}$ explicitly, i.e. by taking linear combinations of nested commutators of elements in $\mathfrak{k}$.
Instead, we follow the approach taken in \cite{Zeier1,Zoltan1,Zoltan2,Adam} and in Section \ref{sec:Liealgebra} we review the results considering the universality and the membership problems for compact Lie algebras. In our notation the membership problem concerns checking if a subset $\mathcal{Y}$ of $\mathfrak{k}$ (possibly just one element of $\mathfrak{k}$) satisfies $\mathcal{Y} \subseteq \mathfrak{g}$. The problem has been solved in \cite{Zoltan1,Zoltan2} and we review it in \cref{membershipproblemalgebra}.

Our main goal is to consider analogous problems for finite sets of gates. In our setting we are given a subset $S\subset K$, where $K=SU(d)$ and we define the Lie subgroup $H\subset K$ as the closure of the set of words whose alphabet is $S$. The Universality Problem for Lie groups asks whether $H = K$. The problem was studied in \cite{Adam,Sawicki2}. Here we provide an alternative proof of Lemma 4.8 of \cite{Adam} that relies on properties of an auxiliary space defined in \cref{defofa}. We believe that  the techniques we developed can be of some interest on its own and they are also crucial for the membership problem.

Next we consider the membership problem for Lie groups, i.e. for a given subset  $T$ of $K$ (possibly just one element of $K$) we ask whether $T \subseteq H$. In order to deal with this we introduce some additional structure and an auxiliary problem which we call \emph{Subgroup Universality Problem}, that we describe in the next paragraph.

In the subgroup universality problem we consider the subalgebra $\mathfrak{g}$ of $\mathfrak{k}$ generated by a subset $\mathcal{X}\subset\mathfrak{k}$ and the unique connected Lie subgroup $G\subset K$ whose Lie algebra is $\mathfrak{g}$.
Explicitly, G is the set of words whose alphabet is either the exponential of $\mathfrak{g}$ or just the one-parameter subgroups corresponding to $\mathcal{X}$ (\cite{Parameter1,Parameter2}).
We finally define $S$ as the exponential of $\mathcal{X}$ and $H$ as the closure of the set of words whose alphabet is $S$.
The Subgroup Universality Problem asks if $H = G$ and we solve it for $\mathfrak{g}$ simple in \cref{subgroupuniversalityproblem}.
Having \cref{subgroupuniversalityproblem} we go back to the Membership Problem for Lie groups and distinguish a few cases in which we can solve it (\cref{membershipproblemgroup} and \cref{membershipproblemgroup2}). We also comment on the limitations of this approach.

All of our theorems have the following form: the answer to a given problem is 'yes' \emph{if and only if} some centralizers involving the adjoint representation of the Lie algebra (or Lie group) are equal \emph{and} some additional condition holds.
In some sense, the goal of this paper is \emph{not} solving our problems completely (which in general is a hopeless task, as shown for instance in \cite{Hardness}), but showing how far we can go with statements of that form.
The choice of the \emph{adjoint representation} of a Lie algebra or Lie group is natural one as every Lie algebra and every Lie group has the adjoint representation.
The additional condition are expressed directly in terms of the Lie algebra or the Lie group and is needed since the adjoint representation does not encode full information about the Lie algebra or a Lie group.
\section{Universality and membership problems for Lie algebras}\label{sec:Liealgebra}
In this section we introduce some basic concepts that will play a crucial role throughout the paper and review the results concerning the universality and membership problems for Lie algebras \cite{Zeier1,Zoltan1,Zoltan2,Adam}. We assume the reader is familiar with the basics of Lie algebras, and refer to the appendix for the technicalities concerning \emph{reductive} and \emph{compact} Lie algebras. Unless otherwise stated, we will be using the Killing form of a Lie algebra as a natural inner product.
\begin{definition}\label{subalgebragenerated}
Let $\mathfrak{k}$ be a Lie algebra and let $\mathcal{X}$ be a subset of $\mathfrak{k}$. Then
$\left\langle \mathcal{X} \right\rangle$, the \emph{subalgebra of $\mathfrak{k}$ generated by $\mathcal{X}$}, is any of the following equivalent objects:
\begin{itemize}
\item the \emph{minimal} subalgebra of $\mathfrak{k}$ which contains $\mathcal{X}$,
\item the subalgebra of $\mathfrak{k}$ to which the following sequence of subspaces of $\mathfrak{k}$ converges:
\begin{equation*}
\begin{split}
{W}_{0}
& =
\text{Span} \left( \mathcal{X} \right),
\\
{W}_{i + 1}
& =
{W}_{i} + \left[ {W}_{i},{W}_{i} \right].
\end{split}
\end{equation*}
\end{itemize}
\end{definition}
\noindent Another important observation is that $\left\langle \cdot \right\rangle$ commutes with Lie homomorphisms, i.e. linear maps that preserves Lie bracket, i.e the commutator.

\begin{lemma}\label{homomorphismgenerators}
Let $\mathfrak{k}$ and $\widetilde{\mathfrak{k}}$ be Lie algebras. Let $f : \mathfrak{k} \mapsto \widetilde{\mathfrak{k}}$ be a Lie homomorphism. Let $\mathcal{X}$ be a subset of $\mathfrak{k}$.

\begin{equation*}
f \left( \left\langle \mathcal{X} \right\rangle \right)
=\left\langle f \left( \mathcal{X} \right) \right\rangle.
\end{equation*}
\end{lemma}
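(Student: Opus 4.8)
The plan is to prove the two inclusions separately, relying on the characterization of $\langle \cdot \rangle$ as the \emph{minimal} subalgebra containing the given set (the first bullet of \cref{subalgebragenerated}), together with two elementary facts about a Lie homomorphism $f$: the image $f(\mathfrak{h})$ of a subalgebra $\mathfrak{h}\subseteq\mathfrak{k}$ is again a subalgebra of $\widetilde{\mathfrak{k}}$, and the preimage $f^{-1}(\widetilde{\mathfrak{h}})$ of a subalgebra $\widetilde{\mathfrak{h}}\subseteq\widetilde{\mathfrak{k}}$ is a subalgebra of $\mathfrak{k}$. Both facts follow at once from $f$ being linear and satisfying $f([x,y])=[f(x),f(y)]$.

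For the inclusion $\langle f(\mathcal{X}) \rangle \subseteq f(\langle \mathcal{X} \rangle)$ I would observe that $f(\langle \mathcal{X} \rangle)$ is a subalgebra of $\widetilde{\mathfrak{k}}$ which contains $f(\mathcal{X})$, since $\mathcal{X} \subseteq \langle \mathcal{X} \rangle$. By minimality of $\langle f(\mathcal{X}) \rangle$ as the smallest subalgebra containing $f(\mathcal{X})$, the inclusion follows. For the reverse inclusion $f(\langle \mathcal{X} \rangle) \subseteq \langle f(\mathcal{X}) \rangle$ I would instead consider the preimage $f^{-1}(\langle f(\mathcal{X}) \rangle)$. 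This is a subalgebra of $\mathfrak{k}$, and it contains $\mathcal{X}$ because $f(\mathcal{X}) \subseteq \langle f(\mathcal{X}) \rangle$ forces every $x\in\mathcal{X}$ to satisfy $f(x)\in\langle f(\mathcal{X})\rangle$. Hence by minimality $\langle \mathcal{X} \rangle \subseteq f^{-1}(\langle f(\mathcal{X}) \rangle)$, which is exactly the statement that $f(\langle \mathcal{X} \rangle) \subseteq \langle f(\mathcal{X}) \rangle$. Combining the two inclusions yields the equality.

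Alternatively, one can argue directly from the second bullet of \cref{subalgebragenerated} by proving by induction on $i$ that $f(W_i) = \widetilde{W}_i$, where $W_i$ and $\widetilde{W}_i$ denote the defining sequences for $\langle \mathcal{X} \rangle$ and $\langle f(\mathcal{X}) \rangle$ respectively. The base case $f(W_0) = f(\mathrm{Span}(\mathcal{X})) = \mathrm{Span}(f(\mathcal{X})) = \widetilde{W}_0$ is pure linearity, and the inductive step reduces to $f([W_i,W_i]) = [f(W_i),f(W_i)]$, which again is the homomorphism property applied spanwise. Passing to the stabilized limit, which is reached after finitely many steps in finite dimension, then gives $f(\langle\mathcal{X}\rangle)=\langle f(\mathcal{X})\rangle$.

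I do not expect a genuine obstacle here, as this is a structural rather than computational statement. The only point that warrants care is the verification that images and preimages of subalgebras under $f$ are themselves subalgebras (equivalently, in the inductive route, that the two sequences remain in lockstep under $f$); both are immediate once linearity and bracket-preservation are invoked, and no compactness or finite-dimensionality is needed beyond ensuring the defining sequence terminates.
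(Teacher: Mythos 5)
Your argument is correct. The paper states \cref{homomorphismgenerators} without proof, offering it only as an ``important observation,'' so there is no in-paper argument to compare against; both routes you give are standard and sound. The minimality route is the cleaner of the two: the only substantive inputs are that images and preimages of subalgebras under a Lie homomorphism are subalgebras, which you correctly reduce to linearity plus $f([x,y])=[f(x),f(y)]$, and it avoids any appeal to stabilization of the sequence $W_i$. The inductive route via $f(W_i)=\widetilde{W}_i$ is also valid (the key step $f([W_i,W_i])=[f(W_i),f(W_i)]$ holds because both sides are spans of the corresponding bracket sets), and you are right to flag that it is the only place where termination of the defining sequence, hence finite-dimensionality, would enter.
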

\noindent We continue with the definition of the notion of a {\it centralizer}.
\begin{definition}\label{centralizeralgebra}
Let $\mathfrak{k}$ be a Lie algebra. Let $\mathfrak{g}$ be a subalgebra of $\mathfrak{k}$ and let $\mathcal{X}$ be a subset of $\mathfrak{k}$. The \emph{centralizer of $\mathcal{X}$ in $\mathfrak{g}$} is the following subalgebra of $\mathfrak{g}$:
\begin{equation*}
{C}_{\mathfrak{g}} \left( \mathcal{X} \right)
\coloneqq
\left\{ x \in g \mid \forall \ y \in \mathcal{X} \left[ x,y \right] = 0 \right\}.
\end{equation*}
\end{definition}
\noindent Having a subset $\mathcal{X}\subset \mathfrak{g}$ there are quantities depending on a generated subalgebra, $\left\langle \mathcal{X} \right\rangle$, that can be calculated without generating $\left\langle \mathcal{X} \right\rangle$. As an immediate consequence of Definition \ref{centralizeralgebra} we get the following lemma: 
\begin{lemma}\label{centralizeralgebragenerators}
Let $\mathfrak{k}$ be a Lie algebra and $\mathfrak{g}$ be a subalgebra of $\mathfrak{k}$. Let $\mathcal{X}$ be a subset of $\mathfrak{k}$.
\begin{equation*}
{C}_{\mathfrak{g}} \left( \left\langle \mathcal{X} \right\rangle \right)
={C}_{\mathfrak{g}} \left( \mathcal{X} \right).
\end{equation*}
\end{lemma}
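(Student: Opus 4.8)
The plan is to prove the two inclusions $C_{\mathfrak{g}}(\langle \mathcal{X} \rangle) \subseteq C_{\mathfrak{g}}(\mathcal{X})$ and $C_{\mathfrak{g}}(\mathcal{X}) \subseteq C_{\mathfrak{g}}(\langle \mathcal{X} \rangle)$ separately. The first inclusion is immediate from the definitions: since $\mathcal{X} \subseteq \langle \mathcal{X} \rangle$ by \cref{subalgebragenerated}, any element $x \in \mathfrak{g}$ that commutes with every element of $\langle \mathcal{X} \rangle$ in particular commutes with every element of $\mathcal{X}$, so it lies in $C_{\mathfrak{g}}(\mathcal{X})$.

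The nontrivial direction is $C_{\mathfrak{g}}(\mathcal{X}) \subseteq C_{\mathfrak{g}}(\langle \mathcal{X} \rangle)$. Here I would fix $x \in C_{\mathfrak{g}}(\mathcal{X})$ and show that the set $Z_x \coloneqq \{ y \in \mathfrak{k} \mid [x,y] = 0 \}$ contains all of $\langle \mathcal{X} \rangle$. The key observation is that $Z_x$ is itself a subalgebra of $\mathfrak{k}$: it is clearly a linear subspace, and if $y_1, y_2 \in Z_x$ then the Jacobi identity gives $[x,[y_1,y_2]] = [[x,y_1],y_2] + [y_1,[x,y_2]] = 0$, so $[y_1,y_2] \in Z_x$ as well. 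Since $x \in C_{\mathfrak{g}}(\mathcal{X})$ means precisely that $\mathcal{X} \subseteq Z_x$, and $\langle \mathcal{X} \rangle$ is by \cref{subalgebragenerated} the \emph{minimal} subalgebra containing $\mathcal{X}$, we conclude $\langle \mathcal{X} \rangle \subseteq Z_x$. This says exactly that $x$ commutes with every element of $\langle \mathcal{X} \rangle$, i.e. $x \in C_{\mathfrak{g}}(\langle \mathcal{X} \rangle)$.

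The main (and only) obstacle is verifying that $Z_x$ is closed under the Lie bracket, which is the step where the Jacobi identity is essential; everything else is bookkeeping with the definitions of centralizer and generated subalgebra. Combining the two inclusions yields the claimed equality $C_{\mathfrak{g}}(\langle \mathcal{X} \rangle) = C_{\mathfrak{g}}(\mathcal{X})$. As an alternative to the ``minimal subalgebra'' characterization, one could instead argue inductively along the ascending sequence $W_0 \subseteq W_1 \subseteq \cdots$ from \cref{subalgebragenerated}: if $x$ centralizes $W_i$ then bilinearity of the bracket together with the Jacobi identity shows $x$ centralizes $[W_i, W_i]$ and hence $W_{i+1}$, so by induction $x$ centralizes the limit $\langle \mathcal{X} \rangle$. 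Both routes rest on the same underlying fact that the centralizer of a single element is a subalgebra.
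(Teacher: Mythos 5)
Your proof is correct and fills in exactly the argument the paper leaves implicit when it states the lemma as "an immediate consequence" of \cref{centralizeralgebra}: the trivial inclusion from $\mathcal{X} \subseteq \left\langle \mathcal{X} \right\rangle$, and the observation via the Jacobi identity that the centralizer of a fixed element is a subalgebra, so minimality of $\left\langle \mathcal{X} \right\rangle$ finishes the nontrivial inclusion. Both of your suggested routes (minimality or induction along the $W_i$) are valid and standard; no gaps.
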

\noindent For \emph{compact} Lie algebras we introduce projection operators associated with the set $\mathcal{X}$.

\begin{definition}\label{projector} Let $\mathfrak{k}$ be a \emph{compact} Lie algebra and $\mathcal{X}$ be a subset of $\mathfrak{k}$. ${P}_{\mathcal{X}}$ is the projector such that $\text{Im} \left( {P}_{\mathcal{X}} \right) = {C}_{\mathfrak{k}} \left( \mathcal{X} \right)$ and $\text{Ker} \left( {P}_{\mathcal{X}} \right) = {\mathfrak{k}}^{\prime} \cap {{C}_{\mathfrak{k}} \left( \mathcal{X} \right)}^{\perp}$.
\end{definition}

\noindent One can see (cf. \cref{compactdecomposition}) that ${P}_{\mathcal{X}}$ is well defined, namely that $\mathfrak{k} = \text{Im} \left( {P}_{\mathcal{X}} \right) \oplus \text{Ker} \left( {P}_{\mathcal{X}} \right)$. Similarly as in \cref{centralizeralgebragenerators} we have 
\begin{lemma}\label{projectorgenerators}
Let $\mathfrak{k}$ be a \emph{compact} Lie algebra and $\mathcal{X}$ be a subset of $\mathfrak{k}$. Then ${P}_{\left\langle \mathcal{X} \right\rangle}={P}_{\mathcal{X}}.$
\end{lemma}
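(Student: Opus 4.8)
The plan is to observe that a projector is completely determined by the pair consisting of its image and its kernel, so it suffices to check that $P_{\langle \mathcal{X} \rangle}$ and $P_{\mathcal{X}}$ have the same image and the same kernel. By \cref{projector}, the image of $P_{\mathcal{X}}$ is $C_{\mathfrak{k}}(\mathcal{X})$ and its kernel is $\mathfrak{k}' \cap C_{\mathfrak{k}}(\mathcal{X})^{\perp}$, while the image of $P_{\langle \mathcal{X} \rangle}$ is $C_{\mathfrak{k}}(\langle \mathcal{X} \rangle)$ and its kernel is $\mathfrak{k}' \cap C_{\mathfrak{k}}(\langle \mathcal{X} \rangle)^{\perp}$. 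Thus the whole statement reduces to comparing the two centralizers.

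First I would invoke \cref{centralizeralgebragenerators} with the choice $\mathfrak{g} = \mathfrak{k}$. This immediately yields
\begin{equation*}
C_{\mathfrak{k}} \left( \left\langle \mathcal{X} \right\rangle \right) = C_{\mathfrak{k}} \left( \mathcal{X} \right),
\end{equation*}
which already shows that the two projectors have the same image. For the kernels, I would then take orthogonal complements (with respect to the fixed inner product, e.g. the Killing form on the compact algebra) of both sides of the above equality; since taking orthogonal complements is an involution determined by the subspace alone, $C_{\mathfrak{k}}(\langle \mathcal{X} \rangle)^{\perp} = C_{\mathfrak{k}}(\mathcal{X})^{\perp}$, and intersecting both with the derived subalgebra $\mathfrak{k}'$ gives equality of the kernels as well.

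Finally I would conclude by uniqueness of the projector. Here the only genuine content is the well-definedness guaranteed by \cref{compactdecomposition}, which ensures that $\mathfrak{k} = \mathrm{Im}(P) \oplus \mathrm{Ker}(P)$ holds for both $P = P_{\mathcal{X}}$ and $P = P_{\langle \mathcal{X} \rangle}$. Once this direct sum decomposition is available, the linear map that fixes the image and annihilates the kernel is unique, so the coincidence of images and kernels forces $P_{\langle \mathcal{X} \rangle} = P_{\mathcal{X}}$.

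I do not expect a substantial obstacle in this argument: the result is essentially a corollary of \cref{centralizeralgebragenerators}, and the emphasis the authors place on it being an ``immediate consequence'' (as phrased for the analogous \cref{centralizeralgebragenerators}) matches this. The only point requiring a little care is making explicit that the projector is characterized by image and kernel rather than, say, by being orthogonal; since \cref{projector} prescribes both $\mathrm{Im}$ and $\mathrm{Ker}$ directly, this characterization is exactly what is needed and no extra property of the inner product is used beyond defining $C_{\mathfrak{k}}(\mathcal{X})^{\perp}$.
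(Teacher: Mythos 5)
Your proof is correct and takes essentially the same route the paper intends: the paper states this lemma without an explicit proof, treating it as an immediate consequence of \cref{centralizeralgebragenerators}, and your argument simply spells out why that suffices (both the image and the kernel of ${P}_{\mathcal{X}}$ as prescribed in \cref{projector} depend on $\mathcal{X}$ only through ${C}_{\mathfrak{k}}\left(\mathcal{X}\right)$, and \cref{compactdecomposition} gives the direct sum that makes the projector unique). Nothing is missing.
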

\noindent Once we know a decomposition of a generated subalgebra, it is natural to ask how to generate the summands in the decomposition independently. The following theorem provides the answer for \emph{compact} Lie algebras (see \cref{reductivesubalgebra}), and its first equality happens to be yet another instance of our ``golden rule''.

\begin{lemma}\label{subgenerators}
Let $\mathfrak{k}$ be a \emph{compact} Lie algebra. 
Let $\mathcal{X}$ be a subset of $\mathfrak{k}$. 
Let $\mathfrak{g} \coloneqq \left\langle \mathcal{X} \right\rangle$.
\begin{equation*}
\begin{split}
{C}_{\mathfrak{g}} \left( \mathfrak{g} \right)
& =
\mathrm{Span} \left( {P}_{\mathcal{X}} \left( \mathcal{X} \right) \right),
\\
{\mathfrak{g}}^{\prime}
& =
\left\langle \left( \mathbb{1} - {P}_{\mathcal{X}} \right) \left( \mathcal{X} \right) \right\rangle.
\end{split}
\end{equation*}
\end{lemma}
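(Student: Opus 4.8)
The plan is to show that the restriction of $P_{\mathcal{X}}$ to $\mathfrak{g}$ is exactly the projection onto the center $C_{\mathfrak{g}}(\mathfrak{g})$ along the derived subalgebra $\mathfrak{g}'$; both claimed equalities then drop out quickly. Throughout I use that a subalgebra of a compact Lie algebra is again compact, hence reductive, so that $\mathfrak{g} = C_{\mathfrak{g}}(\mathfrak{g}) \oplus \mathfrak{g}'$ (cf.\ \cref{reductivesubalgebra}), and that the relevant $\mathrm{ad}$-invariant form $B$ coming from (minus) the Killing form vanishes on $C_{\mathfrak{k}}(\mathfrak{k})$ and is positive definite on $\mathfrak{k}'$. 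I also record the elementary fact that $\mathfrak{g} = \mathrm{Span}(\mathcal{X}) + \mathfrak{g}'$: every nested bracket in the construction of \cref{subalgebragenerated} lies in $\mathfrak{g}'$, so an induction on the spaces $W_i$ gives $\langle \mathcal{X}\rangle \subseteq \mathrm{Span}(\mathcal{X}) + \mathfrak{g}'$, and the reverse inclusion is obvious.

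The key step is to locate $P_{\mathcal{X}}$ on $\mathfrak{g}$. First, $C_{\mathfrak{g}}(\mathfrak{g})$ consists of elements commuting with all of $\mathfrak{g} \supseteq \mathcal{X}$, so $C_{\mathfrak{g}}(\mathfrak{g}) \subseteq C_{\mathfrak{k}}(\mathcal{X}) = \mathrm{Im}(P_{\mathcal{X}})$ and hence $P_{\mathcal{X}}$ fixes $C_{\mathfrak{g}}(\mathfrak{g})$ pointwise. Second, I claim $\mathfrak{g}' \subseteq \mathrm{Ker}(P_{\mathcal{X}}) = \mathfrak{k}' \cap C_{\mathfrak{k}}(\mathcal{X})^{\perp}$. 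The inclusion $\mathfrak{g}' \subseteq \mathfrak{k}'$ is immediate. For the orthogonality, note that by \cref{centralizeralgebragenerators} any $c \in C_{\mathfrak{k}}(\mathcal{X}) = C_{\mathfrak{k}}(\langle\mathcal{X}\rangle)$ commutes with all of $\mathfrak{g}$; then for $a, b \in \mathfrak{g}$ the $\mathrm{ad}$-invariance of $B$ gives $B(c, [a, b]) = -B([a, c], b) = 0$, so $C_{\mathfrak{k}}(\mathcal{X}) \perp [\mathfrak{g}, \mathfrak{g}] = \mathfrak{g}'$. Combining the two facts with $\mathfrak{g} = C_{\mathfrak{g}}(\mathfrak{g}) \oplus \mathfrak{g}'$, I conclude that $P_{\mathcal{X}}$ restricted to $\mathfrak{g}$ is the projection onto $C_{\mathfrak{g}}(\mathfrak{g})$ with kernel $\mathfrak{g}'$.

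Granting this, the first equality is immediate: since $\mathcal{X} \subseteq \mathfrak{g}$ we have $\mathrm{Span}(P_{\mathcal{X}}(\mathcal{X})) \subseteq C_{\mathfrak{g}}(\mathfrak{g})$, while applying $P_{\mathcal{X}}$ to $\mathfrak{g} = \mathrm{Span}(\mathcal{X}) + \mathfrak{g}'$ and using $P_{\mathcal{X}}(\mathfrak{g}') = 0$ yields $C_{\mathfrak{g}}(\mathfrak{g}) = P_{\mathcal{X}}(\mathfrak{g}) = \mathrm{Span}(P_{\mathcal{X}}(\mathcal{X}))$. For the second equality, set $\mathfrak{h} := \langle (\mathbb{1} - P_{\mathcal{X}})(\mathcal{X})\rangle$. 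As $(\mathbb{1} - P_{\mathcal{X}})(\mathcal{X}) \subseteq \mathfrak{g}'$ and $\mathfrak{g}'$ is a subalgebra, $\mathfrak{h} \subseteq \mathfrak{g}'$. Conversely each $x \in \mathcal{X}$ equals $P_{\mathcal{X}}(x) + (\mathbb{1} - P_{\mathcal{X}})(x) \in C_{\mathfrak{g}}(\mathfrak{g}) + \mathfrak{h}$, and $C_{\mathfrak{g}}(\mathfrak{g}) + \mathfrak{h}$ is a subalgebra because $C_{\mathfrak{g}}(\mathfrak{g})$ is central; hence $\mathfrak{g} = \langle \mathcal{X}\rangle \subseteq C_{\mathfrak{g}}(\mathfrak{g}) + \mathfrak{h}$, which forces $\mathfrak{g} = C_{\mathfrak{g}}(\mathfrak{g}) + \mathfrak{h}$. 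Passing to derived subalgebras, the central summand kills every cross term, so $\mathfrak{g}' = [\mathfrak{h}, \mathfrak{h}] = \mathfrak{h}'$; together with $\mathfrak{h}' \subseteq \mathfrak{h} \subseteq \mathfrak{g}'$ this pins down $\mathfrak{h} = \mathfrak{g}'$, as required. I expect the main obstacle to be the middle step, namely verifying $\mathfrak{g}' \subseteq \mathrm{Ker}(P_{\mathcal{X}})$: this is where one must both upgrade $\mathcal{X}$ to $\langle \mathcal{X}\rangle$ inside the centralizer and invoke $\mathrm{ad}$-invariance, and where the degeneracy of the Killing form on $C_{\mathfrak{k}}(\mathfrak{k})$ makes the intersection with $\mathfrak{k}'$ in the definition of $\mathrm{Ker}(P_{\mathcal{X}})$ genuinely necessary.
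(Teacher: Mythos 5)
Your proof is correct. The paper actually states \cref{subgenerators} without giving a proof, so there is nothing to compare against line by line; but your argument is built entirely from the paper's own toolkit — the reductive splitting $\mathfrak{g}=C_{\mathfrak{g}}(\mathfrak{g})\oplus\mathfrak{g}'$ of \cref{reductivesubalgebra}, the upgrade $C_{\mathfrak{k}}(\mathcal{X})=C_{\mathfrak{k}}(\langle\mathcal{X}\rangle)$ of \cref{centralizeralgebragenerators}, and the ad-invariance of the Killing form underlying \cref{compactdecomposition} — and it is surely the intended one. The two pivotal observations, namely that $P_{\mathcal{X}}$ restricted to $\mathfrak{g}$ is precisely the projection onto $C_{\mathfrak{g}}(\mathfrak{g})$ along $\mathfrak{g}'$, and that $\mathfrak{g}=\mathrm{Span}(\mathcal{X})+\mathfrak{g}'$, are both justified carefully (the inclusion $\mathfrak{g}'\subseteq C_{\mathfrak{k}}(\mathcal{X})^{\perp}$ via invariance of the form is exactly the point that needs checking), and the closing bootstrap $\mathfrak{g}'=\mathfrak{h}'\subseteq\mathfrak{h}\subseteq\mathfrak{g}'$ for the second equality is clean. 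No gaps.
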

%

\noindent We are finally ready to recall the results of \cite{Zoltan1,Zoltan2} and \cite{Adam}. The two results that are important from our perspective are the universality and membership problems for Lie algebras. In the first one we are given a finite\footnote{This assumption is obvious in real life, but irrelevant in the following theorem.} set of hamiltonians\footnote{Formally a subset of $\mathfrak{u} \left( d \right)$ after multiplication by the imaginary unit.} $\mathcal{X}$, and we ask if (and how) we can decide whether $\left\langle \mathcal{X} \right\rangle = \mathfrak{u} \left( d \right)$ without generating $\left\langle \mathcal{X} \right\rangle$. The general solution of this problem for any \emph{compact} Lie algebra $\mathfrak{k}$ is given in the following theorem \cite{Zoltan1,Zoltan2}.
\begin{theorem}\label{universalityproblemalgebra}
Let $\mathfrak{k}$ be a \emph{compact} Lie algebra. Let $\mathcal{X}$ be a subset of $\mathfrak{k}$ and $\mathfrak{g} \coloneqq \left\langle \mathcal{X} \right\rangle$.
\begin{equation*}
\mathfrak{g} = \mathfrak{k}
\iff
{C}_{\mathfrak{gl \left( k \right)}} \left( {\mathrm{ad}}_{\mathcal{X}} \right) = {C}_{\mathfrak{gl \left( k \right)}} \left( {\mathrm{ad}}_{\mathfrak{k}} \right)\,\, \mathrm{ and }\,\, \dim \left( \mathrm{Span} \left( {P}_{\mathfrak{k}} \left( \mathcal{X} \right) \right) \right) = \dim \left( {C}_{\mathfrak{k}} \left( \mathfrak{k} \right) \right).
\end{equation*}
\end{theorem}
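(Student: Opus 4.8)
The plan is to exploit that the adjoint map $\mathrm{ad}:\mathfrak{k}\to\mathfrak{gl}(\mathfrak{k})$ is a Lie homomorphism, so that the centralizer condition can be rephrased purely in terms of the generated subalgebra $\mathfrak{g}=\left\langle\mathcal{X}\right\rangle$. Indeed, applying \cref{centralizeralgebragenerators} inside $\mathfrak{gl}(\mathfrak{k})$ gives $C_{\mathfrak{gl}(\mathfrak{k})}(\mathrm{ad}_{\mathcal{X}})=C_{\mathfrak{gl}(\mathfrak{k})}(\left\langle\mathrm{ad}_{\mathcal{X}}\right\rangle)$, while \cref{homomorphismgenerators} yields $\left\langle\mathrm{ad}_{\mathcal{X}}\right\rangle=\mathrm{ad}(\left\langle\mathcal{X}\right\rangle)=\mathrm{ad}_{\mathfrak{g}}$. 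Hence, with no assumption on $\mathfrak{g}$, the first condition of the theorem is \emph{equivalent} to $C_{\mathfrak{gl}(\mathfrak{k})}(\mathrm{ad}_{\mathfrak{g}})=C_{\mathfrak{gl}(\mathfrak{k})}(\mathrm{ad}_{\mathfrak{k}})$, and I will use this reformulation in both directions. Throughout I write $\mathfrak{k}=\mathfrak{k}^{\prime}\oplus C_{\mathfrak{k}}(\mathfrak{k})$ for the orthogonal decomposition of the compact algebra into its derived (semisimple) part and its center.

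For the forward implication assume $\mathfrak{g}=\mathfrak{k}$. Then $\mathrm{ad}_{\mathfrak{g}}=\mathrm{ad}_{\mathfrak{k}}$, so the reformulated centralizer condition is immediate. For the dimension condition, \cref{projectorgenerators} gives $P_{\mathcal{X}}=P_{\left\langle\mathcal{X}\right\rangle}=P_{\mathfrak{k}}$, and \cref{subgenerators} gives $C_{\mathfrak{k}}(\mathfrak{k})=C_{\mathfrak{g}}(\mathfrak{g})=\mathrm{Span}(P_{\mathcal{X}}(\mathcal{X}))=\mathrm{Span}(P_{\mathfrak{k}}(\mathcal{X}))$; taking dimensions finishes this direction.

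The substance is the converse, which I would split according to the two summands of $\mathfrak{k}$. The key observation is that the center acts trivially, so $\mathrm{ad}_{\mathfrak{k}}=\mathrm{ad}_{\mathfrak{k}^{\prime}}$, a \emph{faithful} image of the semisimple algebra $\mathfrak{k}^{\prime}$, and $\mathrm{ad}_{\mathfrak{g}}\subseteq\mathrm{ad}_{\mathfrak{k}}$. Viewing $\mathfrak{k}$ itself as the representation space $M$, the two centralizers are the module-endomorphism algebras $\mathrm{End}_{\mathrm{ad}_{\mathfrak{g}}}(M)\supseteq\mathrm{End}_{\mathrm{ad}_{\mathfrak{k}}}(M)$, and I want the hypothesised equality to force $\mathrm{ad}_{\mathfrak{g}}=\mathrm{ad}_{\mathfrak{k}}$, equivalently $\mathfrak{g}\supseteq\mathfrak{k}^{\prime}$. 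I would argue by contraposition, using the adjoint submodule $\mathfrak{k}^{\prime}\subseteq M$ to manufacture an endomorphism in the larger centralizer but not the smaller one. Let $\mathfrak{h}\subseteq\mathfrak{k}^{\prime}$ be the projection of $\mathfrak{g}$ onto the semisimple part, so that $\mathrm{ad}_{\mathfrak{g}}$ acts on $M$ as the adjoint action of $\mathfrak{h}$ on $\mathfrak{k}^{\prime}$ and trivially on the center, and $\mathfrak{g}\supseteq\mathfrak{k}^{\prime}$ holds iff $\mathfrak{h}=\mathfrak{k}^{\prime}$. If $\mathfrak{h}$ is \emph{not} an ideal of $\mathfrak{k}^{\prime}$, the orthogonal projection of $M$ onto $\mathfrak{h}$ (zero on $\mathfrak{h}^{\perp}$ and on the center) commutes with $\mathrm{ad}_{\mathfrak{g}}$ but not with $\mathrm{ad}_{\mathfrak{k}}$; if $\mathfrak{h}$ is a \emph{proper} ideal, it omits some simple summand $\mathfrak{s}_{j}$ of $\mathfrak{k}^{\prime}$, on which $\mathrm{ad}_{\mathfrak{g}}$ acts trivially while $\mathrm{ad}_{\mathfrak{k}}$ acts irreducibly, so $\mathrm{End}_{\mathrm{ad}_{\mathfrak{g}}}(M)$ contains all of $\mathfrak{gl}(\mathfrak{s}_{j})$ and is strictly larger. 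Either way the centralizers differ, a contradiction. I expect this representation-theoretic step to be the main obstacle, the delicate point being to exhibit a genuinely new endomorphism in every case with $\mathfrak{h}\subsetneq\mathfrak{k}^{\prime}$.

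Once $\mathfrak{g}\supseteq\mathfrak{k}^{\prime}$ is established, the resulting decomposition $\mathfrak{g}=\mathfrak{k}^{\prime}\oplus(\mathfrak{g}\cap C_{\mathfrak{k}}(\mathfrak{k}))$ shows that $\mathfrak{g}=\mathfrak{k}$ iff $C_{\mathfrak{k}}(\mathfrak{k})\subseteq\mathfrak{g}$, and this is exactly what the dimension condition delivers. For each $x\in\mathcal{X}\subseteq\mathfrak{g}$ the vector $P_{\mathfrak{k}}(x)$ is the central component of $x$, and since the complementary component lies in $\mathfrak{k}^{\prime}\subseteq\mathfrak{g}$ we get $P_{\mathfrak{k}}(x)\in\mathfrak{g}\cap C_{\mathfrak{k}}(\mathfrak{k})$, whence $\mathrm{Span}(P_{\mathfrak{k}}(\mathcal{X}))\subseteq\mathfrak{g}\cap C_{\mathfrak{k}}(\mathfrak{k})\subseteq C_{\mathfrak{k}}(\mathfrak{k})$. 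The hypothesis $\dim\mathrm{Span}(P_{\mathfrak{k}}(\mathcal{X}))=\dim C_{\mathfrak{k}}(\mathfrak{k})$ then forces all inclusions to be equalities, so $C_{\mathfrak{k}}(\mathfrak{k})\subseteq\mathfrak{g}$ and therefore $\mathfrak{g}=\mathfrak{k}^{\prime}\oplus C_{\mathfrak{k}}(\mathfrak{k})=\mathfrak{k}$, completing the argument.
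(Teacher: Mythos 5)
The paper does not actually prove \cref{universalityproblemalgebra}: it is imported from the cited references [Zoltan1, Zoltan2] and stated without argument, so there is no in-paper proof to measure you against. Judged on its own, your proof is correct and self-contained, and it uses exactly the toolkit the paper sets up elsewhere (the reformulation $C_{\mathfrak{gl}(\mathfrak{k})}(\mathrm{ad}_{\mathcal{X}})=C_{\mathfrak{gl}(\mathfrak{k})}(\mathrm{ad}_{\mathfrak{g}})$ via \cref{centralizeralgebragenerators} and \cref{homomorphismgenerators} is precisely the manipulation the authors perform in the proof of \cref{subgroupuniversalityproblem}, and the forward direction via \cref{projectorgenerators} and \cref{subgenerators} is the intended use of those lemmas). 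The substance of your converse --- splitting $\mathfrak{k}=\mathfrak{k}^{\prime}\oplus C_{\mathfrak{k}}(\mathfrak{k})$, projecting $\mathfrak{g}$ to $\mathfrak{h}\subseteq\mathfrak{k}^{\prime}$, and producing an extra commuting endomorphism either from the orthogonal projection onto a non-ideal $\mathfrak{h}$ or from $\mathfrak{gl}(\mathfrak{s}_{j})$ on an omitted simple summand via Schur --- is sound; the Killing form is definite on $\mathfrak{k}^{\prime}$, so both invariant-complement arguments go through, and a real-irreducible commutant is a division algebra and hence cannot contain all of $\mathfrak{gl}(\mathfrak{s}_{j})$. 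The one step you assert without justification is the equivalence ``$\mathfrak{h}=\mathfrak{k}^{\prime}$ iff $\mathfrak{g}\supseteq\mathfrak{k}^{\prime}$'': the nontrivial direction needs the perfectness of the semisimple part, via $\mathfrak{g}\supseteq[\mathfrak{g},\mathfrak{g}]=[\mathfrak{h},\mathfrak{h}]=[\mathfrak{k}^{\prime},\mathfrak{k}^{\prime}]=\mathfrak{k}^{\prime}$ (central components drop out of all brackets); add that line, since without perfectness the claim is false. Your closing step, squeezing $\mathrm{Span}(P_{\mathfrak{k}}(\mathcal{X}))\subseteq\mathfrak{g}\cap C_{\mathfrak{k}}(\mathfrak{k})\subseteq C_{\mathfrak{k}}(\mathfrak{k})$ with the dimension hypothesis, is exactly right and also handles the degenerate abelian case where the centralizer condition is vacuous.
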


\noindent For the membership problem, that is a natural generalization of the universality problem, we are given two finite sets of hamiltonians ${\mathcal{X}}_{1}$ and $\mathcal{Y}$, and we ask if (and how) we can decide whether $\mathcal{Y} \subset \left\langle {\mathcal{X}}_{1} \right\rangle$ without generating $\left\langle {\mathcal{X}}_{1} \right\rangle$.
Since $\left\langle {\mathcal{X}}_{1} \cup \mathcal{Y} \right\rangle$ is the minimal subalgebra which contains ${\mathcal{X}}_{1} \cup \mathcal{Y}$ and, at the same time, a subalgebra which contains ${\mathcal{X}}_{1}$, it follows that $\left\langle {\mathcal{X}}_{1} \right\rangle \subseteq \left\langle {\mathcal{X}}_{1} \cup \mathcal{Y} \right\rangle$.
Therefore we reformulate the problem and ask if we can decide whether $\left\langle {\mathcal{X}}_{1} \right\rangle = \left\langle {\mathcal{X}}_{1} \cup \mathcal{Y} \right\rangle$ without generating $\left\langle {\mathcal{X}}_{1} \right\rangle$ and $\left\langle {\mathcal{X}}_{1} \cup \mathcal{Y} \right\rangle$.
We generalize the problem solved in \cite{Zoltan1,Zoltan2} to any \emph{compact} Lie algebra $\mathfrak{k}$ and answer our question in the affirmative by the following theorem.
\begin{theorem}\label{membershipproblemalgebra}
Let $\mathfrak{k}$ be a \emph{compact} Lie algebra, ${\mathcal{X}}_{1}$ and $\mathcal{Y}$ be subsets of $\mathfrak{k}$ and let ${\mathcal{X}}_{2} \coloneqq {\mathcal{X}}_{1} \cup \mathcal{Y}$. Let $\mathfrak{{g}_{1}} \coloneqq \left\langle {\mathcal{X}}_{1} \right\rangle$ and $\mathfrak{{g}_{2}} \coloneqq \left\langle {\mathcal{X}}_{2} \right\rangle$.
\begin{equation*}
\mathfrak{{g}_{1}} = \mathfrak{{g}_{2}}
\iff
{C}_{\mathfrak{gl \left( k \right)}} \left( {\mathrm{ad}}_{{\mathcal{X}}_{1}} \right) = {C}_{\mathfrak{gl \left( k \right)}} \left( {\mathrm{ad}}_{{\mathcal{X}}_{2}} \right) \text{ and } \dim \left( \text{Span} \left( {P}_{{\mathcal{X}}_{1}} \left( {\mathcal{X}}_{1} \right) \right) \right) = \dim \left( \text{Span} \left( {P}_{{\mathcal{X}}_{1}} \left( {\mathcal{X}}_{2} \right) \right) \right).
\end{equation*}
\end{theorem}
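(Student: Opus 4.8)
The plan is to exploit the canonical orthogonal decomposition of a compact Lie algebra, $\mathfrak{g} = C_{\mathfrak{g}}(\mathfrak{g}) \oplus \mathfrak{g}'$, into its centre and its semisimple derived subalgebra, and to argue that the first condition governs the derived parts while the second governs the centres. Since $\langle \mathcal{X}_1 \rangle \subseteq \langle \mathcal{X}_2 \rangle$, i.e. $\mathfrak{g}_1 \subseteq \mathfrak{g}_2$, it suffices to establish $\mathfrak{g}_1' = \mathfrak{g}_2'$ and $\dim C_{\mathfrak{g}_1}(\mathfrak{g}_1) = \dim C_{\mathfrak{g}_2}(\mathfrak{g}_2)$, as together these force $\dim \mathfrak{g}_1 = \dim \mathfrak{g}_2$ and hence $\mathfrak{g}_1 = \mathfrak{g}_2$. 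The forward implication is routine: if $\mathfrak{g}_1 = \mathfrak{g}_2$ then $\langle \mathcal{X}_1 \rangle = \langle \mathcal{X}_2 \rangle$, so by \cref{homomorphismgenerators} the homomorphism $\mathrm{ad}$ sends $\mathcal{X}_1$ and $\mathcal{X}_2$ to generators of the same subalgebra of $\mathfrak{gl}(\mathfrak{k})$, giving equal centralizers, while \cref{projectorgenerators} gives $P_{\mathcal{X}_1} = P_{\mathcal{X}_2}$, so both spans in the second condition coincide with $C_{\mathfrak{g}_1}(\mathfrak{g}_1)$ by \cref{subgenerators}. I therefore concentrate on the converse.

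I would first reduce the centralizer condition to a statement about associative algebras. Because $\mathrm{ad}$ is a Lie homomorphism and centralizing a set is the same as centralizing the Lie algebra it generates (if $A$ commutes with $\mathrm{ad}_x$ and $\mathrm{ad}_y$ then, by Jacobi, it commutes with $[\mathrm{ad}_x,\mathrm{ad}_y]=\mathrm{ad}_{[x,y]}$), \cref{homomorphismgenerators} yields $C_{\mathfrak{gl}(\mathfrak{k})}(\mathrm{ad}_{\mathcal{X}_i}) = C_{\mathfrak{gl}(\mathfrak{k})}(\mathrm{ad}_{\mathfrak{g}_i})$, so the first condition reads $C_{\mathfrak{gl}(\mathfrak{k})}(\mathrm{ad}_{\mathfrak{g}_1}) = C_{\mathfrak{gl}(\mathfrak{k})}(\mathrm{ad}_{\mathfrak{g}_2})$. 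Since the Killing form is invariant each $\mathrm{ad}_x$ is skew-symmetric, so the associative subalgebras $\mathcal{A}_i \subseteq \mathrm{End}(\mathfrak{k})$ generated by $\mathrm{ad}_{\mathfrak{g}_i}$ are closed under the adjoint; the von Neumann double commutant theorem then upgrades the equality of the centralizers (which are exactly the commutants of $\mathcal{A}_i$) to $\mathcal{A}_1 = \mathcal{A}_2$. To pass back to the Lie data I would use that $\mathrm{ad}$ preserves derived subalgebras, $(\mathrm{ad}_{\mathfrak{g}_i})' = \mathrm{ad}_{\mathfrak{g}_i'}$, and that $\mathrm{ad}$ is injective on $\mathfrak{g}_i'$ (a semisimple subalgebra meets the centre $C_{\mathfrak{k}}(\mathfrak{k})$ trivially), so that $\mathfrak{g}_1' = \mathfrak{g}_2'$ is equivalent to $(\mathrm{ad}_{\mathfrak{g}_1})' = (\mathrm{ad}_{\mathfrak{g}_2})'$.

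The main obstacle is the remaining step: showing that $\mathcal{A}_1 = \mathcal{A}_2$ forces the derived images to agree. This fails for a general representation — for instance $\mathfrak{so}(3) \subset \mathfrak{su}(3)$ in the defining representation generate the same matrix algebra and have equal commutants yet different derived subalgebras — and it is precisely here that the special features of the \emph{adjoint} representation enter, exactly as in \cref{universalityproblemalgebra}. The point I would press is that $\mathfrak{g}_i$ sits inside its own module $\mathfrak{k}$ as an invariant subspace on which it acts by its own adjoint action; this makes the adjoint representation detect the semisimple part, so that two subalgebras of a fixed compact $\mathfrak{k}$ generating the same $\mathcal{A}$ under $\mathrm{ad}$ must have equal derived images. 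I expect this to be the technical heart of the argument and to reuse the structure theory underlying the universality theorem.

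It then remains to handle the centres, assuming $\mathfrak{g}_1' = \mathfrak{g}_2' =: \mathfrak{s}$. First I would note $C_{\mathfrak{g}_1}(\mathfrak{g}_1) \subseteq C_{\mathfrak{g}_2}(\mathfrak{g}_2)$: an element of $\mathfrak{g}_1$ commuting with $\mathfrak{g}_1 \supseteq \mathfrak{s}$ has trivial $\mathfrak{s}$-component in $\mathfrak{g}_2 = C_{\mathfrak{g}_2}(\mathfrak{g}_2) \oplus \mathfrak{s}$, because $\mathfrak{s}$ is centreless, hence lies in $C_{\mathfrak{g}_2}(\mathfrak{g}_2)$; so $\mathfrak{g}_1 = \mathfrak{g}_2$ is equivalent to equality, equivalently equality of dimensions, of the two centres. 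To read this off the second condition I would use \cref{subgenerators} twice, $\mathrm{Span}(P_{\mathcal{X}_i}(\mathcal{X}_i)) = C_{\mathfrak{g}_i}(\mathfrak{g}_i)$, together with the observation that $\mathfrak{s} \subseteq \ker P_{\mathcal{X}_1}$: indeed $\mathfrak{s} = \mathfrak{g}_1' \subseteq \mathfrak{k}'$, while invariance of the Killing form gives $\langle [a,b], c \rangle = \langle a, [b,c] \rangle = 0$ for $a,b \in \mathfrak{g}_1$ and $c \in C_{\mathfrak{k}}(\mathfrak{g}_1) = \mathrm{Im}(P_{\mathcal{X}_1})$ (using \cref{centralizeralgebragenerators}), so $\mathfrak{s} \perp \mathrm{Im}(P_{\mathcal{X}_1})$. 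Consequently, for $x \in \mathcal{X}_2 \subseteq C_{\mathfrak{g}_2}(\mathfrak{g}_2) \oplus \mathfrak{s}$ one finds $P_{\mathcal{X}_1}(x) = P_{\mathcal{X}_2}(x)$, whence $\mathrm{Span}(P_{\mathcal{X}_1}(\mathcal{X}_2)) = \mathrm{Span}(P_{\mathcal{X}_2}(\mathcal{X}_2)) = C_{\mathfrak{g}_2}(\mathfrak{g}_2)$, and the second condition becomes exactly $\dim C_{\mathfrak{g}_1}(\mathfrak{g}_1) = \dim C_{\mathfrak{g}_2}(\mathfrak{g}_2)$. Combining the two parts yields $\mathfrak{g}_1 = \mathfrak{g}_2$.
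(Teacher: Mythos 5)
Your overall architecture is sound and the flanking steps do check out, but there is a genuine gap exactly where you flag it: you never prove that the centralizer condition forces $\mathfrak{g}_1'=\mathfrak{g}_2'$. The sentence ``two subalgebras of a fixed compact $\mathfrak{k}$ generating the same $\mathcal{A}$ under $\mathrm{ad}$ must have equal derived images'' is precisely the statement that needs an argument, and your own $\mathfrak{so}(3)\subset\mathfrak{su}(3)$ example shows it is not a formal consequence of $\mathcal{A}_1=\mathcal{A}_2$ for a general faithful representation. As written, the central implication of the converse direction is asserted rather than proved, so the proposal is incomplete. (For reference, the paper states this theorem as a generalization of a result from the literature and does not print a proof, so the attempt has to be judged on its own terms.)

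The gap is closable with tools already in the paper. First, $\mathfrak{g}_1$ is an invariant subspace of $\mathrm{ad}_{\mathfrak{g}_1}$, so with respect to an $\mathrm{ad}$-invariant positive-definite form (\cref{compactgroup}) the orthogonal projector onto $\mathfrak{g}_1$ lies in $C_{\mathfrak{gl}(\mathfrak{k})}(\mathrm{ad}_{\mathfrak{g}_1})=C_{\mathfrak{gl}(\mathfrak{k})}(\mathrm{ad}_{\mathfrak{g}_2})$; hence $[\mathfrak{g}_2,\mathfrak{g}_1]\subseteq\mathfrak{g}_1$, i.e.\ $\mathfrak{g}_1$ is an ideal of $\mathfrak{g}_2$. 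This is the Lie-algebra analogue of \cref{invariantsubspace} and makes your double-commutant detour unnecessary. Consequently $\mathfrak{g}_1'$ is a semisimple ideal of $\mathfrak{g}_2'$, so $\mathfrak{g}_2'=\mathfrak{g}_1'\oplus\mathfrak{t}$ with $\mathfrak{t}$ an ideal satisfying $[\mathfrak{t},\mathfrak{g}_1']=\{0\}$; moreover $\mathfrak{t}$ commutes with $C_{\mathfrak{g}_1}(\mathfrak{g}_1)$, because the latter is an abelian ideal of the reductive algebra $\mathfrak{g}_2$ and hence lies in $C_{\mathfrak{g}_2}(\mathfrak{g}_2)$. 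Therefore $\mathrm{ad}_{\mathfrak{t}}\subseteq C_{\mathfrak{gl}(\mathfrak{k})}(\mathrm{ad}_{\mathfrak{g}_1})=C_{\mathfrak{gl}(\mathfrak{k})}(\mathrm{ad}_{\mathfrak{g}_2})$, so $\mathrm{ad}_{\mathfrak{t}}$ commutes with itself, giving $[\mathfrak{t},\mathfrak{t}]\subseteq C_{\mathfrak{k}}(\mathfrak{k})$; since $\mathfrak{t}=[\mathfrak{t},\mathfrak{t}]$ is semisimple and meets the centre trivially, $\mathfrak{t}=\{0\}$ and $\mathfrak{g}_1'=\mathfrak{g}_2'$. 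With that lemma in hand, your treatment of the forward direction and of the centres --- the inclusion $C_{\mathfrak{g}_1}(\mathfrak{g}_1)\subseteq C_{\mathfrak{g}_2}(\mathfrak{g}_2)$, the identity $P_{\mathcal{X}_1}(x)=P_{\mathcal{X}_2}(x)$ for $x\in\mathcal{X}_2$, and the two applications of \cref{subgenerators} --- is correct and completes the proof.
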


\begin{remark}\label{membershipremark}
\cref{membershipproblemalgebra} is still valid if we perform the substitution ${P}_{{\mathcal{X}}_{1}} \to {P}_{{\mathcal{X}}_{2}}$ in the condition
\begin{equation*}
\dim \left( \text{Span} \left( {P}_{{\mathcal{X}}_{1}} \left( {\mathcal{X}}_{1} \right) \right) \right) = \dim \left( \text{Span} \left( {P}_{{\mathcal{X}}_{1}} \left( {\mathcal{X}}_{2} \right) \right) \right).
\end{equation*}
\end{remark}
\section{Universality and membership problems for quantum gates}
In this section we discuss (subgroup) universality and membership problems in the context of quantum gates rather than Hamiltonians. Having a set of Hamiltonians, $\mathcal{X}$, we can consider associated quantum gates, i.e. the set of group elements $S=\{\exp(X)|\,X\in \mathcal{X}\}$. On the other hand we can also consider one parameter subgroups $\tilde{S}=\{\exp(tX)|\,X\in \mathcal{X},\,t\in\mathbb{R}\}$. The scenario with $S$ corresponds to situation when we can apply our Hamiltonians only for fixed amount of time and the scenario with $\tilde{S}$ places no restriction on the time of Hamiltonians application. It is natural to ask under which conditions the groups generated by $\mathcal{S}$ and $\tilde{S}$ coincide. Furthermore we will also investigate what happens when we extend the set of Hamiltonians $\mathcal{X}$ by an additional element $Y$ which is not present in  $\langle \mathcal{X}\rangle$. The condition $Y\notin\langle \mathcal{X}\rangle$ can be checked by \cref{membershipproblemalgebra}. This will lead us to the solution of a certain instance of the membership problem for quantum gates. 

Let us start with fixing notation and recalling the basic facts.
\begin{definition}\label{words}
Let $K$ be a group and $S$ be a subset of $K$. We define the set of words of length $n$ constructed from elements of $S$ by ${S}^{n}
\coloneqq\left\{ {g}_{1} \cdots {g}_{n} \mid {g}_{k} \in S \right\}$. By $\left\langle S \right\rangle$ we denote a (possibly infinite) set of words of all lengths, $\left\langle S \right\rangle \coloneqq\bigcup_{n = 0}^{\infty} {S}^{n}$
\end{definition}
\noindent Deciding what is $\left\langle S \right\rangle$ requires techniques that do not relay on finding $S^n$ for every $n$ as this can be an infinite process. As was shown in \cite{Adam} the crucial role is played by the so-called adjoint representation $\mathrm{Ad}:K\rightarrow GL(\mathfrak{k})$ which for matrix Lie groups is defined by the conjugation $\mathrm{Ad}_g(X):=gXg^{-1}$ (see appendix and \cite{Adam} for more details regarding relevant properties of the adjoint representation ). The following two lemmas are fundamental properties that are useful from our purposes.
\begin{lemma}\label{centralizergroupgenerators}
Let $K$ be a Lie group and $\mathfrak{k}$ be the Lie algebra of $K$. Let $S$ be a subset of $K$. Then
\begin{equation*}
{C}_{\mathfrak{gl \left( k \right)}} \left( {\mathrm{Ad}}_{S} \right)
=
{C}_{\mathfrak{gl \left( k \right)}} \left( {\mathrm{Ad}}_{\left\langle S \right\rangle} \right).
\end{equation*}
\end{lemma}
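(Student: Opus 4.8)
The plan is to exploit that $\mathrm{Ad}$ is a homomorphism of monoids, so that conjugation by a word factorises as a composition of the operators attached to its letters; I would first record the trivial inclusion and then upgrade the commutation relation from single letters to arbitrary words by a short induction. This is the group-level analogue of \cref{centralizeralgebragenerators} and should admit a similarly brief argument.

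First I would observe that $S = S^{1} \subseteq \left\langle S \right\rangle$ directly from \cref{words}, hence $\mathrm{Ad}_{S} \subseteq \mathrm{Ad}_{\left\langle S \right\rangle}$ as subsets of $GL(\mathfrak{k})$. Since passing to the centralizer reverses inclusions, this immediately yields
\begin{equation*}
C_{\mathfrak{gl}(\mathfrak{k})}\left( \mathrm{Ad}_{\left\langle S \right\rangle} \right) \subseteq C_{\mathfrak{gl}(\mathfrak{k})}\left( \mathrm{Ad}_{S} \right),
\end{equation*}
which is one of the two inclusions and requires nothing beyond the definition of $\left\langle S \right\rangle$.

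For the reverse inclusion I would take any $M \in C_{\mathfrak{gl}(\mathfrak{k})}(\mathrm{Ad}_{S})$, so that $[M, \mathrm{Ad}_{g}] = 0$ for every $g \in S$, and show that $[M, \mathrm{Ad}_{w}] = 0$ for every word $w = g_{1} \cdots g_{n} \in \left\langle S \right\rangle$. The essential input is the homomorphism property $\mathrm{Ad}_{g_{1} \cdots g_{n}} = \mathrm{Ad}_{g_{1}} \cdots \mathrm{Ad}_{g_{n}}$, with the convention $\mathrm{Ad}_{e} = \mathbb{1}$ for the empty word ($n = 0$), which is recalled among the properties of the adjoint representation in the appendix. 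Granting this, the claim follows by induction on the word length $n$ using the Leibniz identity $[M, AB] = [M, A]\,B + A\,[M, B]$: each factor $\mathrm{Ad}_{g_{k}}$ commutes with $M$ by hypothesis, so by induction their composition does as well, giving $[M, \mathrm{Ad}_{w}] = 0$. Hence $M \in C_{\mathfrak{gl}(\mathfrak{k})}(\mathrm{Ad}_{\left\langle S \right\rangle})$.

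Combining the two inclusions gives the asserted equality. The only substantive ingredient is the homomorphism property of $\mathrm{Ad}$, and once that is in hand the remainder reduces to a one-line induction; I therefore do not anticipate any genuine obstacle, and I would present the two inclusions as above, keeping the argument parallel to the Lie-algebra statement in \cref{centralizeralgebragenerators}.
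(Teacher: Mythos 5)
Your proposal is correct and follows essentially the same route as the paper: the trivial inclusion from $S \subseteq \left\langle S \right\rangle$, then the reverse inclusion via the homomorphism property $\mathrm{Ad}_{gh} = \mathrm{Ad}_{g}\mathrm{Ad}_{h}$ and the Leibniz identity $\left[ A, \mathrm{Ad}_{g}\mathrm{Ad}_{h} \right] = \mathrm{Ad}_{g}\left[ A, \mathrm{Ad}_{h} \right] + \left[ A, \mathrm{Ad}_{g} \right]\mathrm{Ad}_{h}$. Your explicit induction on word length merely spells out what the paper leaves implicit after treating length two.
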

\begin{proof}
 Obviously $S\subseteq\left\langle S \right\rangle$ and hence ${\mathrm{Ad}}_{S}\subseteq {\mathrm{Ad}}_{\left\langle S \right\rangle}$ which implies ${C}_{\mathfrak{gl \left( k \right)}} \left( {\mathrm{Ad}}_{\left\langle S \right\rangle} \right)\subseteq{C}_{\mathfrak{gl \left( k \right)}} \left( {\mathrm{Ad}}_{S} \right)$.
%
%
%
On the other hand $\forall \ A \in {C}_{\mathfrak{gl \left( k \right)}} \left( {\mathrm{Ad}}_{S} \right)$ we have 
$\forall \ g,h \in S$ $\left[ A,{\mathrm{Ad}}_{gh} \right] = \left[ A,{\mathrm{Ad}}_{g} {\mathrm{Ad}}_{h} \right] = {\mathrm{Ad}}_{g} \left[ A,{\mathrm{Ad}}_{h} \right] + \left[ A,{\mathrm{Ad}}_{g} \right] {\mathrm{Ad}}_{h} = 0$. Thus
$A \in {C}_{\mathfrak{gl \left( k \right)}} \left( {\mathrm{Ad}}_{\left\langle S \right\rangle} \right)$ and ${C}_{\mathfrak{gl \left( k \right)}} \left( {\mathrm{Ad}}_{S} \right) \subseteq {C}_{\mathfrak{gl \left( k \right)}} \left( {\mathrm{Ad}}_{\left\langle S \right\rangle} \right)$

\end{proof}

\begin{lemma}\label{centralizerclosure}
Let $K$ be a Lie group, $\mathfrak{k}$ be the Lie algebra of $K$ and $U$ be a subset of $K$. Then
\begin{equation*}
{C}_{\mathfrak{gl \left( k \right)}} \left( {\mathrm{Ad}}_{U} \right)
=
{C}_{\mathfrak{gl \left( k \right)}} \left( {\mathrm{Ad}}_{\overline{U}} \right).
\end{equation*}
\end{lemma}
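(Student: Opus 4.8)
The plan is to prove the nontrivial inclusion by a continuity argument, since one direction is immediate. First I would note that $U \subseteq \overline{U}$ gives $\mathrm{Ad}_{U} \subseteq \mathrm{Ad}_{\overline{U}}$, hence $C_{\mathfrak{gl}(\mathfrak{k})}(\mathrm{Ad}_{\overline{U}}) \subseteq C_{\mathfrak{gl}(\mathfrak{k})}(\mathrm{Ad}_{U})$; this requires no work. The real content is the reverse inclusion $C_{\mathfrak{gl}(\mathfrak{k})}(\mathrm{Ad}_{U}) \subseteq C_{\mathfrak{gl}(\mathfrak{k})}(\mathrm{Ad}_{\overline{U}})$.

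To establish this, I would fix an arbitrary $A \in C_{\mathfrak{gl}(\mathfrak{k})}(\mathrm{Ad}_{U})$, so that $[A, \mathrm{Ad}_{g}] = 0$ for every $g \in U$, and show that the same commutator vanishes for every $g \in \overline{U}$. The key observation is that the map $K \to \mathfrak{gl}(\mathfrak{k})$ sending $g \mapsto [A, \mathrm{Ad}_{g}] = A\,\mathrm{Ad}_{g} - \mathrm{Ad}_{g}\,A$ is continuous, because $\mathrm{Ad} : K \to GL(\mathfrak{k})$ is a continuous (indeed smooth) representation and matrix multiplication is continuous. Consequently the set $Z_{A} \coloneqq \{ g \in K \mid [A, \mathrm{Ad}_{g}] = 0 \}$ is the preimage of the closed singleton $\{0\}$ under this continuous map, and is therefore closed in $K$.

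The argument then concludes quickly: by assumption $U \subseteq Z_{A}$, and since $Z_{A}$ is closed it contains the closure of $U$, i.e. $\overline{U} \subseteq Z_{A}$. This means $[A, \mathrm{Ad}_{g}] = 0$ for all $g \in \overline{U}$, so $A \in C_{\mathfrak{gl}(\mathfrak{k})}(\mathrm{Ad}_{\overline{U}})$. Since $A$ was arbitrary, the reverse inclusion holds, and combined with the trivial inclusion we obtain the claimed equality.

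I do not expect a genuine obstacle here; the statement is essentially a topological closure property of centralizers. The only point requiring care is the justification that $\mathrm{Ad}$ is continuous so that $Z_{A}$ is closed, but this is a standard fact about the adjoint representation of a Lie group (continuity already suffices, smoothness is not needed). One could alternatively phrase the same argument sequentially: taking $g_{n} \in U$ with $g_{n} \to g \in \overline{U}$, continuity of $\mathrm{Ad}$ gives $\mathrm{Ad}_{g_{n}} \to \mathrm{Ad}_{g}$, whence $[A, \mathrm{Ad}_{g}] = \lim_{n} [A, \mathrm{Ad}_{g_{n}}] = 0$; this avoids any explicit reference to preimages of closed sets and may read more transparently.
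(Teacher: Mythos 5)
Your proof is correct, and it is a cleaner route than the one the paper takes. Both arguments rest on the same two facts --- continuity of $\mathrm{Ad}$ and the principle that a closed condition satisfied on $U$ is satisfied on $\overline{U}$ --- but the paper applies the continuity argument not to the linear commutator $[A,\mathrm{Ad}_g]$ itself, but to the group-commutator-like expression $e^{tA}\,\mathrm{Ad}_g\,e^{-tA}\,\mathrm{Ad}_{g^{-1}}$. To do so it must first pass from $[A,\mathrm{Ad}_g]=0$ to $[e^{tA},\mathrm{Ad}_g]=0$ for all $t\in\mathbb{R}$ via \cref{parameter}, run the closure argument, and then invoke \cref{parameter} again to descend back to $[A,\mathrm{Ad}_g]=0$ on $\overline{U}$. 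Your observation that $g\mapsto A\,\mathrm{Ad}_g-\mathrm{Ad}_g A$ is already a continuous map into $\mathfrak{gl}(\mathfrak{k})$, so that its zero set $Z_A$ is closed and contains $\overline{U}$ as soon as it contains $U$, makes the entire exponential detour unnecessary; nothing is lost by skipping it, and the sequential phrasing you offer at the end is an equally valid packaging of the same point. The only thing the paper's version ``buys'' is consistency of style with its other arguments (which repeatedly shuttle between an operator and its one-parameter group), but as a proof of this lemma your version is preferable.
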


\begin{proof}
Obviously $U\subseteq\overline{U}$ hence ${\mathrm{Ad}}_{U}\subseteq{\mathrm{Ad}}_{\overline{U}}$ which implies ${C}_{\mathfrak{gl \left( k \right)}} \left( {\mathrm{Ad}}_{\overline{U}} \right)\subseteq{C}_{\mathfrak{gl \left( k \right)}} \left( {\mathrm{Ad}}_{U} \right)$. On the other hand $\forall \ A \in {C}_{\mathfrak{gl \left( k \right)}} \left( {\mathrm{Ad}}_{U} \right)$ we have $\left[ A,{\mathrm{Ad}}_{g} \right] = 0$, $\forall \ g \in U$. Next, by \cref{parameter} $\forall \ g \in U$ $\forall \ t \in \mathbb{R}$ $\left[ {e}^{t A},{\mathrm{Ad}}_{g} \right] = 0$ which means
$\forall \ g \in U$ $\forall \ t \in \mathbb{R}$ ${e}^{t A} {\mathrm{Ad}}_{g} {e}^{- t A} {\mathrm{Ad}}_{{g}^{-1}} = \mathbb{1}$. Next,since the adjoint representation and the group commutator are continuous, $g \mapsto {e}^{t A} {\mathrm{Ad}}_{g} {e}^{- t A} {\mathrm{Ad}}_{{g}^{-1}}$ is continuous, and since a continuous function constant on a subset is constant on its closure, we get 
$\forall \ g \in \overline{U}$ $\forall \ t \in \mathbb{R}$ ${e}^{t A} {\mathrm{Ad}}_{g} {e}^{- t A} {\mathrm{Ad}}_{{g}^{-1}} = \mathbb{1}$. This in turn means
$\forall \ g \in \overline{U}$ $\forall \ t \in \mathbb{R}$ $\left[ {e}^{t A},{\mathrm{Ad}}_{g} \right] = 0$. Using \cref{parameter} we obtain $\forall \ g \in \overline{U}$ $\left[ A,{\mathrm{Ad}}_{g} \right] = 0$. Thus $A \in {C}_{\mathfrak{gl \left( k \right)}} \left( {\mathrm{Ad}}_{\overline{U}} \right)$ and ${C}_{\mathfrak{gl \left( k \right)}} \left( {\mathrm{Ad}}_{U} \right) \subseteq {C}_{\mathfrak{gl \left( k \right)}} \left( {\mathrm{Ad}}_{\overline{U}} \right)$.

\end{proof}

\noindent Let us remark here that our definition of $\left\langle S \right\rangle$ is non-standard, since usually in the literature $\left\langle S \right\rangle$ denotes the subgroup of $K$ generated by $S$, namely the minimal subgroup of $K$ which contains $S$. Although in general $\left\langle S \right\rangle$ is not a subgroup of $K$, the following theorem shows that it is `close' to be such in the case that we are dealing with.
\begin{theorem}\label{groupgenerated}
Let $K$ be a \emph{compact} Lie group and $S$ be a subset of $K$. Then $\overline{\left\langle S \right\rangle}$ is a compact Lie subgroup of K.

\end{theorem}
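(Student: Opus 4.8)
The plan is to show that $\overline{\langle S \rangle}$ is a \emph{closed subgroup} of $K$ and then invoke Cartan's closed subgroup theorem, which guarantees that every closed subgroup of a Lie group is an embedded Lie subgroup; since $K$ is compact and $\overline{\langle S \rangle}$ is a closed subset of $K$, it is automatically compact. So the whole content of the statement is the purely group-topological fact that the closure of the set of words is closed under multiplication and, crucially, under inversion.

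First I would record that $\langle S \rangle$ is a submonoid of $K$. It contains the identity (the empty word, i.e. $S^{0} = \{ e \}$), and it is closed under multiplication, since the product of a word of length $m$ and a word of length $n$ is a word of length $m+n$, that is $S^{m} S^{n} \subseteq S^{m+n} \subseteq \langle S \rangle$. Next I would pass to the closure and argue that $\overline{\langle S \rangle}$ is again a submonoid. This is a general fact about topological groups: writing $\mu : K \times K \to K$ for the continuous multiplication and $M := \langle S \rangle$, one has $\mu(\overline{M} \times \overline{M}) = \mu(\overline{M \times M}) \subseteq \overline{\mu(M \times M)} \subseteq \overline{M}$, using continuity of $\mu$ together with $\overline{M} \times \overline{M} = \overline{M \times M}$ in the product topology. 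Hence $\overline{\langle S \rangle}$ is a closed submonoid of $K$ containing $e$.

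The heart of the proof is to show that this closed submonoid is in fact closed under inversion, and this is exactly the step where compactness of $K$ enters. Fix $g \in \overline{\langle S \rangle}$; then all positive powers $g^{n}$ with $n \geq 1$ lie in the submonoid. By compactness the sequence $(g^{n})_{n \geq 1}$ has a convergent subsequence $g^{n_{k}} \to h$, and by passing to a further subsequence I may assume the gaps $m_{k} := n_{k+1} - n_{k}$ diverge to infinity. Then $g^{m_{k}} = g^{n_{k+1}} (g^{n_{k}})^{-1} \to h h^{-1} = e$ by continuity of the group operations, so the identity is recovered as a limit of positive powers of $g$. Since $m_{k} \geq 2$ for large $k$, the elements $g^{m_{k} - 1} = g^{-1} g^{m_{k}}$, each a positive power and hence a member of the submonoid, converge to $g^{-1} e = g^{-1}$; therefore $g^{-1} \in \overline{\langle S \rangle}$. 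This shows $\overline{\langle S \rangle}$ is a subgroup.

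The main obstacle is precisely this recurrence argument producing the inverse: it is the classical fact that a compact submonoid of a topological group is a subgroup, and the only delicate point is to choose the convergent subsequence with diverging gaps so that the relevant powers remain genuine words of positive length. Everything else is formal. To finish, since $\overline{\langle S \rangle}$ is a closed subgroup of the Lie group $K$, the closed subgroup theorem makes it an embedded Lie subgroup, and its closedness inside the compact group $K$ makes it compact, which completes the argument.
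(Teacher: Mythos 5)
Your proof is correct and follows the same route as the paper: establish that $\overline{\left\langle S \right\rangle}$ is a closed (hence compact) subgroup of $K$ and invoke Cartan's closed subgroup theorem. The only difference is that the paper delegates the key step --- that compactness of $K$ forces the closed submonoid $\overline{\left\langle S \right\rangle}$ to be closed under inversion --- to a citation, whereas you supply the standard recurrence argument ($g^{n_{k}}\to h$ with diverging gaps, so $g^{m_{k}}\to e$ and $g^{m_{k}-1}\to g^{-1}$) in full, and that argument is sound.
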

\begin{proof}
Since it is defined as a closure, $\overline{\left\langle S \right\rangle}$ is a closed subset of $K$. 
Since a closed subset of a compact set is also compact, $\overline{\left\langle S \right\rangle}$ is a compact subset of $K$. 
Since $K$ is compact, one can prove that $\overline{\left\langle S \right\rangle}$ is a closed subgroup of $K$ (see \cite{Adam}), and since a closed subgroup of a Lie group is a Lie subgroup (Cartan's theorem), $\overline{\left\langle S \right\rangle}$ is a Lie subgroup of $K$.
\end{proof}
Similarly to definition \cref{centralizeralgebra} we define a centralizer on the level of  a group.
\begin{definition}\label{centralizergroup}
Let $K$ be a group, $G$ be a subgroup of $K$ and $S$ be a subset of $K$. The \emph{centralizer of $S$ in $G$} is the following subgroup of $G$:
\begin{equation*}
{C}_{G} \left( S \right)
\coloneqq
\left\{ g \in G \mid \forall \ h \in S \ g h = h g \right\}.
\end{equation*}
\end{definition}
\noindent The universality and membership problems require also the notion of distance on a group. In this paper we will use a metric associated with the Frobenius norm of a matrix $A$, $\left\| A \right\| \coloneqq \sqrt{\mathrm{tr} \left( {A}^{\dagger} A \right)}$. Since the component of the identity is always generated by any neighborhood of the identity, we make a certain choice of this neighborhood (actually the neighborhood of the group centralizer):
\begin{definition}\label{balls}
Let $K$ be a subgroup of $U \left( d \right)$.
\begin{equation*}
\begin{split}
\quad & {\mathcal{B}}_{K} \left( c \right)
\coloneqq
\left\{ g \in K \mid \left\| g - c \right\| < \frac{1}{\sqrt{2}} \right\},\, c \in K 
\\
& {\mathcal{B}}_{K}
\coloneqq
\bigcup_{c \in {C}_{K} \left( K \right)} {\mathcal{B}}_{K} \left( c \right)
=
\left\{ g \in K \mid \min_{c \in {C}_{K} \left( K \right)} \left\| g - c \right\| < \frac{1}{\sqrt{2}} \right\}.
\end{split}
\end{equation*}
\end{definition}
In \cite{Adam} it was shown that gates from a set $S \subset SU \left( d \right)$ generate the group $K \coloneqq SU \left( d \right)$, i.e. $H \coloneqq \overline{\langle S\rangle}=SU(d)$ if and only if ${C}_{\mathfrak{gl \left( k \right)}} \left( {\mathrm{Ad}}_{S} \right) = \left\{ \lambda \mathbb{1} \mid \lambda \in \mathbb{R} \right\} \text{ and } H \cap {\mathcal{B}}_{K} \not\subseteq {C}_{K} \left( K \right)$. In fact the first condition, ${C}_{\mathfrak{gl \left( k \right)}} \left( {\mathrm{Ad}}_{S} \right) = \left\{ \lambda \mathbb{1} \mid \lambda \in \mathbb{R} \right\}$, is the necessary condition for the universality, i.e. if it is satisfied and $\langle S \rangle$ is infinite then $H=SU(d)$. The second condition, $H \cap {\mathcal{B}}_{K} \not\subseteq {C}_{K} \left( K \right)$ guarantees that $\langle S \rangle$ is infinite, provided that the necessary condition is satisfied. The assumption standing behind the proof of the second condition, $H \cap {\mathcal{B}}_{K} \not\subseteq {C}_{K} \left( K \right)$, that was used in the proof of this statement in \cite{Adam} was that the logarithm of a special unitary matrix is traceless skew - hermitian for all elements in $\mathcal{B}_{K}$. This turns out to be false, as we show in the appendix. Nevertheless, in the following we present an alternative proof that completely avoids the use of the logarithm but instead relies on an auxiliary space $\mathfrak{a}(H)$ that we define below. Let
\begin{equation}\label{xyg}
\begin{split}
& x \left( g \right)
\coloneqq
\frac{g - {g}^{- 1}}{2} - \mathrm{tr} \left( \frac{g - {g}^{- 1}}{2} \right) \frac{\mathbb{1}}{d}, \\
& y \left( g \right)
\coloneqq
\frac{g + {g}^{- 1}}{2 i} - \mathrm{tr} \left( \frac{g + {g}^{- 1}}{2 i} \right) \frac{\mathbb{1}}{d}.
\end{split}
\end{equation}
One easily checks that for $g \in SU \left( d \right)$ both $x \left( g \right)$, $y \left( g \right)$ belong to $\mathfrak{su} \left( d \right)$. Moreover, for any $g \in SU \left( d \right)$ we have the following decomposition
\begin{equation}
g
=
x \left( g \right) + i y \left( g \right) + \mathrm{tr} \left( g \right) \frac{\mathbb{1}}{d}.
\end{equation}

\begin{definition}\label{defofa}
Let $H$ be a subgroup of $K = SU \left( d \right)$. We define
\begin{equation*}
\mathfrak{a}(H) \coloneqq \mathrm{Span}_\mathbb{R}\left\{ x \left( g \right),y \left( g \right) \mid g \in H \cap {\mathcal{B}}_{K} \right\},
\end{equation*}
where $x \left( g \right)$ and $y \left( g \right)$ are given by \cref{xyg}.
\end{definition}
\noindent One can easily see that
\begin{equation}\label{inv1}
\begin{split}
& {\mathrm{Ad}}_{h} x \left( g \right)
=
x \left( h g {h}^{- 1} \right),\,g,h \in K
\\
& {\mathrm{Ad}}_{h} y \left( g \right)
=
y \left( h g {h}^{- 1} \right),\,g,h \in K.
\end{split}
\end{equation}
Combining (\ref{inv1}) with the invariance under conjugation of the Frobenius distance we conclude that $\mathfrak{a}(H)$ is an invariant subspace of $\mathrm{Ad}_H$. Moreover, we have the following lemma that was proved in \cite{Adam}
\begin{lemma}\label{abelian}
Let $K$ be a subgroup of $U \left( d \right)$ and $H$ be a subgroup of $K$.
\begin{equation*}
H \text{ is finite }
\implies
H \cap {\mathcal{B}}_{K} \text{ is abelian }.
\end{equation*}
\end{lemma}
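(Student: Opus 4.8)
The plan is to reduce the claim to a statement about group commutators and then exploit the finiteness of $H$. Suppose $g,h \in H \cap \mathcal{B}_{K}$; I must show $gh=hg$. By \cref{balls} there are central elements $c_{g},c_{h} \in C_{K}(K)$ with $\|g-c_{g}\|<1/\sqrt{2}$ and $\|h-c_{h}\|<1/\sqrt{2}$. Setting $g' := c_{g}^{-1}g$ and $h' := c_{h}^{-1}h$, unitary invariance of the Frobenius norm gives $\|g'-\mathbb{1}\|<1/\sqrt{2}$ and $\|h'-\mathbb{1}\|<1/\sqrt{2}$; moreover, since $c_{g},c_{h}$ are central they cancel in the group commutator, so $ghg^{-1}h^{-1}=g'h'g'^{-1}h'^{-1}$. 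Thus $g,h$ commute iff $g',h'$ do, and it suffices to prove $g'h'=h'g'$. Note also that although $g',h'$ need not lie in $H$, the element $g'h'g'^{-1}h'^{-1}=ghg^{-1}h^{-1}$ does, and conjugation by $g'$ coincides with conjugation by $g$ on all of $H$.

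First I would record a distance estimate for the commutator. Writing $A:=g'-\mathbb{1}$ and $B:=h'-\mathbb{1}$, the identity $g'h'g'^{-1}h'^{-1}-\mathbb{1}=(g'h'-h'g')(h'g')^{-1}$ together with the unitarity of $(h'g')^{-1}$ gives $\|g'h'g'^{-1}h'^{-1}-\mathbb{1}\|=\|g'h'-h'g'\|=\|AB-BA\|$. Bounding the Frobenius norm of the additive commutator by $\|AB-BA\| \le \|A\|_{\mathrm{op}}\|B\|+\|B\|_{\mathrm{op}}\|A\|$ and using $\|A\|_{\mathrm{op}}\le\|A\|<1/\sqrt{2}$ (and similarly for $B$), I obtain the key inequality $\|ghg^{-1}h^{-1}-\mathbb{1}\|<1$. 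A second ingredient I would extract from the ball condition is a spectral localization: from $\|g'-\mathbb{1}\|^{2}=2\bigl(d-\mathrm{Re}\,\mathrm{tr}\,g'\bigr)<1/2$ every eigenvalue $e^{i\theta_{j}}$ of $g'$ satisfies $\cos\theta_{j}>3/4$, so all spectral angles of $g'$ and $h'$ lie in the arc $|\theta|<\arccos(3/4)<\pi/2$, while the sharper bound $<1$ confines the spectrum of the commutator to $|\theta|<\pi/3$.

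The final step is to upgrade ``the commutator is close to $\mathbb{1}$'' to ``the commutator equals $\mathbb{1}$'' using finiteness. Set $z_{0}:=ghg^{-1}h^{-1}$ and $z_{k+1}:=g'z_{k}g'^{-1}z_{k}^{-1}$; since conjugation by $g'$ preserves $H$, each $z_{k}\in H$, so the sequence takes only finitely many values. Combining the commutator estimate with the spectral localization I would argue that $\|z_{k+1}-\mathbb{1}\|$ strictly decreases whenever $z_{k}\neq\mathbb{1}$ — equivalently, passing to principal logarithms, which are well defined here because the localization keeps the spectra away from $-1$, and running the descent on the skew-Hermitian logarithms, where the Lie bracket is genuinely quadratically small. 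A strictly decreasing sequence of distances inside the finite set $\{\,\|u-\mathbb{1}\|:u\in H\,\}$ must reach $0$, forcing some $z_{k}=\mathbb{1}$ and, tracing the definition back, $g'h'g'^{-1}h'^{-1}=\mathbb{1}$, i.e.\ $gh=hg$.

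I expect this last step to be the main obstacle. The clean Lipschitz bound available above is only $\|AB-BA\|\le 2\|A\|_{\mathrm{op}}\|B\|$, which at radius $1/\sqrt{2}$ yields a one-step contraction factor bounded merely by $\sqrt{2}$ rather than by something strictly below $1$; the value $1/\sqrt{2}$ is tuned so that the commutator obeys $\|\cdot-\mathbb{1}\|<1$, \emph{not} so that a naive contraction holds. Making the descent rigorous therefore forces genuine use of the spectral localization, either to confine the iterated commutators to $\mathcal{B}_{K}$ and invoke a minimal-counterexample argument over the finite set $H\cap\mathcal{B}_{K}$, or to transfer the problem to the Lie algebra via principal logarithms and contract there. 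This is the delicate point where the finiteness of $H$ and the precise radius must be combined.
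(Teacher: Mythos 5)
Your overall strategy --- iterate the group commutator $z_{k+1}=[g,z_k]_{\bullet}$ inside the finite group $H$ and force the sequence to reach $\mathbb{1}$ --- is exactly the strategy behind the proof the paper relies on (the lemma is quoted from \cite{Adam}, and the machinery is \cref{inequality1} and \cref{inequality2} in the appendix). But as written your argument has two genuine gaps, and you only flag one of them. The one you flag is real: the elementary estimate $\left\| AB-BA \right\| \le \left\| A \right\|_{\mathrm{op}}\left\| B \right\|+\left\| B \right\|_{\mathrm{op}}\left\| A \right\| \le 2\left\| A \right\|\left\| B \right\|$ gives a one-step ratio bounded only by $\sqrt{2}$, so no descent. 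What makes the argument work is the sharper constant in the first part of \cref{inequality1}, $\left\| [A,B]_{\bullet}-\mathbb{1} \right\| \le \sqrt{2}\left\| A-\mathbb{1} \right\|\left\| B-\mathbb{1} \right\|$ (Theorem 36.15 in \cite{Curtis}); together with \cref{balls} it yields the genuine contraction ratio $q=\sqrt{2}\min_{c\in C_K(K)}\left\| g-c \right\|<1$ of \cref{inequality2} --- the radius $1/\sqrt{2}$ is tuned to exactly this constant. Your proposed repairs (spectral localization, principal logarithms) are left as intentions, and the logarithm route is precisely the one the paper's appendix warns is delicate.

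The gap you do not flag is the fatal one. Even granting the descent, all it yields is that $z_{k}=\mathbb{1}$ for some $k\ge 1$; the phrase ``tracing the definition back'' is a non sequitur, because $z_{k}=[g,z_{k-1}]_{\bullet}=\mathbb{1}$ only says that $g$ commutes with $z_{k-1}$, not that $z_{k-1}=\mathbb{1}$. In particular $z_{1}$ could be $\mathbb{1}$ while $z_{0}=ghg^{-1}h^{-1}\neq\mathbb{1}$, and nothing in your argument rules this out. The missing ingredient is the \emph{second} statement of \cref{inequality1}/\cref{inequality2}: if $[h,[h,g]_{\bullet}]_{\bullet}=\mathbb{1}$ and $g\in\mathcal{B}_{K}$, then $[h,g]_{\bullet}=\mathbb{1}$. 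With that in hand one takes the minimal $k_{0}$ with $z_{k_{0}}=\mathbb{1}$, observes that every $z_{j}$ lies in $\mathcal{B}_{K}$ (by the contraction for $j\ge1$, by hypothesis for $j=0$), and climbs back down step by step to conclude $k_{0}=0$, i.e.\ $gh=hg$. This backward implication is a separate, nontrivial theorem (Theorem 36.16 in \cite{Curtis}), and your proposal contains neither it nor a substitute for it.
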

As an immediate conclusion we get 
\begin{lemma}\label{aisabelian}
Let $K=SU(d)$ and  $\mathfrak{k}=Lie(K)$. Assume $H$ is finite. Then $\mathfrak{a}(H)$ is abelian subalgebra of $\mathfrak{k}$ that is invariant under $\mathrm{Ad}_H$.
\end{lemma}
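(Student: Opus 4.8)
The plan is to derive \cref{aisabelian} as a direct consequence of \cref{abelian} together with the structural facts already assembled about $\mathfrak{a}(H)$. The statement has three parts to verify: that $\mathfrak{a}(H)$ is a subalgebra, that it is abelian, and that it is invariant under $\mathrm{Ad}_H$. The invariance is already in hand, since the discussion following \cref{defofa} combines \cref{inv1} with the conjugation-invariance of the Frobenius distance to conclude that $\mathfrak{a}(H)$ is an invariant subspace of $\mathrm{Ad}_H$. So the real content is the abelianness, and the subalgebra property will follow from it.

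First I would use the finiteness hypothesis: by \cref{abelian}, $H$ finite implies that $H \cap \mathcal{B}_K$ is abelian, i.e. any two elements $g, g' \in H \cap \mathcal{B}_K$ commute. The key step is then to transfer commutativity at the group level to commutativity of the associated Lie-algebra elements $x(g), y(g)$. The natural way is to observe from \cref{xyg} that $x(g)$ and $y(g)$ are both real polynomials (indeed linear combinations) in $g$ and $g^{-1}$, together with a multiple of the identity $\mathbb{1}$. Concretely $x(g) = \tfrac{1}{2}(g - g^{-1}) - c_1\mathbb{1}$ and $y(g) = \tfrac{1}{2i}(g + g^{-1}) - c_2\mathbb{1}$ for scalars $c_1, c_2$ depending on $g$. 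Hence for two commuting group elements $g, g'$, all four generators $x(g), y(g), x(g'), y(g')$ lie in the commutative algebra generated by $g, g', g^{-1}, g'^{-1}$ and the identity, and therefore commute pairwise. This shows the spanning generators of $\mathfrak{a}(H)$ commute with one another.

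Next I would extend commutativity from the generators to the whole span. Since the commutator bracket is bilinear, if every pair of spanning elements commutes then every pair of real linear combinations commutes as well; thus $[\,\mathfrak{a}(H), \mathfrak{a}(H)\,] = 0$, so $\mathfrak{a}(H)$ is abelian. This abelianness immediately gives the subalgebra property: $\mathfrak{a}(H)$ is by construction an $\mathbb{R}$-subspace of $\mathfrak{k} = \mathfrak{su}(d)$ (each $x(g), y(g) \in \mathfrak{su}(d)$ as noted after \cref{xyg}), and it is closed under the bracket simply because all brackets vanish. Combining the three pieces — subspace, bracket-closed (trivially), and $\mathrm{Ad}_H$-invariant — yields the claim.

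The main obstacle, and the only place requiring genuine care, is the transfer step: making precise the claim that $[x(g), x(g')] = 0$ etc. whenever $gg' = g'g$. The subtlety is that the subtracted trace terms are scalar multiples of $\mathbb{1}$, which are central and so cause no trouble, but one must confirm that commuting $g$ and $g'$ really do generate a commutative associative matrix algebra — this is immediate since if $gg' = g'g$ then also $g^{-1}$ commutes with $g'$ and with $g'^{-1}$, and all products of these are determined up to the commutative relations, so any two polynomials in $\{g, g^{-1}, g', g'^{-1}, \mathbb{1}\}$ commute. One should also check the boundary case where $H \cap \mathcal{B}_K$ is empty or contains only central elements, in which $\mathfrak{a}(H)$ is $\{0\}$, trivially abelian and invariant. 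Everything else is routine bilinearity.
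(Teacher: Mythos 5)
Your proposal is correct and follows essentially the same route as the paper: invoke \cref{abelian} to get that $H \cap \mathcal{B}_K$ is abelian, note that $x(g)$ and $y(g)$ are linear combinations of $g$, $g^{-1}$ and $\mathbb{1}$ so that the spanning set commutes pairwise (including the $[A,B]=0 \Rightarrow [A,B^{-1}]=0$ detail the paper relegates to a footnote), extend by bilinearity to the span, and read off $\mathrm{Ad}_H$-invariance from \cref{inv1} together with conjugation-invariance of the Frobenius distance. No gaps; nothing further to add.
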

We next formulate and prove two important lemmas that will be used in the proofs of our main results.
\begin{lemma}\label{invariantsubspace}
Let $K$ be a compact connected Lie group and  $\mathfrak{k}=Lie(K)$. Let $\mathfrak{g}$ be a subalgebra of $\mathfrak{k}$ and $H$ be a subgroup of $K$. Assume that ${C}_{\mathfrak{gl \left( k \right)}} \left( {\mathrm{Ad}}_{H} \right) = {C}_{\mathfrak{gl \left( k \right)}} \left( {\mathrm{ad}}_{\mathfrak{g}} \right)$. Then
\begin{equation*}
 W \text{ invariant subspace of } {\mathrm{Ad}}_{H} \implies \left[ \mathfrak{g},W \right] \subseteq W.
\end{equation*}
\end{lemma}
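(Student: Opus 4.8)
The plan is to translate the statement about invariant subspaces into a statement about orthogonal projectors lying in the relevant commutants, and then invoke the hypothesis that the two commutants coincide. The whole argument hinges on the fact that, because $K$ is compact, $\mathfrak{k}$ carries an $\mathrm{Ad}_K$-invariant inner product. I would fix such an inner product at the outset (it exists by averaging any inner product over $K$, or one may take the one supplied by the appendix; note it is automatically $\mathrm{Ad}_H$-invariant since $H\subseteq K$). Two consequences are immediate: every $\mathrm{Ad}_h$ with $h\in H$ is an orthogonal operator with $\mathrm{Ad}_h^\ast=\mathrm{Ad}_{h^{-1}}$, and, by differentiating the invariance along one-parameter subgroups, every $\mathrm{ad}_X$ with $X\in\mathfrak{k}$ is skew-symmetric, $\mathrm{ad}_X^\ast=-\mathrm{ad}_X$. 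In particular both families $\{\mathrm{Ad}_h\mid h\in H\}$ (since $H$ is a subgroup, so $h^{-1}\in H$) and $\{\mathrm{ad}_X\mid X\in\mathfrak{g}\}$ (since $\mathfrak{g}$ is a subspace, hence stable under $X\mapsto -X$) are closed under taking adjoints with respect to this inner product.

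Next I would establish the key dictionary: for a self-adjoint family of operators, a subspace is invariant if and only if the orthogonal projector onto it lies in the commutant. Concretely, if $W$ is $\mathrm{Ad}_H$-invariant, then because each $\mathrm{Ad}_h$ is orthogonal and $h^{-1}\in H$, the orthogonal complement $W^\perp$ is invariant as well; having both $W$ and $W^\perp$ invariant forces the orthogonal projector $P_W$ to commute with every $\mathrm{Ad}_h$, that is $P_W\in C_{\mathfrak{gl}(\mathfrak{k})}(\mathrm{Ad}_H)$. The same reasoning, with skew-symmetry of $\mathrm{ad}_X$ replacing orthogonality of $\mathrm{Ad}_h$, shows that $P_W\in C_{\mathfrak{gl}(\mathfrak{k})}(\mathrm{ad}_{\mathfrak{g}})$ is equivalent to $[\mathfrak{g},W]\subseteq W$; here the direction I actually need is the trivial converse, since $P_W$ commuting with $\mathrm{ad}_X$ gives $\mathrm{ad}_X w=P_W\,\mathrm{ad}_X w\in W$ for every $w\in W$.

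With the dictionary in hand the lemma is one line: $W$ invariant under $\mathrm{Ad}_H$ gives $P_W\in C_{\mathfrak{gl}(\mathfrak{k})}(\mathrm{Ad}_H)$, the hypothesis $C_{\mathfrak{gl}(\mathfrak{k})}(\mathrm{Ad}_H)=C_{\mathfrak{gl}(\mathfrak{k})}(\mathrm{ad}_{\mathfrak{g}})$ then places $P_W$ in $C_{\mathfrak{gl}(\mathfrak{k})}(\mathrm{ad}_{\mathfrak{g}})$, and this in turn yields $[\mathfrak{g},W]\subseteq W$. I do not expect a genuine obstacle here; the only point that requires care is the orthogonality/skew-symmetry step, which is precisely where compactness enters. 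Without an $\mathrm{Ad}$-invariant inner product an invariant subspace need not admit an invariant complement, and the passage from ``invariant subspace'' to ``projector in the commutant'' would break down. It is worth remarking that connectedness of $K$ is not essential for this particular argument: compactness (to secure the invariant inner product) together with the subgroup/subspace structure (to secure closure under adjoints) is all the proof really uses.
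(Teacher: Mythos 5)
Your proof is correct and follows essentially the same route as the paper's: fix an $\mathrm{Ad}$-invariant inner product coming from compactness, deduce that $W^{\perp}$ is also $\mathrm{Ad}_{H}$-invariant, place the projector $P_{W}$ in $C_{\mathfrak{gl}\left(\mathfrak{k}\right)}\left(\mathrm{Ad}_{H}\right)$, and use the hypothesis to transfer it to $C_{\mathfrak{gl}\left(\mathfrak{k}\right)}\left(\mathrm{ad}_{\mathfrak{g}}\right)$, which gives $\left[\mathfrak{g},W\right]\subseteq W$. The only cosmetic difference is that the paper verifies the invariance of $W^{\perp}$ by the bilinear-form computation $B\left(W,\mathrm{Ad}_{H}W^{\perp}\right)=B\left(\mathrm{Ad}_{H}W,\mathrm{Ad}_{H}W^{\perp}\right)=B\left(W,W^{\perp}\right)=\left\{0\right\}$ rather than by citing orthogonality of each $\mathrm{Ad}_{h}$, which is the same argument.
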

\begin{proof}
By \cref{compactgroup} there exists a definite symmetric bilinear form $B$ such that $\forall \ g \in G$ $\forall \ x,y \in \mathfrak{g}$ $B \left( {\mathrm{Ad}}_{g} x,{\mathrm{Ad}}_{g} y \right) = B \left( x,y \right)$. 
$\forall \ W$ invariant subspace of ${\mathrm{Ad}}_{H}$
\begin{equation*}
B \left( W,{\mathrm{Ad}}_{H} {W}^{\perp} \right)
=
B \left( {\mathrm{Ad}}_{H} W,{\mathrm{Ad}}_{H} {W}^{\perp} \right)
=
B \left( W,{W}^{\perp} \right)
=
\left\{ 0 \right\},
\end{equation*}
therefore ${W}^{\perp}$ is also an invariant subspace of ${\mathrm{Ad}}_{H}$. 
Let ${P}_{W}$ be one of the two projectors corresponding to the decomposition $\mathfrak{k} = W \oplus {W}^{\perp}$. 
Since $W$ and ${W}^{\perp}$ are both invariant subspaces of ${\mathrm{Ad}}_{H}$
\begin{equation*}
{P}_{W}
\in
{C}_{\mathfrak{gl \left( k \right)}} \left( {\mathrm{Ad}}_{H} \right)
=
{C}_{\mathfrak{gl \left( k \right)}} \left( {\mathrm{ad}}_{\mathfrak{g}} \right),
\end{equation*}
therefore $W$ and ${W}^{\perp}$ are also invariant subspaces of ${\mathrm{ad}}_{\mathfrak{g}}$.
In particular $\left[ \mathfrak{g},W \right] \subseteq W$.
\end{proof}
\begin{lemma}\label{finite}
Let $K \coloneqq SU \left( d \right)$ and $\mathfrak{k}=Lie(K)$. Let $\mathfrak{g}$ be a subalgebra of $\mathfrak{k}$ and let $H$ be a subgroup of $K$. 

\begin{equation*}
H \text{ is finite and } {C}_{\mathfrak{gl \left( k \right)}} \left( {\mathrm{Ad}}_{H} \right) = {C}_{\mathfrak{gl \left( k \right)}} \left( {\mathrm{ad}}_{\mathfrak{g}} \right)
\implies
H \cap {\mathcal{B}}_{K} \text{ and } \mathfrak{g} \text{ commute }.
\end{equation*}
\end{lemma}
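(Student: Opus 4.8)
The plan is to show directly that every $g \in H \cap \mathcal{B}_K$ commutes with every $X \in \mathfrak{g}$, which is exactly the assertion that $H \cap \mathcal{B}_K$ and $\mathfrak{g}$ commute. The whole argument hinges on the auxiliary space $\mathfrak{a}(H)$ and culminates in a short spectral observation about $\mathrm{Ad}_g$.

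First I would record the two structural facts coming from the hypotheses. Since $H$ is finite, \cref{aisabelian} guarantees that $\mathfrak{a}(H)$ is an \emph{abelian} subalgebra of $\mathfrak{k}$ which is invariant under $\mathrm{Ad}_H$. Feeding this invariant subspace into \cref{invariantsubspace} (whose hypotheses, namely $K=SU(d)$ compact connected and $C_{\mathfrak{gl}(k)}(\mathrm{Ad}_H) = C_{\mathfrak{gl}(k)}(\mathrm{ad}_{\mathfrak{g}})$, are assumed here) gives $[\mathfrak{g}, \mathfrak{a}(H)] \subseteq \mathfrak{a}(H)$. Thus $\mathfrak{g}$ normalizes $\mathfrak{a}(H)$, while within $\mathfrak{a}(H)$ all brackets vanish.

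Next, fix $g \in H \cap \mathcal{B}_K$ and $X \in \mathfrak{g}$, and note that $x(g), y(g) \in \mathfrak{a}(H)$ by the very definition of $\mathfrak{a}(H)$. From \cref{xyg} both $x(g)$ and $y(g)$ are affine polynomials in $g$ and $g^{-1}$, so each commutes with $g$. Using $g = x(g) + iy(g) + \mathrm{tr}(g)\tfrac{\mathbb{1}}{d}$ I can expand $[X,g] = [X,x(g)] + i[X,y(g)]$. By the first step, $[X,x(g)]$ and $[X,y(g)]$ lie in $\mathfrak{a}(H)$; since $\mathfrak{a}(H)$ is abelian they commute with $x(g)$ and with $y(g)$, hence (again by the decomposition of $g$) with $g$ itself. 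Therefore $V := [X,g]$ commutes with $g$.

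Finally I would convert ``$V$ commutes with $g$'' into ``$V=0$''. Writing $V = [X,g] = (X - \mathrm{Ad}_g X)\,g$, the relation $[V,g]=0$ is equivalent to $[X - \mathrm{Ad}_g X,\, g] = 0$, i.e. $\mathrm{Ad}_g(X - \mathrm{Ad}_g X) = X - \mathrm{Ad}_g X$, which rearranges to $(\mathrm{Ad}_g - \mathbb{1})^2 X = 0$. Because $g$ is unitary, $\mathrm{Ad}_g$ is diagonalizable, so any vector annihilated by $(\mathrm{Ad}_g - \mathbb{1})^2$ is already annihilated by $\mathrm{Ad}_g - \mathbb{1}$; hence $\mathrm{Ad}_g X = X$, i.e. $[g,X]=0$. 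As $g \in H \cap \mathcal{B}_K$ and $X \in \mathfrak{g}$ were arbitrary, the claim follows. The main obstacle is the bookkeeping of the third paragraph: the nonobvious point is to see that combining the invariance $[\mathfrak{g}, \mathfrak{a}(H)] \subseteq \mathfrak{a}(H)$ with the abelianness of $\mathfrak{a}(H)$ and the trivial commutation of $g$ with $x(g), y(g)$ forces $[X,g]$ to commute with $g$; after that, diagonalizability of $\mathrm{Ad}_g$ turns the quadratic relation $(\mathrm{Ad}_g-\mathbb{1})^2 X=0$ into the desired $\mathrm{Ad}_g X = X$ immediately.
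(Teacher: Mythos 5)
Your proposal is correct, and it shares its first half with the paper's proof: both invoke \cref{aisabelian} to get that $\mathfrak{a}(H)$ is an abelian $\mathrm{Ad}_H$-invariant subalgebra and then \cref{invariantsubspace} to get $\left[ \mathfrak{g},\mathfrak{a}(H) \right] \subseteq \mathfrak{a}(H)$. Where you diverge is the finish. The paper stays at the level of the whole space $\mathfrak{a}(H)$: it pairs $\left[ \mathfrak{g},\mathfrak{a}(H) \right]$ against $\mathfrak{a}(H)$ under the Killing form, uses invariance to move the bracket onto $\left[ \mathfrak{a}(H),\mathfrak{a}(H) \right] = \{0\}$, concludes $\left[ \mathfrak{g},\mathfrak{a}(H) \right] \subseteq \mathfrak{a}(H) \cap \mathfrak{a}(H)^{\perp} = \{0\}$ by definiteness, and only then descends to group elements via the decomposition $g = x(g) + i y(g) + \mathrm{tr}(g)\tfrac{\mathbb{1}}{d}$. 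You instead argue pointwise: for fixed $g \in H \cap \mathcal{B}_K$ and $X \in \mathfrak{g}$, the containment $\left[ \mathfrak{g},\mathfrak{a}(H) \right] \subseteq \mathfrak{a}(H)$ plus abelianness forces $\left[ X,g \right]$ to commute with $g$, which is the relation $(\mathrm{Ad}_g - \mathbb{1})^2 X = 0$, and diagonalizability of $\mathrm{Ad}_g$ (it is orthogonal with respect to the Hilbert--Schmidt form since $g$ is unitary, so the generalized $1$-eigenspace equals the $1$-eigenspace) collapses this to $\mathrm{Ad}_g X = X$. Both arguments are sound. The paper's version yields the stronger intermediate fact $\left[ \mathfrak{g},\mathfrak{a}(H) \right] = \{0\}$ in one stroke but leans on the (negative) definiteness of the Killing form of $\mathfrak{su}(d)$; yours trades that for an elementary spectral observation about $\mathrm{Ad}_g$ and so does not need any definite invariant form at all, which would make it marginally more robust if one wanted to restate the lemma over $\mathfrak{u}(d)$, where the Killing form is only semidefinite and the paper must substitute a different invariant form.
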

\begin{proof}
By \cref{aisabelian} we know that $\mathfrak{a}(H)$ is abelian subalgebra of $\mathfrak{k}$ that is invariant under $\mathrm{Ad}_H$. Moreover, the equality  ${C}_{\mathfrak{gl \left( k \right)}} \left( {\mathrm{Ad}}_{H} \right) = {C}_{\mathfrak{gl \left( k \right)}} \left( {\mathrm{ad}}_{\mathfrak{g}} \right)$ combined with  \cref{invariantsubspace} implies that $\left[ \mathfrak{g},\mathfrak{a}(H) \right] \subseteq \mathfrak{a}(H)$.

We next consider the Killing form of $\mathfrak{k}$. Since
\begin{equation*}
K \left( \left[ \mathfrak{g},\mathfrak{a}(H) \right],\mathfrak{a}(H) \right)
=
K \left( \mathfrak{g},\left[ \mathfrak{a}(H),\mathfrak{a}(H) \right] \right)
=
\left\{ 0 \right\},
\end{equation*}
we conclude that $\left[ \mathfrak{g},\mathfrak{a}(H) \right] \subseteq {\mathfrak{a}(H)}^{\perp}$. Combining this with $\left[ \mathfrak{g},\mathfrak{a}(H) \right] \subseteq \mathfrak{a}(H)$ we obtain $\left[ \mathfrak{g},\mathfrak{a}(H) \right] \subseteq \mathfrak{a}(H) \cap {\mathfrak{a}}(H)^{\perp}$, and since $K$ is definite, $\left[ \mathfrak{g},\mathfrak{a} \right] = \left\{ 0 \right\}$. Using definition of $\mathfrak{a}(H)$ it is easy to see that this implies $\mathfrak{g}$ and $H \cap {\mathcal{B}}_{K}$ commute.
\end{proof}

\subsection{Universality Problem}
We are now ready to provide a proof of the universality theorem for sets of qudit gates  $S\subset SU(d)$.
\begin{theorem}\label{universalityproblemgroup}
Let $K \coloneqq SU \left( d \right)$, $\mathfrak{k}=Lie(K)$, $S$ be a subset of $K$ and $H \coloneqq \overline{\left\langle S \right\rangle}$. Then
\begin{equation*}
H = K
\iff
{C}_{\mathfrak{gl \left( k \right)}} \left( {\mathrm{Ad}}_{S} \right) = \left\{ \lambda \mathbb{1} \mid \lambda \in \mathbb{R} \right\} \text{ and } H \cap {\mathcal{B}}_{K} \not\subseteq {C}_{K} \left( K \right).
\end{equation*}
\end{theorem}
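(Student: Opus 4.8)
The plan is to prove the two implications separately, translating in both the hypothesis ${C}_{\mathfrak{gl \left( k \right)}} \left( {\mathrm{Ad}}_{S} \right) = \left\{ \lambda \mathbb{1} \mid \lambda \in \mathbb{R} \right\}$ into a statement about the adjoint action of $H$ on $\mathfrak{k}$. First I would record, using \cref{centralizergroupgenerators} and \cref{centralizerclosure}, that ${C}_{\mathfrak{gl \left( k \right)}} \left( {\mathrm{Ad}}_{S} \right) = {C}_{\mathfrak{gl \left( k \right)}} \left( {\mathrm{Ad}}_{\left\langle S \right\rangle} \right) = {C}_{\mathfrak{gl \left( k \right)}} \left( {\mathrm{Ad}}_{H} \right)$, so that the condition may be read directly on $H = \overline{\left\langle S \right\rangle}$. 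Since $\mathfrak{k} = \mathfrak{su}(d)$ is simple, ${\mathrm{ad}}_{\mathfrak{k}}$ is irreducible and of real type, so Schur's lemma yields ${C}_{\mathfrak{gl \left( k \right)}} \left( {\mathrm{ad}}_{\mathfrak{k}} \right) = \left\{ \lambda \mathbb{1} \mid \lambda \in \mathbb{R} \right\}$; as $K$ is connected, this set also equals ${C}_{\mathfrak{gl \left( k \right)}} \left( {\mathrm{Ad}}_{K} \right)$. Hence the first condition is equivalent to the equality ${C}_{\mathfrak{gl \left( k \right)}} \left( {\mathrm{Ad}}_{H} \right) = {C}_{\mathfrak{gl \left( k \right)}} \left( {\mathrm{ad}}_{\mathfrak{k}} \right)$, which is exactly the hypothesis needed to invoke \cref{invariantsubspace} and \cref{finite} with $\mathfrak{g} = \mathfrak{k}$.

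For the forward implication, assume $H = K$. Then ${C}_{\mathfrak{gl \left( k \right)}} \left( {\mathrm{Ad}}_{S} \right) = {C}_{\mathfrak{gl \left( k \right)}} \left( {\mathrm{Ad}}_{K} \right) = \left\{ \lambda \mathbb{1} \mid \lambda \in \mathbb{R} \right\}$ by the reduction above, which is the first condition. For the second, $H \cap {\mathcal{B}}_{K} = K \cap {\mathcal{B}}_{K} = {\mathcal{B}}_{K}$ is an open neighborhood of the finite center $C_K(K) = Z(SU(d))$; since $\dim K \geq 1$ while $C_K(K)$ is finite, ${\mathcal{B}}_{K}$ must contain elements of $K$ lying outside $C_K(K)$, so $H \cap {\mathcal{B}}_{K} \not\subseteq C_K(K)$.

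For the converse, assume both conditions. The crucial point is to rule out that $H$ is finite. If it were, then by the reduction the hypotheses of \cref{finite} hold with $\mathfrak{g} = \mathfrak{k}$, so $H \cap {\mathcal{B}}_{K}$ and all of $\mathfrak{k} = \mathfrak{su}(d)$ commute; since $K$ is connected and generated by $\exp(\mathfrak{k})$, this forces $H \cap {\mathcal{B}}_{K} \subseteq C_K(K)$, contradicting the second condition. Hence $H$ is infinite. By \cref{groupgenerated}, $H$ is a compact Lie subgroup of $K$, and being infinite it has $\dim H \geq 1$, so $\mathfrak{h} \coloneqq \mathrm{Lie}(H) \neq \{ 0 \}$. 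Because $H$ is a subgroup, $\mathfrak{h}$ is an ${\mathrm{Ad}}_{H}$-invariant subspace of $\mathfrak{k}$; moreover, since $H$ is compact it preserves an inner product on $\mathfrak{k}$, so the orthogonal complement of any invariant subspace is invariant and the associated projection lies in ${C}_{\mathfrak{gl \left( k \right)}} \left( {\mathrm{Ad}}_{H} \right) = \left\{ \lambda \mathbb{1} \mid \lambda \in \mathbb{R} \right\}$. As that commutant contains no nontrivial projection, ${\mathrm{Ad}}_{H}$ has no proper nonzero invariant subspace, so the nonzero $\mathfrak{h}$ must equal $\mathfrak{k}$. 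Therefore $H$ contains the connected subgroup of $K$ with Lie algebra $\mathfrak{k}$, which is $K$ itself, and combined with $H \subseteq K$ this gives $H = K$.

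The main obstacle is the infinitude step in the converse, namely showing that the geometric condition $H \cap {\mathcal{B}}_{K} \not\subseteq C_K(K)$ genuinely prevents $H$ from being finite. This is precisely where the auxiliary space $\mathfrak{a}(H)$ and \cref{finite} carry the weight, replacing the flawed logarithm argument of the original approach; the remaining representation-theoretic steps — Schur's lemma, invariance of $\mathfrak{h}$ under ${\mathrm{Ad}}_{H}$, and passing from $\mathfrak{h} = \mathfrak{k}$ to $H = K$ — are comparatively routine once the irreducibility of ${\mathrm{Ad}}_{H}$ on $\mathfrak{k}$ has been extracted from the centralizer hypothesis.
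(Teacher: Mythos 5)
Your proposal is correct and follows essentially the same route as the paper: the same reduction of the centralizer condition via \cref{centralizergroupgenerators} and \cref{centralizerclosure}, the same use of \cref{finite} with $\mathfrak{g}=\mathfrak{k}$ to exclude the finite case, and the same invariant-subspace/commutant argument to force $\mathfrak{h}=\mathfrak{k}$ (the paper phrases this as $\mathfrak{h}$ being an ideal of the simple algebra $\mathfrak{k}$ via \cref{invariantsubspace}, which is equivalent to your irreducibility statement). The only differences are cosmetic: you reorder the finiteness step and supply explicit justifications (Schur's lemma, openness of $\mathcal{B}_K$) that the paper leaves implicit.
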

\begin{proof}
By \cref{centralizergroupgenerators} and \cref{centralizerclosure}
\begin{equation*}
{C}_{\mathfrak{gl \left( k \right)}} \left( {\mathrm{Ad}}_{S} \right)
=
{C}_{\mathfrak{gl \left( k \right)}} \left( {\mathrm{Ad}}_{\left\langle S \right\rangle} \right)
=
{C}_{\mathfrak{gl \left( k \right)}} \left( {\mathrm{Ad}}_{\overline{\left\langle S \right\rangle}} \right)
=
{C}_{\mathfrak{gl \left( k \right)}} \left( {\mathrm{Ad}}_{H} \right),
\end{equation*}
and by \cref{centralizercorrespondence}
\begin{equation*}
{C}_{\mathfrak{gl \left( k \right)}} \left( {\mathrm{ad}}_{\mathfrak{k}} \right)
=
{C}_{\mathfrak{gl \left( k \right)}} \left( {\mathrm{Ad}}_{K} \right).
\end{equation*}
\begin{description}
\item[$\Longrightarrow$]
\begin{equation*}
{C}_{\mathfrak{gl \left( k \right)}} \left( {\mathrm{Ad}}_{S} \right)
=
{C}_{\mathfrak{gl \left( k \right)}} \left( {\mathrm{Ad}}_{H} \right)
=
{C}_{\mathfrak{gl \left( k \right)}} \left( {\mathrm{Ad}}_{K} \right)
=
{C}_{\mathfrak{gl \left( k \right)}} \left( {\mathrm{ad}}_{\mathfrak{k}} \right)
=
\left\{ \lambda \mathbb{1} \mid \lambda \in \mathbb{R} \right\},
\end{equation*}
and
\begin{equation*}
H \cap {\mathcal{B}}_{K}
=
K \cap {\mathcal{B}}_{K}
=
{\mathcal{B}}_{K}
\not\subseteq
{C}_{K} \left( K \right).
\end{equation*}
\item[$\Longleftarrow$]
\begin{equation*}
{C}_{\mathfrak{gl \left( k \right)}} \left( {\mathrm{Ad}}_{H} \right)
=
{C}_{\mathfrak{gl \left( k \right)}} \left( {\mathrm{Ad}}_{S} \right)
=
\left\{ \lambda \mathbb{1} \mid \lambda \in \mathbb{R} \right\}
=
{C}_{\mathfrak{gl \left( k \right)}} \left( {\mathrm{ad}}_{\mathfrak{k}} \right).
\end{equation*}
Let $\mathfrak{h}$ be the Lie algebra of $H$. Since $\mathfrak{h}$ is an invariant subspace of ${\mathrm{Ad}}_{H}$, by \cref{invariantsubspace} $\mathfrak{h}$ is an ideal of $\mathfrak{k}$, and since $\mathfrak{k}$ is simple, either $\mathfrak{h} = \left\{ 0 \right\}$ or $\mathfrak{h} = \mathfrak{k}$. Suppose by contradiction $\mathfrak{h} = \left\{ 0 \right\}$. Then $H$ is finite, and by \cref{finite} (with $\mathfrak{g}=\mathfrak{k}$) we get $H \cap {\mathcal{B}}_{K} \text{ and } \mathfrak{k}$ commute. Hence  $H \cap {\mathcal{B}}_{K} \text{ and } K$ commute. Thus $H \cap {\mathcal{B}}_{K} \subseteq {C}_{K} \left( K \right)$, which is a contradiction. Therefore $\mathfrak{h} = \mathfrak{k}$, hence $H = K$.

\end{description}
\end{proof}
\begin{remark}
The proof uses the fact that $\mathfrak{k}$ is simple, therefore the theorem does not extend to $K = U \left( d \right)$.
\end{remark}
\begin{remark}
Since we do not have access to $H$, neither we have to $H \cap {\mathcal{B}}_{K}$, therefore \cref{universalityproblemgroup} is not directly applicable, unless there exists some integer $n$ such that
\begin{equation*}
H \cap {\mathcal{B}}_{K} \not\subseteq {C}_{K} \left( K \right)
\iff
{S}^{n} \cap {\mathcal{B}}_{K} \not\subseteq {C}_{K} \left( K \right).
\end{equation*}
As it has been shown in \cite{Adam}, it is possible to use the so-called \emph{spectral gap} of the gate set $S$ to upper bound this integer $n$.
\end{remark}

\begin{remark}
In the last part of the proof of \cref{universalityproblemgroup} we could also use the following argument: Then $H$ is finite, and by \cref{aisabelian} the space $\mathfrak{a}(H)$ is an abelian subalgebra of $\mathfrak{k}$ that is invariant under $\mathrm{Ad}_H$. This implies $\mathfrak{a}(H)=\{0\}$ or $\mathfrak{a}(H)=\mathfrak{k}$. The latter is impossible since $\mathfrak{k}$ is nonabelian. Therefore  $\mathfrak{a}(H)=\{0\}$ which by \cref{defofa} implies $H \cap {\mathcal{B}}_{K} \subseteq {C}_{K} \left( K \right)$, which is a contradiction. Therefore $\mathfrak{h} = \mathfrak{k}$, hence $H = K$.
\end{remark}
\subsection{Subgroup Universality Problem}
We start with a few facts related to the subalgebras - Lie subgroups correspondence.
\begin{lemma}\label{correspondence}
Let $K$ be a Lie group and $\mathfrak{k}=Lie(K)$. Let $\mathfrak{g}$ be a subalgebra of $\mathfrak{k}$. Let $G$ be the connected Lie subgroup of $K$ such that $Lie(G)=\mathfrak{g}$. Then
\begin{equation*}
G
=
\left\langle \exp \left( \mathfrak{g} \right) \right\rangle:=\langle\{\exp(X)|\,X\in \mathfrak{g}\}\rangle.
\end{equation*}
\end{lemma}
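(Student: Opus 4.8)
The plan is to prove the two inclusions separately, after first recording that $\langle \exp(\mathfrak{g}) \rangle$ is genuinely a subgroup of $K$. Since $\mathfrak{g}$ is a linear subspace we have $-X \in \mathfrak{g}$ whenever $X \in \mathfrak{g}$, so $\exp(\mathfrak{g})$ contains the identity $\exp(0) = e$ and is closed under inverses, $\exp(X)^{-1} = \exp(-X) \in \exp(\mathfrak{g})$. Hence by \cref{words} the word set $\langle \exp(\mathfrak{g}) \rangle = \bigcup_{n} \exp(\mathfrak{g})^{n}$ contains $e$ and is closed under products and inverses, i.e. it is precisely the subgroup of $K$ generated by $\exp(\mathfrak{g})$.

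For the inclusion $\langle \exp(\mathfrak{g}) \rangle \subseteq G$, I would first observe that $\exp(\mathfrak{g}) \subseteq G$: for each $X \in \mathfrak{g}$ the one-parameter subgroup $t \mapsto \exp(tX)$ is the integral curve through $e$ of the left-invariant vector field determined by $X$, which is everywhere tangent to the left-invariant distribution integrating $\mathfrak{g}$; by the integral subgroup (Frobenius) theorem this curve stays in the maximal integral manifold $G$, so $\exp(X) \in G$. Since $G$ is a subgroup containing $\exp(\mathfrak{g})$, every word in $\exp(\mathfrak{g})$ lies in $G$, which gives $\langle \exp(\mathfrak{g}) \rangle \subseteq G$.

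The reverse inclusion carries the real content. Working in the intrinsic manifold topology of $G$, in which $G$ is a connected Lie group with Lie algebra $\mathfrak{g}$, the exponential map $\exp : \mathfrak{g} \to G$ is a local diffeomorphism at $0$, so there is an open neighborhood $V$ of $e$ in $G$ with $V \subseteq \exp(\mathfrak{g}) \subseteq \langle \exp(\mathfrak{g}) \rangle$. I would then invoke the standard fact that a connected topological group is generated by any neighborhood of the identity: the subgroup $\bigcup_{n} V^{n}$ is open, being a union of left translates of the open set $V$, so its complement is a union of cosets and is also open, and by connectedness of $G$ we conclude $G = \bigcup_{n} V^{n}$. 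Therefore $G \subseteq \langle \exp(\mathfrak{g}) \rangle$, and together with the first inclusion this yields $G = \langle \exp(\mathfrak{g}) \rangle$.

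The main obstacle is bookkeeping about topology rather than any hard estimate: the connected Lie subgroup $G$ need not carry the subspace topology inherited from $K$ (the irrational winding line on the torus is the standard cautionary example), so both the local-diffeomorphism step and the connectedness–generation step must be carried out in the intrinsic topology of $G$ and not in that of $K$. Once one is careful that $V$ is a neighborhood of $e$ in $G$'s own topology, the argument goes through; the identification of $G$ as the maximal integral manifold of the distribution and the fact that $\exp$ lands in $G$ are the pieces of the subalgebra–subgroup correspondence (referenced via \cite{Parameter1,Parameter2}) that I would take as given.
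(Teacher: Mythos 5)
Your proof is correct. The paper in fact states this lemma without proof, deferring to its references on one-parameter subgroups, and your argument is exactly the standard one those references supply: $\exp(\mathfrak{g})\subseteq G$ by the integral-subgroup theorem, and $G\subseteq\langle\exp(\mathfrak{g})\rangle$ because a connected topological group is generated by any identity neighborhood, applied in the \emph{intrinsic} topology of $G$ --- the one genuine subtlety, which you correctly flag. The only cosmetic point is that for $\bigcup_{n}V^{n}$ to be a subgroup you should first replace $V$ by the symmetric neighborhood $V\cap V^{-1}$; this is harmless here since $\exp(\mathfrak{g})$ is itself closed under inverses.
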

\begin{lemma}\label{centralizercorrespondence}
Let $K$ be a Lie group. 
Let $\mathfrak{k}$ be the Lie algebra of $K$. 
Let $\mathfrak{g}$ be a subalgebra of $\mathfrak{k}$. 
Let $G$ be the connected Lie subgroup of $K$ corresponding to $\mathfrak{g}$.
\begin{equation*}
{C}_{\mathfrak{gl \left( k \right)}} \left( {\mathrm{ad}}_{\mathfrak{g}} \right)
=
{C}_{\mathfrak{gl \left( k \right)}} \left( {\mathrm{Ad}}_{G} \right).
\end{equation*}
\end{lemma}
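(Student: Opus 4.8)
The plan is to establish the two inclusions separately, using as the single analytic input the standard identity $\mathrm{Ad}_{\exp(X)} = \exp(\mathrm{ad}_X)$ relating the adjoint representation of the group to that of the algebra, and combining it with the already proved \cref{correspondence} and \cref{centralizergroupgenerators}. Throughout, recall that an element $A \in \mathfrak{gl}(\mathfrak{k})$ lies in $C_{\mathfrak{gl \left( k \right)}}(\mathrm{ad}_{\mathfrak{g}})$ exactly when $[A, \mathrm{ad}_X] = 0$ for every $X \in \mathfrak{g}$, and in $C_{\mathfrak{gl \left( k \right)}}(\mathrm{Ad}_G)$ exactly when $[A, \mathrm{Ad}_g] = 0$ for every $g \in G$.

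First I would prove $C_{\mathfrak{gl \left( k \right)}}(\mathrm{Ad}_G) \subseteq C_{\mathfrak{gl \left( k \right)}}(\mathrm{ad}_{\mathfrak{g}})$. Take any $A$ commuting with all $\mathrm{Ad}_g$, $g \in G$. Since $G$ is the connected Lie subgroup with Lie algebra $\mathfrak{g}$, for each $X \in \mathfrak{g}$ the whole one-parameter subgroup $\exp(tX)$, $t \in \mathbb{R}$, lies in $G$; hence $[A, \mathrm{Ad}_{\exp(tX)}] = [A, \exp(t\,\mathrm{ad}_X)] = 0$ for all $t$. Differentiating at $t = 0$ (equivalently, invoking \cref{parameter}) gives $[A, \mathrm{ad}_X] = 0$, so $A \in C_{\mathfrak{gl \left( k \right)}}(\mathrm{ad}_{\mathfrak{g}})$.

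For the reverse inclusion I would take $A$ commuting with every $\mathrm{ad}_X$, $X \in \mathfrak{g}$. Because $A$ commutes with $\mathrm{ad}_X$ it commutes with every power, hence with the exponential series, giving $[A, \mathrm{Ad}_{\exp(X)}] = [A, \exp(\mathrm{ad}_X)] = 0$ for all $X \in \mathfrak{g}$ (this is again the content of \cref{parameter}). Thus $A$ centralizes $\mathrm{Ad}_h$ for every $h$ in the set $\exp(\mathfrak{g}) \coloneqq \{\exp(X) \mid X \in \mathfrak{g}\}$. By \cref{centralizergroupgenerators} the centralizer of $\mathrm{Ad}_{\exp(\mathfrak{g})}$ coincides with the centralizer of $\mathrm{Ad}_{\langle \exp(\mathfrak{g}) \rangle}$, and by \cref{correspondence} we have $\langle \exp(\mathfrak{g}) \rangle = G$. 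Therefore $A$ commutes with $\mathrm{Ad}_g$ for all $g \in G$, i.e. $A \in C_{\mathfrak{gl \left( k \right)}}(\mathrm{Ad}_G)$, which closes the argument.

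The proof is essentially routine once the earlier lemmas are in place; the one conceptual bridge to get right is the passage between the infinitesimal condition on $\mathrm{ad}_X$ and the global condition on $\mathrm{Ad}_g$. Concretely, the work is carried entirely by the identity $\mathrm{Ad}_{\exp(X)} = \exp(\mathrm{ad}_X)$ together with the differentiate/exponentiate equivalence of \cref{parameter}, while connectedness of $G$ (through \cref{correspondence}) is what lets me climb from the one-parameter subgroups $\exp(tX)$ back up to all of $G$. I expect no genuine obstacle beyond invoking these facts in the correct order.
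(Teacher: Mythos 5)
Your proof is correct and follows essentially the same route as the paper: the paper's one-line chain of equalities $C_{\mathfrak{gl}(\mathfrak{k})}(\mathrm{ad}_{\mathfrak{g}}) = C_{\mathfrak{gl}(\mathfrak{k})}(e^{\mathrm{ad}_{\mathfrak{g}}}) = C_{\mathfrak{gl}(\mathfrak{k})}(\mathrm{Ad}_{\exp(\mathfrak{g})}) = C_{\mathfrak{gl}(\mathfrak{k})}(\mathrm{Ad}_{\langle\exp(\mathfrak{g})\rangle}) = C_{\mathfrak{gl}(\mathfrak{k})}(\mathrm{Ad}_{G})$ rests on exactly the ingredients you use, namely $\mathrm{Ad}_{\exp(X)} = e^{\mathrm{ad}_X}$, \cref{parameter} (packaged in the paper as \cref{centralizerexponential}), \cref{centralizergroupgenerators}, and \cref{correspondence}. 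The only cosmetic difference is that you argue the two inclusions separately where the paper reads the same chain of equalities in both directions at once.
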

\begin{proof}
By \cref{centralizerexponential}, \cref{centralizergroupgenerators} and \cref{correspondence}
\begin{equation*}
{C}_{\mathfrak{gl \left( k \right)}} \left( {\mathrm{ad}}_{\mathfrak{g}} \right)
=
{C}_{\mathfrak{gl \left( k \right)}} \left( {e}^{{\mathrm{ad}}_{\mathfrak{g}}} \right)
=
{C}_{\mathfrak{gl \left( k \right)}} \left( {\mathrm{Ad}}_{exp \left( \mathfrak{g} \right)} \right)
=
{C}_{\mathfrak{gl \left( k \right)}} \left( {\mathrm{Ad}}_{\left\langle exp \left( \mathfrak{g} \right) \right\rangle} \right)
=
{C}_{\mathfrak{gl \left( k \right)}} \left( {\mathrm{Ad}}_{G} \right).
\end{equation*}
\end{proof}
\begin{lemma}\label{Mostow}
Let $K$ be a \emph{compact} Lie group and  $\mathfrak{k}=Lie(K)$. Let $\mathfrak{g}$ be a subalgebra of $\mathfrak{k}$. 
Let $G$ be the connected Lie subgroup of $K$ corresponding to $\mathfrak{g}$. Then
\begin{equation*}
\mathfrak{g} \text{ is semisimple }
\implies
G \text{ is closed }.
\end{equation*}
\end{lemma}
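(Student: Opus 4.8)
The plan is to realize $G$ as the continuous image of a \emph{compact} group, so that its image in $K$ is automatically compact and therefore closed. First I would observe that $\mathfrak{g}$ is itself a compact Lie algebra: restricting to $\mathfrak{g}$ the $\mathrm{Ad}$-invariant inner product on $\mathfrak{k}$ (which exists because $K$ is compact, cf. \cref{compactgroup}) yields an $\mathrm{ad}_{\mathfrak{g}}$-invariant inner product on $\mathfrak{g}$. Since by hypothesis $\mathfrak{g}$ is also semisimple, it is a semisimple Lie algebra of compact type, i.e. its Killing form is negative definite.

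The key input is Weyl's theorem: the connected, simply connected Lie group $\widetilde{G}$ with $\mathrm{Lie}(\widetilde{G}) = \mathfrak{g}$ is \emph{compact} whenever $\mathfrak{g}$ is semisimple of compact type (equivalently, a compact connected Lie group with semisimple Lie algebra has finite fundamental group). Granting this, I would integrate the inclusion $\iota \colon \mathfrak{g} \hookrightarrow \mathfrak{k}$ to a Lie group homomorphism $\phi \colon \widetilde{G} \to K$; this is possible precisely because $\widetilde{G}$ is simply connected, and by construction $d\phi_e = \iota$ and $\phi \circ \exp_{\widetilde{G}} = \exp_K \circ\, \iota$ on $\mathfrak{g}$.

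It then remains to identify the image. Since $\phi(\widetilde{G})$ is a connected subgroup of $K$ containing $\exp_K(\mathfrak{g})$, and $\widetilde{G} = \langle \exp_{\widetilde{G}}(\mathfrak{g}) \rangle$, we get $\phi(\widetilde{G}) = \langle \exp_K(\mathfrak{g}) \rangle$, which by \cref{correspondence} equals $G$. As the continuous image of the compact group $\widetilde{G}$, the set $G = \phi(\widetilde{G})$ is compact, hence closed in the Hausdorff group $K$, which is the claim.

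I expect the main obstacle to be Weyl's theorem on the compactness of $\widetilde{G}$ (finiteness of the fundamental group of a compact semisimple group); this is the single nontrivial external ingredient, and it is exactly where the semisimplicity hypothesis is used --- for merely reductive $\mathfrak{g}$ the center could integrate to a dense one-parameter subgroup (e.g. an irrational line on a torus) and $G$ would fail to be closed. A secondary point requiring care is the clean identification $\phi(\widetilde{G}) = G$, i.e. that $\phi$ has differential exactly $\iota$ so that its image is the \emph{prescribed} connected subgroup rather than some larger one; this follows from uniqueness of the integrating homomorphism together with \cref{correspondence}.
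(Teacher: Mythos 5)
Your argument is correct, but it is not the route the paper takes: the paper's entire proof is a citation of Corollary 1 in Mostow's work, whereas you reconstruct the statement from first principles. Your chain --- $\mathfrak{g}$ is compact as a subalgebra of the compact $\mathfrak{k}$ (cf. \cref{compactsubalgebra}), hence semisimple of compact type with negative definite Killing form; Weyl's theorem makes the simply connected integral group $\widetilde{G}$ compact; simple connectedness lets you integrate the inclusion $\iota\colon\mathfrak{g}\hookrightarrow\mathfrak{k}$ to $\phi\colon\widetilde{G}\to K$ with $d\phi_e=\iota$; and the image, being the connected subgroup generated by $\exp_K(\mathfrak{g})$, is exactly $G$ by \cref{correspondence}, compact, hence closed --- is sound at every step. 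What the two approaches buy is essentially a trade of one black box for another: Mostow's corollary is the sharper, ready-made statement, while Weyl's finiteness of $\pi_1$ for compact semisimple groups is the more classical and widely known input. Your version has the pedagogical advantage of isolating precisely where semisimplicity enters (and your remark that a central one-parameter subgroup could wind densely on a torus correctly explains why the reductive case fails), at the cost of needing the careful identification $\phi(\widetilde{G})=G$, which you handle correctly via uniqueness of the connected subgroup with prescribed Lie algebra.
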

\begin{proof}
Corollary 1 in \cite{Mostow}.
\end{proof}

After this short technical introduction we are ready to deal with the subgroup universality problem. The setting is as follows.  We are given a set of hamiltonians $\mathcal{X}$ which generates a Lie algebra $\mathfrak{g}=\left\langle \mathcal{X} \right\rangle$. By \cref{correspondence} there is a unique connected subgroup $G$ satisfying $Lie(G)=\mathfrak{g}$. We will further assume $\mathfrak{g}$ is a simple Lie algebra. Thus by \cref{Mostow} the group $G$ is a closed subgroup of $K$. The group $G$ defines the set of gates that we are interested in. We assume, however, that due to some reasons we can only access Hamiltonians $\mathcal{Y}\subset \mathfrak{g}$ and that the available gates are 
\begin{equation}
    S=\langle\exp(\mathcal{Y})\rangle:=\langle\{\exp(Y)|\,Y\in\mathcal{Y}\}\rangle,
\end{equation}
where $\mathcal{Y}$ is a subset of $\mathfrak{g}$. Our goal is to provide conditions that need to be satisfied so that $\overline{\langle S\rangle}=G$. Naturally, in order to have equality of the groups we need to have ${C}_{\mathfrak{gl \left( k \right)}} \left( {\mathrm{Ad}}_{S} \right) = {C}_{\mathfrak{gl \left( k \right)}} \left( {\mathrm{ad}}_{\mathcal{X}} \right)$. The additional condition is once again determined by properties of $H \cap {\mathcal{B}}_{K}$.

\begin{theorem}\label{subgroupuniversalityproblem}
Let $K \coloneqq SU \left( d \right)$, $\mathfrak{k}=Lie(K)$, $\mathcal{X}$ a subset of $\mathfrak{k}$, $\mathfrak{g} \coloneqq \left\langle \mathcal{X} \right\rangle$, $G$ the connected Lie subgroup of $K$ corresponding to $\mathfrak{g}$, $\mathcal{Y}$ a subset of $\mathfrak{g}$, $S \coloneqq exp \left( \mathcal{Y} \right)$, $H \coloneqq \overline{\left\langle S \right\rangle}$. Finally assume $\mathfrak{g}$ is simple.
\begin{equation*}
H = G
\iff
{C}_{\mathfrak{gl \left( k \right)}} \left( {\mathrm{Ad}}_{S} \right) = {C}_{\mathfrak{gl \left( k \right)}} \left( {\mathrm{ad}}_{\mathcal{X}} \right), H \cap {\mathcal{B}}_{K} \text{ and } \mathcal{X} \text{ do not commute. }
\end{equation*}
\end{theorem}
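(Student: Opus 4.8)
The plan is to mirror the proof of \cref{universalityproblemgroup}, replacing the ambient pair $(K,\mathfrak{k})$ by $(G,\mathfrak{g})$ and invoking simplicity of $\mathfrak{g}$ exactly where simplicity of $\mathfrak{k}$ was used there. First I would record the two preliminary chains of centralizer identities. On the group side, \cref{centralizergroupgenerators} and \cref{centralizerclosure} give $C_{\mathfrak{gl}(k)}(\mathrm{Ad}_S)=C_{\mathfrak{gl}(k)}(\mathrm{Ad}_{\langle S\rangle})=C_{\mathfrak{gl}(k)}(\mathrm{Ad}_H)$. On the algebra side, since $\mathfrak{g}=\langle\mathcal{X}\rangle$ and $\mathrm{ad}$ is a Lie homomorphism, \cref{homomorphismgenerators} gives $\mathrm{ad}_{\mathfrak{g}}=\langle\mathrm{ad}_{\mathcal{X}}\rangle$, so \cref{centralizeralgebragenerators} yields $C_{\mathfrak{gl}(k)}(\mathrm{ad}_{\mathcal{X}})=C_{\mathfrak{gl}(k)}(\mathrm{ad}_{\mathfrak{g}})$, and \cref{centralizercorrespondence} gives $C_{\mathfrak{gl}(k)}(\mathrm{ad}_{\mathfrak{g}})=C_{\mathfrak{gl}(k)}(\mathrm{Ad}_G)$. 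Thus the stated centralizer condition is equivalent to $C_{\mathfrak{gl}(k)}(\mathrm{Ad}_H)=C_{\mathfrak{gl}(k)}(\mathrm{Ad}_G)=C_{\mathfrak{gl}(k)}(\mathrm{ad}_{\mathfrak{g}})$. The second fact I would set up at the outset is the containment $H\subseteq G$: from $\mathcal{Y}\subseteq\mathfrak{g}$ we get $S=\exp(\mathcal{Y})\subseteq\exp(\mathfrak{g})$, hence $\langle S\rangle\subseteq\langle\exp(\mathfrak{g})\rangle=G$ by \cref{correspondence}; since $\mathfrak{g}$ is simple (hence semisimple) $G$ is closed by \cref{Mostow}, so $H=\overline{\langle S\rangle}\subseteq G$, and in particular $\mathfrak{h}:=Lie(H)\subseteq\mathfrak{g}$.

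For the forward implication, assume $H=G$. The centralizer condition is then immediate from the preliminary chains. For the second condition I would argue by contradiction: if $G\cap\mathcal{B}_K$ and $\mathcal{X}$ commuted, then because $\mathrm{Ad}_g$ is an automorphism whose fixed-point set is a subalgebra containing $\mathcal{X}$, it would contain $\langle\mathcal{X}\rangle=\mathfrak{g}$, so $G\cap\mathcal{B}_K$ would commute with all of $\mathfrak{g}$. For each $X\in\mathfrak{g}$ the curve $t\mapsto\exp(tX)$ lies in $G\cap\mathcal{B}_K$ for $|t|$ small (it is in $G$ and close to the central element $\mathbb{1}$) and would satisfy $e^{t\,\mathrm{ad}_X}Y=Y$ for every $Y\in\mathfrak{g}$; differentiating at $t=0$ gives $[X,Y]=0$, so $\mathfrak{g}$ would be abelian, contradicting simplicity. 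Hence $H\cap\mathcal{B}_K=G\cap\mathcal{B}_K$ and $\mathcal{X}$ do not commute.

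For the converse, assume the centralizer condition and that $H\cap\mathcal{B}_K$ and $\mathcal{X}$ do not commute. Since $\mathfrak{h}$ is an invariant subspace of $\mathrm{Ad}_H$ and $C_{\mathfrak{gl}(k)}(\mathrm{Ad}_H)=C_{\mathfrak{gl}(k)}(\mathrm{ad}_{\mathfrak{g}})$, \cref{invariantsubspace} gives $[\mathfrak{g},\mathfrak{h}]\subseteq\mathfrak{h}$; combined with $\mathfrak{h}\subseteq\mathfrak{g}$ this makes $\mathfrak{h}$ an ideal of $\mathfrak{g}$, so simplicity forces $\mathfrak{h}=\{0\}$ or $\mathfrak{h}=\mathfrak{g}$. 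If $\mathfrak{h}=\{0\}$ then $H$ is compact and $0$-dimensional, hence finite, and \cref{finite} gives that $H\cap\mathcal{B}_K$ commutes with $\mathfrak{g}\supseteq\mathcal{X}$, contradicting the hypothesis. Therefore $\mathfrak{h}=\mathfrak{g}$, and since the identity component $H_0$ is the connected Lie subgroup with Lie algebra $\mathfrak{h}=\mathfrak{g}$, uniqueness of such subgroups gives $H_0=G$; together with $H\subseteq G$ this yields $G=H_0\subseteq H\subseteq G$, i.e. $H=G$.

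The main obstacle, and the place where simplicity does genuine work beyond the universality case, is the containment $H\subseteq G$. In \cref{universalityproblemgroup} the analogous inclusion $\mathfrak{h}\subseteq\mathfrak{k}$ is automatic, so $[\mathfrak{k},\mathfrak{h}]\subseteq\mathfrak{h}$ already makes $\mathfrak{h}$ an ideal; here I need $\mathfrak{h}\subseteq\mathfrak{g}$ to upgrade $[\mathfrak{g},\mathfrak{h}]\subseteq\mathfrak{h}$ to ``ideal of $\mathfrak{g}$'', and that is exactly what closedness of $G$ (via \cref{Mostow}, using simplicity) secures. Simplicity is then reused to rule out the abelian case in the forward direction and to split the ideal $\mathfrak{h}$ into $\{0\}$ or $\mathfrak{g}$ in the converse.
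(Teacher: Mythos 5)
Your proof is correct and follows essentially the same route as the paper: the same two preliminary centralizer chains, the inclusion $H\subseteq G$ via \cref{Mostow}, the ideal argument through \cref{invariantsubspace}, and the exclusion of $\mathfrak{h}=\{0\}$ via \cref{finite}. The only localized difference is in the forward direction's second condition: the paper derives the contradiction from the fact that $C_G(G)$ is finite for semisimple $\mathfrak{g}$ while $G\cap\mathcal{B}_K$ is infinite, whereas you differentiate the curves $t\mapsto\exp(tX)$ inside $G\cap\mathcal{B}_K$ to conclude $\mathfrak{g}$ would be abelian; both are valid, and yours is arguably more elementary since it avoids invoking finiteness of the center of a semisimple group. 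Your explicit treatment of the final step ($H_0=G$ by uniqueness of the connected subgroup, then $G=H_0\subseteq H\subseteq G$) is slightly more careful than the paper's ``hence $H=G$''.
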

\begin{proof}
By \cref{centralizergroupgenerators} and \cref{centralizerclosure}
\begin{equation*}
{C}_{\mathfrak{gl \left( k \right)}} \left( {\mathrm{Ad}}_{S} \right)
=
{C}_{\mathfrak{gl \left( k \right)}} \left( {\mathrm{Ad}}_{\left\langle S \right\rangle} \right)
=
{C}_{\mathfrak{gl \left( k \right)}} \left( {\mathrm{Ad}}_{\overline{\left\langle S \right\rangle}} \right)
=
{C}_{\mathfrak{gl \left( k \right)}} \left( {\mathrm{Ad}}_{H} \right).
\end{equation*}
Since any representation of a Lie algebra is a Lie homomorphism, by \cref{centralizeralgebragenerators}, \cref{homomorphismgenerators} and \cref{centralizercorrespondence}
\begin{equation*}
{C}_{\mathfrak{gl \left( k \right)}} \left( {\mathrm{ad}}_{\mathcal{X}} \right)
=
{C}_{\mathfrak{gl \left( k \right)}} \left( \left\langle {\mathrm{ad}}_{\mathcal{X}} \right\rangle \right)
=
{C}_{\mathfrak{gl \left( k \right)}} \left( {\mathrm{ad}}_{\left\langle \mathcal{X} \right\rangle} \right)
=
{C}_{\mathfrak{gl \left( k \right)}} \left( {\mathrm{ad}}_{\mathfrak{g}} \right)
=
{C}_{\mathfrak{gl \left( k \right)}} \left( {\mathrm{Ad}}_{G} \right).
\end{equation*}
\begin{description}
\item[$\Longrightarrow$]
\begin{equation*}
{C}_{\mathfrak{gl \left( k \right)}} \left( {\mathrm{Ad}}_{S} \right)
=
{C}_{\mathfrak{gl \left( k \right)}} \left( {\mathrm{Ad}}_{H} \right)
=
{C}_{\mathfrak{gl \left( k \right)}} \left( {\mathrm{Ad}}_{G} \right)
=
{C}_{\mathfrak{gl \left( k \right)}} \left( {\mathrm{ad}}_{\mathcal{X}} \right).
\end{equation*}
Suppose by contradiction that $H \cap {\mathcal{B}}_{K} = G \cap {\mathcal{B}}_{K}$ and $\mathcal{X}$ commute. 
Then $G \cap {\mathcal{B}}_{K}$ and $\mathfrak{g}$ commute, therefore ${\left[ G \cap {\mathcal{B}}_{K},G \right]}_{\bullet} = \left\{ \mathbb{1} \right\}$, namely $G \cap {\mathcal{B}}_{K} \subseteq {C}_{G} \left( G \right)$. 
Since $\mathfrak{g}$ is semisimple, ${C}_{G} \left( G \right)$ is finite, therefore $G \cap {\mathcal{B}}_{K}$ is finite. 
Again since $\mathfrak{g}$ is semisimple, $\mathfrak{g} \neq \left\{ 0 \right\}$, therefore $G$ is infinite, hence $G \cap {\mathcal{B}}_{K}$ is infinite, which is a contradiction.
\item[$\Longleftarrow$]
\begin{equation*}
{C}_{\mathfrak{gl \left( k \right)}} \left( {\mathrm{Ad}}_{H} \right)
=
{C}_{\mathfrak{gl \left( k \right)}} \left( {\mathrm{Ad}}_{S} \right)
=
{C}_{\mathfrak{gl \left( k \right)}} \left( {\mathrm{ad}}_{\mathcal{X}} \right)
=
{C}_{\mathfrak{gl \left( k \right)}} \left( {\mathrm{ad}}_{\mathfrak{g}} \right).
\end{equation*}
Let $\mathfrak{h}$ be the Lie algebra of $H$. Since $\mathfrak{h}$ is an invariant subspace of ${\mathrm{Ad}}_{H}$, by \cref{invariantsubspace} $\left[ \mathfrak{g},\mathfrak{h} \right] \subseteq \mathfrak{h}$. 
We note now that
\begin{equation*}
\left\langle S \right\rangle
=
\left\langle exp \left( \mathcal{Y} \right) \right\rangle
\subseteq
\left\langle exp \left( \mathfrak{g} \right) \right\rangle
=
G.
\end{equation*}
Since $\mathfrak{g}$ is semisimple, by \cref{Mostow} $G$ is closed, therefore by the monotonicity of the closure
\begin{equation*}
H
=
\overline{\left\langle S \right\rangle}
\subseteq
\overline{G}
=
G.
\end{equation*}
Therefore $\mathfrak{h} \subseteq \mathfrak{g}$. Thus $\mathfrak{h}$ is an ideal of $\mathfrak{g}$, and since $\mathfrak{g}$ is simple, either $\mathfrak{h} = \left\{ 0 \right\}$ or $\mathfrak{h} = \mathfrak{g}$. \\
Suppose by contradiction $\mathfrak{h} = \left\{ 0 \right\}$, then $H$ is finite, and by \cref{finite} $H \cap {\mathcal{B}}_{K}$ and $\mathfrak{g}$ commute. \\
Then $H \cap {\mathcal{B}}_{K}$ and $\mathcal{X}$ commute, which is a contradiction. Therefore $\mathfrak{h} = \mathfrak{g}$, hence $H = G$.
\end{description}
\end{proof}
\begin{remark}\label{no_u}
The theorem can be extended to $K = U \left( d \right)$, but there is little point in doing this.
The reason is that any \emph{perfect} subalgebra $\mathfrak{p}$ of a Lie algebra $\mathfrak{L}$ is contained in ${\mathfrak{L}}^{\prime}$ ($\mathfrak{p} = {\mathfrak{p}}^{\prime} \subseteq {\mathfrak{L}}^{\prime}$).
In our case $\mathfrak{g}$ is simple by assumption (hence perfect), therefore $\mathfrak{g} \subseteq {\mathfrak{k}}^{\prime} = \mathfrak{su} \left( d \right)$ even if $K = U \left( d \right)$.
\end{remark}
\begin{remark}
Since
\begin{equation*}
H \cap {\mathcal{B}}_{K} \text{ and } \mathcal{X} \text{ do not commute }
\iff
H \cap {\mathcal{B}}_{K} \not\subseteq {C}_{G} \left( G \right),
\end{equation*}
the reader might wonder why in \cref{subgroupuniversalityproblem} we use the LHS instead of the RHS, which is analogous to the condition in \cref{universalityproblemgroup}. \\
The reason is that we do not have access to ${C}_{G} \left( G \right)$, therefore \cref{subgroupuniversalityproblem} would not be directly applicable.
\end{remark}
\begin{remark}
We make a similar remark as for \cref{universalityproblemgroup}: since we do not have access to $H$, neither we have to $H \cap {\mathcal{B}}_{K}$, therefore \cref{subgroupuniversalityproblem} is not directly applicable, unless there exists some integer $n$ such that
\begin{equation*}
H \cap {\mathcal{B}}_{K} \text{ and } \mathcal{X} \text{ do not commute }
\iff
{S}^{n} \cap {\mathcal{B}}_{K} \text{ and } \mathcal{X} \text{ do not commute }.
\end{equation*}
It is possible that the spectral gap gives an upper bound in this case too.
\end{remark}
\begin{remark}
The only way to check if $\mathfrak{g}$ is simple is by first generating it, which we were trying to avoid.
\end{remark}
\begin{remark}
Does \cref{subgroupuniversalityproblem} hold if we relax the assumption that $\mathfrak{g}$ is simple? 
We have no clue when $\mathfrak{g}$ is semisimple, but when $\mathfrak{g}$ is \emph{not} semisimple we have found an involuted counterexample for $SU \left( 3 \right)$, which we do not report here.
\end{remark}
\subsection{Membership Problem}
We are given two finite sets of qu-$d$-it gates ${S}_{1}$ and $T$, and we ask if (and how) we can decide whether $T \subset \overline{\left\langle {S}_{1} \right\rangle}$.
Since $\overline{\left\langle {S}_{1} \right\rangle} \subseteq \overline{\left\langle {S}_{1} \cup T \right\rangle}$, we reformulate the problem and ask if we can decide whether $\overline{\left\langle {S}_{1} \right\rangle} = \overline{\left\langle {S}_{1} \cup T \right\rangle}$. \\
In the spirit of the theorems above, we start with the following necessary condition.
\begin{lemma}\label{necessarycondition}
Let $K \coloneqq SU \left( d \right)$, $\mathfrak{k}$ be the Lie algebra of $K$, 
Let ${S}_{1}$ and $T$ be subsets of $K$ and let ${S}_{2} \coloneqq {S}_{1} \cup T$. Finally let
Let ${H}_{1} \coloneqq \overline{\left\langle {S}_{1} \right\rangle}$ and ${H}_{2} \coloneqq \overline{\left\langle {S}_{2} \right\rangle}$.
\begin{equation*}
{H}_{1} = {H}_{2}
\implies
{C}_{\mathfrak{gl \left( k \right)}} \left( {\mathrm{Ad}}_{{S}_{1}} \right) = {C}_{\mathfrak{gl \left( k \right)}} \left( {\mathrm{Ad}}_{{S}_{2}} \right).
\end{equation*}
\end{lemma}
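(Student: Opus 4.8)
The plan is to show that the centralizer ${C}_{\mathfrak{gl \left( k \right)}} \left( {\mathrm{Ad}}_{{S}_{i}} \right)$ is an invariant of the closed group ${H}_{i}$ alone and does not depend on the particular generating set ${S}_{i}$. Once this is established, the hypothesis ${H}_{1} = {H}_{2}$ forces the two centralizers to agree immediately. The entire content of the statement is thus already packaged in the two preceding lemmas, and the argument reduces to assembling a chain of equalities.

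First I would invoke \cref{centralizergroupgenerators}, which asserts that passing from a subset to its set of words leaves the adjoint centralizer unchanged, so that ${C}_{\mathfrak{gl \left( k \right)}} \left( {\mathrm{Ad}}_{{S}_{i}} \right) = {C}_{\mathfrak{gl \left( k \right)}} \left( {\mathrm{Ad}}_{\left\langle {S}_{i} \right\rangle} \right)$ for $i \in \left\{ 1,2 \right\}$. Next I would apply \cref{centralizerclosure}, which says that the same centralizer is insensitive to taking the topological closure, giving ${C}_{\mathfrak{gl \left( k \right)}} \left( {\mathrm{Ad}}_{\left\langle {S}_{i} \right\rangle} \right) = {C}_{\mathfrak{gl \left( k \right)}} \left( {\mathrm{Ad}}_{\overline{\left\langle {S}_{i} \right\rangle}} \right) = {C}_{\mathfrak{gl \left( k \right)}} \left( {\mathrm{Ad}}_{{H}_{i}} \right)$. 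Together these identify the centralizer of ${\mathrm{Ad}}_{{S}_{i}}$ with the centralizer of ${\mathrm{Ad}}_{{H}_{i}}$, exactly as in the opening lines of the proofs of \cref{universalityproblemgroup} and \cref{subgroupuniversalityproblem}.

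To finish, I would simply feed in the hypothesis: if ${H}_{1} = {H}_{2}$ then ${\mathrm{Ad}}_{{H}_{1}} = {\mathrm{Ad}}_{{H}_{2}}$ as subsets of $GL \left( \mathfrak{k} \right)$, whence their centralizers coincide and the chain closes up,
\[
{C}_{\mathfrak{gl \left( k \right)}} \left( {\mathrm{Ad}}_{{S}_{1}} \right)
=
{C}_{\mathfrak{gl \left( k \right)}} \left( {\mathrm{Ad}}_{{H}_{1}} \right)
=
{C}_{\mathfrak{gl \left( k \right)}} \left( {\mathrm{Ad}}_{{H}_{2}} \right)
=
{C}_{\mathfrak{gl \left( k \right)}} \left( {\mathrm{Ad}}_{{S}_{2}} \right).
\]

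I expect no real obstacle here: the nontrivial work is carried entirely by \cref{centralizergroupgenerators} and \cref{centralizerclosure}, and what remains is bookkeeping. The only point worth flagging is that the implication genuinely runs in one direction only — equality of the adjoint centralizers is necessary but not sufficient for ${H}_{1} = {H}_{2}$ — which is precisely why the eventual membership theorem will have to supplement this condition with a dimension-type hypothesis analogous to those appearing in \cref{universalityproblemgroup} and \cref{subgroupuniversalityproblem}.
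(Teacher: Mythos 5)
Your proposal is correct and follows exactly the paper's own argument: both apply \cref{centralizergroupgenerators} and \cref{centralizerclosure} to identify ${C}_{\mathfrak{gl \left( k \right)}} \left( {\mathrm{Ad}}_{{S}_{i}} \right)$ with ${C}_{\mathfrak{gl \left( k \right)}} \left( {\mathrm{Ad}}_{{H}_{i}} \right)$ and then close the chain using ${H}_{1} = {H}_{2}$. Nothing is missing.
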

\begin{proof}
By \cref{centralizergroupgenerators} and \cref{centralizerclosure}
\begin{equation*}
{C}_{\mathfrak{gl \left( k \right)}} \left( {\mathrm{Ad}}_{{S}_{i}} \right)
=
{C}_{\mathfrak{gl \left( k \right)}} \left( {\mathrm{Ad}}_{\left\langle {S}_{i} \right\rangle} \right)
=
{C}_{\mathfrak{gl \left( k \right)}} \left( {\mathrm{Ad}}_{\overline{\left\langle {S}_{i} \right\rangle}} \right)
=
{C}_{\mathfrak{gl \left( k \right)}} \left( {\mathrm{Ad}}_{{H}_{i}} \right)
\quad
\left( i = 1,2 \right),
\end{equation*}
therefore
\begin{equation*}
{C}_{\mathfrak{gl \left( k \right)}} \left( {\mathrm{Ad}}_{{S}_{1}} \right)
=
{C}_{\mathfrak{gl \left( k \right)}} \left( {\mathrm{Ad}}_{{H}_{1}} \right)
=
{C}_{\mathfrak{gl \left( k \right)}} \left( {\mathrm{Ad}}_{{H}_{2}} \right)
=
{C}_{\mathfrak{gl \left( k \right)}} \left( {\mathrm{Ad}}_{{S}_{2}} \right).
\end{equation*}
\end{proof}
Once the necessary condition is satisfied we can use our techniques to solve the membership problem in a few particular cases. First we will assume that $\mathfrak{g}_1$ and $\mathfrak{g}_2$ are simple (in one case the assumption on $\mathfrak{g}_1$ is actually redundant). This is natural assumption from the point of view of \cref{subgroupuniversalityproblem}. We will also assume that $H_1 \cap {\mathcal{B}}_{K} \text{ and } \mathcal{X}_1$ do not commute which of course implies that $H_2 \cap {\mathcal{B}}_{K} \text{ and } \mathcal{X}_2$ do not commute. The problem splits into five cases. We are able to solve the membership problem for two inclusion diagrams below
\begin{align}\label{wecan}
\begin{matrix*}
{C}_{\mathfrak{gl \left( k \right)}} \left( {\mathrm{Ad}}_{{S}_{1}} \right) & = & {C}_{\mathfrak{gl \left( k \right)}} \left( {\mathrm{Ad}}_{{S}_{2}} \right)\qquad\\
\shortparallel &  &  \shortparallel\qquad\\
{C}_{\mathfrak{gl \left( k \right)}} \left( {\mathrm{ad}}_{{\mathcal{X}}_{1}} \right) &=&{C}_{\mathfrak{gl \left( k \right)}} \left( {\mathrm{ad}}_{{\mathcal{X}}_{2}} \right)\qquad
\end{matrix*}
&
\begin{matrix*}
{C}_{\mathfrak{gl \left( k \right)}} \left( {\mathrm{Ad}}_{{S}_{1}} \right) & = & {C}_{\mathfrak{gl \left( k \right)}} \left( {\mathrm{Ad}}_{{S}_{2}} \right)\qquad\\
\cup &  &  \shortparallel\qquad\\
{C}_{\mathfrak{gl \left( k \right)}} \left( {\mathrm{ad}}_{{\mathcal{X}}_{1}} \right) & \supset & {C}_{\mathfrak{gl \left( k \right)}} \left( {\mathrm{ad}}_{{\mathcal{X}}_{2}} \right)\qquad
\end{matrix*}.
\end{align}
For the next three diagrams our methods are insufficient to solve the membership problem.
\begin{scriptsize}
\begin{align}\label{wecant}
\begin{matrix}
{C}_{\mathfrak{gl \left( k \right)}} \left( {\mathrm{Ad}}_{{S}_{1}} \right) & = & {C}_{\mathfrak{gl \left( k \right)}} \left( {\mathrm{Ad}}_{{S}_{2}} \right)\qquad\\
\cup &  &  \cup\qquad\\
{C}_{\mathfrak{gl \left( k \right)}} \left( {\mathrm{ad}}_{{\mathcal{X}}_{1}} \right) & = & {C}_{\mathfrak{gl \left( k \right)}} \left( {\mathrm{ad}}_{{\mathcal{X}}_{2}} \right)\qquad
\end{matrix}
&
\begin{matrix}
{C}_{\mathfrak{gl \left( k \right)}} \left( {\mathrm{Ad}}_{{S}_{1}} \right) & = & {C}_{\mathfrak{gl \left( k \right)}} \left( {\mathrm{Ad}}_{{S}_{2}} \right)\qquad\\
\shortparallel &  &  \cup\qquad\\
{C}_{\mathfrak{gl \left( k \right)}} \left( {\mathrm{ad}}_{{\mathcal{X}}_{1}} \right) & \supset & {C}_{\mathfrak{gl \left( k \right)}} \left( {\mathrm{ad}}_{{\mathcal{X}}_{2}} \right)\qquad
\end{matrix}
&
\begin{matrix}
{C}_{\mathfrak{gl \left( k \right)}} \left( {\mathrm{Ad}}_{{S}_{1}} \right) & = & {C}_{\mathfrak{gl \left( k \right)}} \left( {\mathrm{Ad}}_{{S}_{2}} \right)\qquad\\
\cup &  &  \cup\qquad\\
{C}_{\mathfrak{gl \left( k \right)}} \left( {\mathrm{ad}}_{{\mathcal{X}}_{1}} \right) & \supset & {C}_{\mathfrak{gl \left( k \right)}} \left( {\mathrm{ad}}_{{\mathcal{X}}_{2}} \right)\qquad
\end{matrix}.
\end{align}
\end{scriptsize}
The next two theorems provide the solution for the cases given in (\ref{wecan}).
\begin{theorem}\label{membershipproblemgroup}
Define the same objects as in the previous lemma and let ${\mathcal{X}}_{1}$ and $\mathcal{Y}$ be subsets of $\mathfrak{k}$ such that ${S}_{1} = exp \left( {\mathcal{X}}_{1} \right)$ and $T = exp \left( \mathcal{Y} \right)$, and let ${\mathcal{X}}_{2} \coloneqq {\mathcal{X}}_{1} \cup \mathcal{Y}$.Let $\mathfrak{{g}_{1}} \coloneqq \left\langle {\mathcal{X}}_{1} \right\rangle$ and $\mathfrak{{g}_{2}} \coloneqq \left\langle {\mathcal{X}}_{2} \right\rangle$. Let ${G}_{1}$ and ${G}_{2}$ be the connected Lie subgroups of $K$ corresponding to $\mathfrak{{g}_{1}}$ and $\mathfrak{{g}_{2}}$, respectively. Finally assume $\mathfrak{{g}_{2}}$ is simple. Then
\begin{equation*}
\begin{matrix*}
{C}_{\mathfrak{gl \left( k \right)}} \left( {\mathrm{Ad}}_{{S}_{1}} \right) & = & {C}_{\mathfrak{gl \left( k \right)}} \left( {\mathrm{Ad}}_{{S}_{2}} \right)\qquad\\
\shortparallel &  &  \shortparallel\qquad\\
{C}_{\mathfrak{gl \left( k \right)}} \left( {\mathrm{ad}}_{{\mathcal{X}}_{1}} \right) &=&{C}_{\mathfrak{gl \left( k \right)}} \left( {\mathrm{ad}}_{{\mathcal{X}}_{2}} \right)\qquad
\end{matrix*}
, 
{H}_{1} \cap {\mathcal{B}}_{K} \text{ and } {\mathcal{X}}_{1} \text{ do not commute }
\implies
{H}_{1} = {H}_{2}.
\end{equation*}
\end{theorem}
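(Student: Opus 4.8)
The plan is to reduce the entire statement to the Subgroup Universality Problem already solved in \cref{subgroupuniversalityproblem}, applied twice with the \emph{larger} generated algebra $\mathfrak{g}_2 = \left\langle \mathcal{X}_2 \right\rangle$ (and its connected subgroup $G_2$) playing the role of the ambient target group in both cases. The structural observation that makes this work is that, since $\mathcal{X}_2 = \mathcal{X}_1 \cup \mathcal{Y}$, one has the chain $\mathcal{X}_1 \subseteq \mathcal{X}_2 \subseteq \left\langle \mathcal{X}_2 \right\rangle = \mathfrak{g}_2$, so that both $S_1 = \exp\left( \mathcal{X}_1 \right)$ and $S_2 = \exp\left( \mathcal{X}_2 \right)$ are exponentials of subsets of $\mathfrak{g}_2$. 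This is exactly the input format demanded by \cref{subgroupuniversalityproblem}, and the hypothesis that $\mathfrak{g}_2$ is simple is the one it requires.

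First I would apply \cref{subgroupuniversalityproblem} with the substitution $\mathcal{X} \mapsto \mathcal{X}_2$, $\mathfrak{g} \mapsto \mathfrak{g}_2$, $G \mapsto G_2$, $\mathcal{Y} \mapsto \mathcal{X}_1$, $S \mapsto S_1$, $H \mapsto H_1$. Its centralizer hypothesis $C_{\mathfrak{gl(k)}}\left( \mathrm{Ad}_{S_1} \right) = C_{\mathfrak{gl(k)}}\left( \mathrm{ad}_{\mathcal{X}_2} \right)$ is read straight off the diagram (top edge followed by the right vertical, equivalently left vertical followed by the bottom edge), and its non-commutation hypothesis holds because the very pair witnessing that $H_1 \cap \mathcal{B}_K$ and $\mathcal{X}_1$ do not commute \emph{a fortiori} witnesses that $H_1 \cap \mathcal{B}_K$ and the larger set $\mathcal{X}_2 \supseteq \mathcal{X}_1$ do not commute. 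The theorem then yields $H_1 = G_2$.

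Second I would apply \cref{subgroupuniversalityproblem} again, keeping $\mathcal{X}_2, \mathfrak{g}_2, G_2$ but now taking $\mathcal{Y} \mapsto \mathcal{X}_2$, $S \mapsto S_2$, $H \mapsto H_2$. Here the centralizer hypothesis $C_{\mathfrak{gl(k)}}\left( \mathrm{Ad}_{S_2} \right) = C_{\mathfrak{gl(k)}}\left( \mathrm{ad}_{\mathcal{X}_2} \right)$ is precisely the right vertical edge of the diagram, while the non-commutation hypothesis for $H_2$ follows from the one for $H_1$ together with the inclusion $H_1 \subseteq H_2$ (hence $H_1 \cap \mathcal{B}_K \subseteq H_2 \cap \mathcal{B}_K$). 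This gives $H_2 = G_2$, and combining the two applications produces $H_1 = G_2 = H_2$, which is the conclusion.

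I expect the only delicate point to be the bookkeeping of the substitutions into \cref{subgroupuniversalityproblem}: one must confirm that $\mathcal{X}_1 \subseteq \mathfrak{g}_2$ (so that $\mathcal{Y} = \mathcal{X}_1$ is an admissible argument) and that $S_2 = \exp\left( \mathcal{X}_2 \right)$ really coincides with $S_1 \cup T$, both of which are immediate from $\mathcal{X}_2 = \mathcal{X}_1 \cup \mathcal{Y}$ and the fact that the exponential of a union is the union of the exponentials. It is worth noting that $\mathfrak{g}_1$ and $G_1$ never enter the argument, which explains why no simplicity assumption on $\mathfrak{g}_1$ is needed in this case; only $\mathfrak{g}_2$ simple is used, precisely through the machinery internal to \cref{subgroupuniversalityproblem} (closedness of $G_2$ via \cref{Mostow}, and the ideal dichotomy via \cref{invariantsubspace} and \cref{finite}). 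If one preferred to avoid invoking \cref{subgroupuniversalityproblem} as a black box, the same steps can be carried out directly on $\mathfrak{h}_1 = \mathrm{Lie}\left( H_1 \right)$: the inclusion $H_1 \subseteq H_2 \subseteq G_2$ forces $\mathfrak{h}_1 \subseteq \mathfrak{g}_2$, \cref{invariantsubspace} makes $\mathfrak{h}_1$ an ideal of the simple $\mathfrak{g}_2$, and \cref{finite} rules out $\mathfrak{h}_1 = \{0\}$ via the non-commutation hypothesis — this is exactly where the main obstacle (securing $\mathfrak{h}_1 \subseteq \mathfrak{g}_2$ rather than merely $[\mathfrak{g}_2,\mathfrak{h}_1] \subseteq \mathfrak{h}_1$) would reside.
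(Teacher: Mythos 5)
Your proof is correct, and it takes a mildly but genuinely different route from the paper's. The paper first invokes \cref{membershipproblemalgebra} together with \cref{membershipremark} to establish $\mathfrak{g}_1 = \mathfrak{g}_2$ (the dimension condition there being automatic because $\mathfrak{g}_2$ is semisimple, so both spans of projected generators vanish), hence $G_1 = G_2$ and, crucially, $\mathfrak{g}_1$ is simple; it then applies \cref{subgroupuniversalityproblem} twice in the ``diagonal'' way, with $\mathcal{X}_i$ generating $\mathfrak{g}_i$ and $\mathcal{Y} = \mathcal{X}_i$, to get $H_1 = G_1 = G_2 = H_2$. You instead exploit the slack in \cref{subgroupuniversalityproblem} — namely that $\mathcal{Y}$ need only be a subset of $\mathfrak{g}$, not a generating set — to aim both applications at $G_2$ directly, taking $(\mathcal{X}, \mathcal{Y}) = (\mathcal{X}_2, \mathcal{X}_1)$ and then $(\mathcal{X}_2, \mathcal{X}_2)$; the required centralizer equalities are exactly the two paths through the hypothesis diagram, and the non-commutation hypotheses propagate correctly via $\mathcal{X}_1 \subseteq \mathcal{X}_2$ and $H_1 \cap \mathcal{B}_K \subseteq H_2 \cap \mathcal{B}_K$. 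What your version buys is economy: the Lie-algebra membership machinery (\cref{membershipproblemalgebra}, \cref{membershipremark}, and the projector bookkeeping) is bypassed entirely, and indeed $\mathfrak{g}_1$ and $G_1$ never appear, which makes transparent why no simplicity assumption on $\mathfrak{g}_1$ is needed. What the paper's version buys is the explicit intermediate fact $\mathfrak{g}_1 = \mathfrak{g}_2$, which is informative in its own right and ties this theorem to the algebra-level membership result; in your argument that equality is only recoverable a posteriori from $H_1 = G_2$. Both proofs stand or fall with \cref{subgroupuniversalityproblem}, so the logical dependencies are essentially the same at the group level.
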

\begin{proof}
Since $\mathfrak{{g}_{2}}$ is semisimple and
\begin{equation*}
{C}_{\mathfrak{gl \left( k \right)}} \left( {\mathrm{ad}}_{{\mathcal{X}}_{1}} \right)
=
{C}_{\mathfrak{gl \left( k \right)}} \left( {\mathrm{ad}}_{{\mathcal{X}}_{2}} \right),
\end{equation*}
by \cref{membershipproblemalgebra} and \cref{membershipremark} $\mathfrak{{g}_{1}} = \mathfrak{{g}_{2}}$. 
Since ${H}_{1} \cap {\mathcal{B}}_{K}$ and ${\mathcal{X}}_{1}$ do not commute, neither ${H}_{2} \cap {\mathcal{B}}_{K}$ and ${\mathcal{X}}_{2}$ do, and since $\mathfrak{{g}_{1}} = \mathfrak{{g}_{2}}$ is simple and
\begin{equation*}
\begin{split}
{C}_{\mathfrak{gl \left( k \right)}} \left( {\mathrm{Ad}}_{{S}_{1}} \right)
& =
{C}_{\mathfrak{gl \left( k \right)}} \left( {\mathrm{ad}}_{{\mathcal{X}}_{1}} \right)
\\
{C}_{\mathfrak{gl \left( k \right)}} \left( {\mathrm{Ad}}_{{S}_{2}} \right)
& =
{C}_{\mathfrak{gl \left( k \right)}} \left( {\mathrm{ad}}_{{\mathcal{X}}_{2}} \right),
\end{split}
\end{equation*}
by \cref{subgroupuniversalityproblem} ${H}_{1} = {G}_{1} = {G}_{2} = {H}_{2}$.
\end{proof}
\begin{theorem}\label{membershipproblemgroup2}
Define the same objects as in the previous lemma and let ${\mathcal{X}}_{1}$ and $\mathcal{Y}$ be subsets of $\mathfrak{k}$ such that ${S}_{1} = exp \left( {\mathcal{X}}_{1} \right)$ and $T = exp \left( \mathcal{Y} \right)$, and let ${\mathcal{X}}_{2} \coloneqq {\mathcal{X}}_{1} \cup \mathcal{Y}$.Let $\mathfrak{{g}_{1}} \coloneqq \left\langle {\mathcal{X}}_{1} \right\rangle$ and $\mathfrak{{g}_{2}} \coloneqq \left\langle {\mathcal{X}}_{2} \right\rangle$. Let ${G}_{1}$ and ${G}_{2}$ be the connected Lie subgroups of $K$ corresponding to $\mathfrak{{g}_{1}}$ and $\mathfrak{{g}_{2}}$, respectively. Finally assume $\mathfrak{{g}_{1}}$ and  $\mathfrak{{g}_{2}}$ are simple. Then
\begin{equation*}
\begin{matrix*}
{C}_{\mathfrak{gl \left( k \right)}} \left( {\mathrm{Ad}}_{{S}_{1}} \right) & = & {C}_{\mathfrak{gl \left( k \right)}} \left( {\mathrm{Ad}}_{{S}_{2}} \right)\qquad\\
\cup &  &  \shortparallel\qquad\\
{C}_{\mathfrak{gl \left( k \right)}} \left( {\mathrm{ad}}_{{\mathcal{X}}_{1}} \right) & \supset & {C}_{\mathfrak{gl \left( k \right)}} \left( {\mathrm{ad}}_{{\mathcal{X}}_{2}} \right)\qquad
\end{matrix*}
, 
{H}_{1} \cap {\mathcal{B}}_{K} \text{ and } {\mathcal{X}}_{1} \text{ do not commute }
\implies
{H}_{1} \neq {H}_{2}.
\end{equation*}
\end{theorem}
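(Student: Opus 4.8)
The plan is to prove the stronger statement $H_1 \subsetneq H_2$ by sandwiching the two groups between the connected subgroups $G_1$ and $G_2$, just as in the proof of \cref{membershipproblemgroup}, but now exploiting the \emph{strictness} in the bottom row. Concretely, I would establish $H_2 = G_2$, then $H_1 \subseteq G_1$, and finally $G_1 \subsetneq G_2$; chaining these gives $H_1 \subseteq G_1 \subsetneq G_2 = H_2$, whence $H_1 \neq H_2$.

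First I would obtain $H_2 = G_2$ from \cref{subgroupuniversalityproblem} applied to the index-$2$ data. Its necessary condition ${C}_{\mathfrak{gl \left( k \right)}} \left( {\mathrm{Ad}}_{{S}_{2}} \right) = {C}_{\mathfrak{gl \left( k \right)}} \left( {\mathrm{ad}}_{{\mathcal{X}}_{2}} \right)$ is exactly the right-hand vertical equality of the diagram, and $\mathfrak{g}_2$ is simple by hypothesis. The one thing to verify is the non-commutator condition at index $2$: since $S_1 \subseteq S_2$ forces $H_1 = \overline{\left\langle S_1\right\rangle} \subseteq \overline{\left\langle S_2\right\rangle} = H_2$ and $\mathcal{X}_1 \subseteq \mathcal{X}_2$, any pair witnessing that $H_1 \cap {\mathcal{B}}_K$ and $\mathcal{X}_1$ do not commute also witnesses that $H_2 \cap {\mathcal{B}}_K$ and $\mathcal{X}_2$ do not commute. \cref{subgroupuniversalityproblem} then gives $H_2 = G_2$.

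Next I would locate $H_1$ and separate the connected subgroups. Since $\mathfrak{g}_1$ is simple, hence semisimple, \cref{Mostow} makes $G_1$ closed, and because $\left\langle S_1\right\rangle = \left\langle \exp(\mathcal{X}_1)\right\rangle \subseteq \left\langle \exp(\mathfrak{g}_1)\right\rangle = G_1$ by \cref{correspondence}, monotonicity of the closure yields $H_1 = \overline{\left\langle S_1\right\rangle} \subseteq \overline{G_1} = G_1$. For the strictness $G_1 \subsetneq G_2$ I would invoke \cref{membershipproblemalgebra}: the bottom row asserts ${C}_{\mathfrak{gl \left( k \right)}} \left( {\mathrm{ad}}_{{\mathcal{X}}_{1}} \right) \neq {C}_{\mathfrak{gl \left( k \right)}} \left( {\mathrm{ad}}_{{\mathcal{X}}_{2}} \right)$, so the right-hand side of that biconditional fails and therefore $\mathfrak{g}_1 \neq \mathfrak{g}_2$; together with $\mathfrak{g}_1 \subseteq \mathfrak{g}_2$ (from $\mathcal{X}_1 \subseteq \mathcal{X}_2$) this gives $\mathfrak{g}_1 \subsetneq \mathfrak{g}_2$, and since distinct subalgebras correspond to distinct connected subgroups, $G_1 \subsetneq G_2$. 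Assembling everything, $H_1 \subseteq G_1 \subsetneq G_2 = H_2$, so $H_1 \neq H_2$.

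The step carrying the real weight is the application of \cref{subgroupuniversalityproblem} to the index-$2$ data, which is where the geometric content lives (that a simple $\mathfrak{g}_2$, the matching centralizers, and a single non-commuting witness already force $H_2$ to fill out all of $G_2$); the remaining steps are bookkeeping on \cref{Mostow}, \cref{correspondence}, and \cref{membershipproblemalgebra}. One subtlety deserves mention: the always-valid inclusion ${C}_{\mathfrak{gl \left( k \right)}} \left( {\mathrm{ad}}_{{\mathcal{X}}_{1}} \right) \subseteq {C}_{\mathfrak{gl \left( k \right)}} \left( {\mathrm{Ad}}_{{S}_{1}} \right)$, coming from $[A,{\mathrm{ad}}_X]=0 \Rightarrow [A,e^{{\mathrm{ad}}_X}]=0$, combined with the top-row and right-column equalities and the bottom-row strictness, already forces ${C}_{\mathfrak{gl \left( k \right)}} \left( {\mathrm{ad}}_{{\mathcal{X}}_{1}} \right) \subsetneq {C}_{\mathfrak{gl \left( k \right)}} \left( {\mathrm{ad}}_{{\mathcal{X}}_{1}} \right)$, so the four relations of the diagram are jointly over-determined. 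The constructive argument above is unaffected by this, as it relies only on the bottom-row strictness, the right-column equality, and the simplicity of $\mathfrak{g}_1$ and $\mathfrak{g}_2$, and it in fact proves the sharper conclusion $H_1 \subsetneq H_2$.
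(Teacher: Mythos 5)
Your proof is correct and follows essentially the same route as the paper's: it establishes $H_2 = G_2$ via \cref{subgroupuniversalityproblem} (with the same propagation of the non-commuting witness from index $1$ to index $2$), $H_1 \subseteq G_1$ via \cref{correspondence} and \cref{Mostow}, and $G_1 \subsetneq G_2$ via \cref{membershipproblemalgebra}, then chains $H_1 \subseteq G_1 \subsetneq G_2 = H_2$. Your closing observation --- that the automatic inclusion ${C}_{\mathfrak{gl}\left(\mathfrak{k}\right)}\left({\mathrm{ad}}_{\mathcal{X}_1}\right) \subseteq {C}_{\mathfrak{gl}\left(\mathfrak{k}\right)}\left({\mathrm{Ad}}_{S_1}\right)$ together with the top-row and right-column equalities and the bottom-row strictness is jointly unsatisfiable, so the theorem's hypotheses are vacuous --- is also correct and worth flagging, though it does not affect the validity of the constructive argument you give.
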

\begin{proof}
Since
\begin{equation*}
{C}_{\mathfrak{gl \left( k \right)}} \left( {\mathrm{ad}}_{{\mathcal{X}}_{1}} \right)
\neq
{C}_{\mathfrak{gl \left( k \right)}} \left( {\mathrm{ad}}_{{\mathcal{X}}_{2}} \right),
\end{equation*}
by \cref{membershipproblemalgebra} $\mathfrak{{g}_{1}} \subset \mathfrak{{g}_{2}}$, therefore ${G}_{1} \subset {G}_{2}$. 
Since ${H}_{1} \cap {\mathcal{B}}_{K}$ and ${\mathcal{X}}_{1}$ do not commute, neither ${H}_{2} \cap {\mathcal{B}}_{K}$ and ${\mathcal{X}}_{2}$ do, and since $\mathfrak{{g}_{2}}$ is simple and
\begin{equation*}
{C}_{\mathfrak{gl \left( k \right)}} \left( {\mathrm{Ad}}_{{S}_{2}} \right)
=
{C}_{\mathfrak{gl \left( k \right)}} \left( {\mathrm{ad}}_{{\mathcal{X}}_{2}} \right),
\end{equation*}
by \cref{subgroupuniversalityproblem} ${H}_{2} = {G}_{2}$.
We note now that
\begin{equation*}
\left\langle {S}_{1} \right\rangle
=
\left\langle exp \left( {\mathcal{X}}_{1} \right) \right\rangle
\subseteq
\left\langle exp \left( \mathfrak{{g}_{1}} \right) \right\rangle
=
{G}_{1}.
\end{equation*}
Since $\mathfrak{{g}_{1}}$ is semisimple, by \cref{Mostow} ${G}_{1}$ is closed, therefore by the monotonicity of the closure
\begin{equation*}
{H}_{1}
=
\overline{\left\langle {S}_{1} \right\rangle}
\subseteq
\overline{{G}_{1}}
=
{G}_{1}.
\end{equation*}
Putting all together, ${H}_{1} \subseteq {G}_{1} \subset {G}_{2} = {H}_{2}$.
\end{proof}
\begin{remark}
If we keep in mind that a necessary condition for ${H}_{1} = {H}_{2}$ is that ${H}_{2} \subseteq {G}_{1}$, we see why we cannot solve the cases corresponding to the diagrams in \eqref{wecant}: since ${H}_{2} \subset {G}_{2}$, if ${G}_{1} = {G}_{2}$ the necessary condition is trivially satisfied but then we stop, if otherwise ${G}_{1} \subset {G}_{2}$ we cannot test it with our methods.
\end{remark}
\begin{remark}
The theorems can be extended to $K = U \left( d \right)$, but again \cref{no_u} shows that we gain nothing by doing this.
\end{remark}
\section{Conclusions}
We have dealt with the Universality and Membership Problems for Lie algebras and Lie groups, together with the Subgroup Universality Problem.
The theorems for the Lie algebraic problems directly provide algorithms for solving them, but unfortunately the same cannot be said for the other problems (as already clarified in a handful of remarks).
We are indeed either making additional assumptions, like the simplicity of $\mathfrak{g}$ in the Subgroup Universality Problem, or relying on conditions which cannot be checked in general, like in the Universality Problem for Lie groups.
Moreover, even in such a simplified setting, we still have to distinguish ``good'' and ``bad'' cases in the Membership Problem for Lie groups.
Trying to overcome these limitations is therefore the most important open question which is left to future explorations.
\section{Acknowledgements}
This work was supported by National Science Centre, Poland under the grant SONATA BIS:
2015/18/E/ST1/00200
\bibliographystyle{unsrt}
\bibliography{Paper}
\addcontentsline{toc}{section}{References}
\section{Appendix}
\subsection{Prerequisites for Lie algebras}
We recall some basic definitions for a Lie algebra $\mathfrak{g}$:
$Rad \left( \mathfrak{g} \right)$, the \emph{radical} of $\mathfrak{g}$, is the maximal solvable ideal of $\mathfrak{g}$;
${\mathfrak{g}}^{\prime}$, the \emph{derived algebra} of $\mathfrak{g}$, is a shorthand for $\left[ \mathfrak{g},\mathfrak{g} \right]$;
$\mathrm{ad} : \mathfrak{g} \rightarrow \mathfrak{gl \left( g \right)}$, the \emph{adjoint representation} of $\mathfrak{g}$, is defined by ${\mathrm{ad}}_{x} \coloneqq \mathrm{ad} \left( x \right) = \left[ x,\cdot \right]$.
We start with the following inclusions.
\begin{lemma}\label{radicals}
Let $\mathfrak{g}$ be a Lie algebra.
\begin{equation*}
{C}_{\mathfrak{g}} \left( \mathfrak{g} \right)
\subseteq
{\mathfrak{g}}^{\perp}
\subseteq
Rad \left( \mathfrak{g} \right).
\end{equation*}
\end{lemma}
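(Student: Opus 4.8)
The plan is to interpret $\mathfrak{g}^{\perp}$ as the orthogonal complement of $\mathfrak{g}$ with respect to the Killing form $K(x,y)=\mathrm{tr}(\mathrm{ad}_{x}\,\mathrm{ad}_{y})$ (our standing choice of inner product), that is, as the radical of $K$, and then to establish the two inclusions separately.

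The first inclusion $C_{\mathfrak{g}}(\mathfrak{g})\subseteq\mathfrak{g}^{\perp}$ should be immediate. If $x\in C_{\mathfrak{g}}(\mathfrak{g})$ then $[x,y]=0$ for every $y\in\mathfrak{g}$, so $\mathrm{ad}_{x}=0$ as an operator on $\mathfrak{g}$. Consequently $K(x,y)=\mathrm{tr}(\mathrm{ad}_{x}\,\mathrm{ad}_{y})=0$ for all $y\in\mathfrak{g}$, which is exactly the statement $x\in\mathfrak{g}^{\perp}$.

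For the second inclusion $\mathfrak{g}^{\perp}\subseteq Rad(\mathfrak{g})$ I would first show that $\mathfrak{g}^{\perp}$ is an ideal and then that it is solvable; membership in the maximal solvable ideal then follows by maximality. The ideal property uses the invariance (associativity) of the Killing form, $K([a,b],c)=K(a,[b,c])$: for $x\in\mathfrak{g}^{\perp}$, $y\in\mathfrak{g}$ and arbitrary $z\in\mathfrak{g}$ one gets $K([x,y],z)=K(x,[y,z])=0$ since $x$ is $K$-orthogonal to all of $\mathfrak{g}$; hence $[x,y]\in\mathfrak{g}^{\perp}$ and $[\mathfrak{g},\mathfrak{g}^{\perp}]\subseteq\mathfrak{g}^{\perp}$. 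To get solvability I would invoke Cartan's criterion applied to the image $\mathfrak{m}:=\mathrm{ad}(\mathfrak{g}^{\perp})\subseteq\mathfrak{gl}(\mathfrak{g})$. For $A=\mathrm{ad}_{x}$ with $x\in\mathfrak{g}^{\perp}$ and a typical element $B=[\mathrm{ad}_{u},\mathrm{ad}_{v}]=\mathrm{ad}_{[u,v]}$ of $[\mathfrak{m},\mathfrak{m}]$ (with $u,v\in\mathfrak{g}^{\perp}$) we have $\mathrm{tr}(AB)=K(x,[u,v])=0$, again because $x\in\mathfrak{g}^{\perp}$ and $[u,v]\in\mathfrak{g}$. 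Thus $\mathrm{tr}(AB)=0$ for every $A\in\mathfrak{m}$ and $B\in[\mathfrak{m},\mathfrak{m}]$, so Cartan's criterion forces $\mathfrak{m}$ to be solvable. Since $\mathrm{ad}\colon\mathfrak{g}^{\perp}\to\mathfrak{m}$ has kernel $\mathfrak{g}^{\perp}\cap C_{\mathfrak{g}}(\mathfrak{g})$, which is abelian, and image $\mathfrak{m}$, which is solvable, the algebra $\mathfrak{g}^{\perp}$ is an extension of a solvable algebra by an abelian one and hence solvable; being a solvable ideal, it sits inside $Rad(\mathfrak{g})$.

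The main delicate point I anticipate is the solvability step: one must apply Cartan's criterion to the operator algebra $\mathrm{ad}(\mathfrak{g}^{\perp})$ rather than to $\mathfrak{g}^{\perp}$ directly, and verify the trace condition with one factor taken in the derived subalgebra. The ideal property together with the vanishing $K(\mathfrak{g}^{\perp},\mathfrak{g})=0$ is precisely what makes both trace computations collapse to zero, so the only real content is recognizing that the restriction of $K$ to $\mathfrak{g}^{\perp}$ vanishes identically and that this suffices to trigger Cartan's criterion after passing to the adjoint image.
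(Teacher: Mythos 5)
Your proof is correct, and it is the canonical argument: identify $\mathfrak{g}^{\perp}$ with the radical of the Killing form, get the first inclusion from $\mathrm{ad}_{x}=0$ for central $x$, show $\mathfrak{g}^{\perp}$ is an ideal via invariance of $K$, and deduce solvability from Cartan's criterion applied to $\mathrm{ad}(\mathfrak{g}^{\perp})$, lifting back through the central kernel. The paper states this lemma without any proof, treating it as a standard prerequisite from structure theory, so there is nothing to compare against; your write-up supplies exactly the expected details (the only implicit hypotheses are finite dimensionality and characteristic zero, which are needed for Cartan's criterion and are automatic in the paper's setting of real compact Lie algebras).
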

We are interested in the case in which the chain of inclusions ``collapses''.
\begin{definition}\label{reductivealgebra}
Let $\mathfrak{g}$ be a Lie algebra. 
$\mathfrak{g}$ is \emph{reductive} when $Rad \left( \mathfrak{g} \right) = {C}_{\mathfrak{g}} \left( \mathfrak{g} \right)$.
\end{definition}
Among the nice properties of reductive Lie algebras, we are interested in the following decomposition.
\begin{theorem}\label{reductivedecomposition}
Let $\mathfrak{g}$ be a \emph{reductive} Lie algebra.
\begin{equation*}
\mathfrak{g}
=
{C}_{\mathfrak{g}} \left( \mathfrak{g} \right) \oplus {\mathfrak{g}}^{\prime}
\text{ and }
{\mathfrak{g}}^{\prime} \text{ is semisimple }.
\end{equation*}
\end{theorem}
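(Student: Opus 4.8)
The plan is to exploit the collapse of the chain of inclusions from \cref{radicals} and then produce an ideal complement to the centre via a standard structure theorem. First I would set $Z \coloneqq {C}_{\mathfrak{g}} \left( \mathfrak{g} \right)$ and observe that reductivity (\cref{reductivealgebra}) forces the two ends of the chain in \cref{radicals} to coincide, so that $Z = {\mathfrak{g}}^{\perp} = Rad \left( \mathfrak{g} \right)$, where $\perp$ is taken with respect to the Killing form $K$. In particular $Z$ is simultaneously the centre and the radical of $\mathfrak{g}$, and it is an ideal.

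Next I would pass to the quotient. Since $Z = Rad \left( \mathfrak{g} \right)$, the algebra $\mathfrak{g} / Z$ is semisimple, hence perfect, so that $\left( \mathfrak{g} / Z \right)^{\prime} = \mathfrak{g} / Z$. Applying the quotient map $\pi : \mathfrak{g} \to \mathfrak{g} / Z$ and using that $\pi$ is a surjective Lie homomorphism, $\pi \left( {\mathfrak{g}}^{\prime} \right) = \left( \mathfrak{g} / Z \right)^{\prime} = \mathfrak{g} / Z$, which already gives the surjectivity half of the decomposition, i.e.\ $\mathfrak{g} = {\mathfrak{g}}^{\prime} + Z$.

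The hard part will be showing the sum is direct, equivalently producing a complement to $Z$ that is itself an ideal. Because $Z$ is central, $\mathfrak{g}$ acts on itself by the adjoint action effectively through the semisimple quotient $\mathfrak{g} / Z$, and $Z$ is a submodule; by Weyl's theorem on complete reducibility (valid in characteristic zero, which covers the real and complex cases relevant here) there is a complementary submodule $M$, necessarily an ideal, with $\mathfrak{g} = Z \oplus M$ as a direct sum of ideals. Alternatively one may invoke the Levi decomposition and note that the Levi subalgebra centralizes the central radical, turning the semidirect product into a direct one. This existence of an ideal complement to the centre is the only genuinely nontrivial input; everything else is formal.

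Finally I would identify $M$ with ${\mathfrak{g}}^{\prime}$ and verify semisimplicity. Since $Z$ is central, $\left[ \mathfrak{g},\mathfrak{g} \right] = \left[ M,M \right]$, and $M \cong \mathfrak{g} / Z$ is semisimple hence perfect, so $\left[ M,M \right] = M$; therefore ${\mathfrak{g}}^{\prime} = M$ and $\mathfrak{g} = Z \oplus {\mathfrak{g}}^{\prime}$, as claimed. Semisimplicity of ${\mathfrak{g}}^{\prime}$ then follows either from ${\mathfrak{g}}^{\prime} = M \cong \mathfrak{g} / Z$, or directly via Cartan's criterion: for $x \in {\mathfrak{g}}^{\prime}$ with $K \left( x,{\mathfrak{g}}^{\prime} \right) = \left\{ 0 \right\}$ one automatically has $K \left( x,Z \right) = \left\{ 0 \right\}$ because $Z = {\mathfrak{g}}^{\perp}$, whence $x \in {\mathfrak{g}}^{\perp} \cap {\mathfrak{g}}^{\prime} = Z \cap {\mathfrak{g}}^{\prime} = \left\{ 0 \right\}$, so the Killing form of ${\mathfrak{g}}^{\prime}$ is nondegenerate.
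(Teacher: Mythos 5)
Your proof is correct. Note, however, that the paper itself offers no proof of this theorem: it is stated as a standard structure result in the appendix, in the same spirit as the neighbouring facts that are simply referenced to page 284 of Bourbaki, so there is nothing in the source to compare your argument against step by step. Your derivation is a sound, self-contained version of the standard one: the collapse $C_{\mathfrak{g}}\left(\mathfrak{g}\right) = \mathfrak{g}^{\perp} = Rad\left(\mathfrak{g}\right)$ from \cref{radicals} and \cref{reductivealgebra}, semisimplicity of $\mathfrak{g}/Z$, surjectivity of $\mathfrak{g}^{\prime} + Z$, and then either Weyl complete reducibility applied to the adjoint action (which indeed factors through $\mathfrak{g}/Z$ because $Z$ is central) or the Levi decomposition to obtain an ideal complement $M$, followed by the identification $M = \left[M,M\right] = \mathfrak{g}^{\prime}$. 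The only step you leave implicit is in the Cartan-criterion variant at the end: concluding that $\mathfrak{g}^{\prime}$ is semisimple from nondegeneracy of the \emph{restricted} form uses the standard fact that the Killing form of an ideal coincides with the restriction of the Killing form of the ambient algebra; since $\mathfrak{g}^{\prime}$ is an ideal this is harmless, and your first route via $M \cong \mathfrak{g}/Z$ avoids the issue entirely.
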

If we consider \emph{real} Lie algebras, we can define an even more interesting kind of Lie algebras.
\begin{definition}\label{compactalgebra}

Let $\mathfrak{g}$ be a real Lie algebra. 
$\mathfrak{g}$ is \emph{compact} when any of the following equivalent conditions holds:
\begin{itemize}
\item $\mathfrak{g}$ is reductive and the Killing form of $\mathfrak{g}$ is negative semidefinite
\item there exists a compact Lie group with Lie algebra $\mathfrak{g}$
\item there exists a definite symmetric bilinear form $B$ such that $\forall \ x,y,z \in \mathfrak{g}$ $B \left( \left[ x,y \right],z \right) = B \left( x, \left[ y,z \right] \right)$
\end{itemize}
\end{definition}
\begin{proof}
Page 284 in \cite{Bourbaki}.
\end{proof}
Compact Lie algebras behave well under taking subalgebras.
\begin{theorem}\label{compactsubalgebra}
Subalgebras of a \emph{compact} Lie algebra are also compact.
\end{theorem}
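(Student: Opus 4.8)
The plan is to use the third characterization of compactness from \cref{compactalgebra}: the existence of a definite symmetric bilinear form $B$ obeying the invariance identity $B([x,y],z) = B(x,[y,z])$ for all $x,y,z$. This is the natural tool here because both of its ingredients -- definiteness and bracket-invariance -- descend to subalgebras essentially for free, whereas the other two characterizations do not restrict cleanly.

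First I would take a compact $\mathfrak{g}$ together with such a form $B$, and an arbitrary subalgebra $\mathfrak{h}$, and pass to the restriction $B|_{\mathfrak{h} \times \mathfrak{h}}$. This restriction is again symmetric and bilinear; it is definite because the restriction of a definite form to any subspace remains definite (a positive-definite form stays positive-definite on a subspace, and likewise in the negative-definite case); and the invariance identity persists because $\mathfrak{h}$ is closed under the bracket, so that $[x,y], [y,z] \in \mathfrak{h}$ whenever $x,y,z \in \mathfrak{h}$, while the identity $B([x,y],z) = B(x,[y,z])$ already holds for all elements of $\mathfrak{g}$. Thus $B|_{\mathfrak{h}}$ witnesses the third condition of \cref{compactalgebra} for $\mathfrak{h}$, and $\mathfrak{h}$ is compact.

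The only point requiring care -- and the reason I would avoid the first characterization -- is that the Killing form is not intrinsic to a subalgebra: the Killing form of $\mathfrak{h}$ is built from traces of operators acting on $\mathfrak{h}$ itself and is in general \emph{not} the restriction of the Killing form of $\mathfrak{g}$, while reductivity need not be inherited by arbitrary subalgebras. An argument through the first characterization would therefore be considerably more delicate, whereas the invariant-form route reduces the entire statement to the elementary observation that definiteness survives restriction to a subspace. I expect no genuine obstacle beyond this observation; the substance of the proof is simply the correct choice of characterization.
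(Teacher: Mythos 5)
Your proof is correct. The paper itself does not actually prove this statement---it only cites Bourbaki---so there is no in-paper argument to compare against; your route via the third characterization in \cref{compactalgebra} (restricting the definite invariant symmetric bilinear form to the subalgebra, where definiteness, symmetry, bilinearity and the invariance identity all survive trivially because the subalgebra is closed under the bracket) is the standard argument and fills that gap cleanly. Your side remark that the Killing form of a subalgebra is \emph{not} in general the restriction of the ambient Killing form, and that reductivity does not pass to arbitrary subalgebras, is also accurate and correctly identifies why the first characterization would be the wrong tool here.
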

\begin{proof}
Page 284 in \cite{Bourbaki}.
\end{proof}
\begin{remark}\label{reductivesubalgebra}
By the combination of \cref{compactsubalgebra}, the first condition in \cref{compactalgebra} and \cref{reductivedecomposition}, any subalgebra $\mathfrak{g}$ of a \emph{compact} Lie algebra $\mathfrak{k}$ decomposes as $\mathfrak{g} = {C}_{\mathfrak{g}} \left( \mathfrak{g} \right) \oplus {\mathfrak{g}}^{\prime}$. 
Since we will use this fact over and over again without mentioning, it is useful to clarify it once for all.
\end{remark}
We also have a useful characterization at the level of Lie groups.
\begin{theorem}\label{compactgroup}
Let $G$ be a Lie group whose group of connected components is finite. 
Let $\mathfrak{g}$ be the Lie algebra of $G$. 
The following conditions are equivalent:
\begin{itemize}
\item $\mathfrak{g}$ is compact
\item there exists a definite symmetric bilinear form $B$ such that $\forall \ g \in G$ $\forall \ x,y \in \mathfrak{g}$ $B \left( {\mathrm{Ad}}_{g} x,{\mathrm{Ad}}_{g} y \right) = B \left( x,y \right)$
\end{itemize}
\end{theorem}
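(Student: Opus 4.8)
The plan is to prove the two implications separately, using the third equivalent characterization of a compact Lie algebra in \cref{compactalgebra}—the existence of a definite symmetric \emph{associative} form $B_0$ satisfying $B_0([x,y],z) = B_0(x,[y,z])$—as the bridge between the infinitesimal ($\mathrm{ad}$-level) and the group-level ($\mathrm{Ad}$-level) statements.

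For the implication that an $\mathrm{Ad}_G$-invariant definite form forces compactness, I would start from a definite symmetric $B$ with $B(\mathrm{Ad}_g x, \mathrm{Ad}_g y) = B(x,y)$ for all $g \in G$, specialize to $g = \exp(tz)$, and differentiate at $t = 0$. Using $\frac{d}{dt}\big|_{t=0}\mathrm{Ad}_{\exp(tz)}x = [z,x]$ this yields $B([z,x],y) + B(x,[z,y]) = 0$, which is exactly the associativity condition once rewritten via the symmetry of $B$ and the antisymmetry of the bracket. Hence $B$ is a definite symmetric associative form and $\mathfrak{g}$ is compact by \cref{compactalgebra}. Note this direction uses only one-parameter subgroups, so the finiteness of the component group plays no role here.

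For the converse I would first produce, from \cref{compactalgebra}, a definite symmetric associative form $B_0$. Integrating the infinitesimal invariance along one-parameter subgroups—i.e. checking that $t \mapsto B_0(\mathrm{Ad}_{\exp(tz)}x, \mathrm{Ad}_{\exp(tz)}y)$ has vanishing derivative by the identity just established—promotes $B_0$ to an $\mathrm{Ad}_{G_0}$-invariant form, where $G_0$ is the identity component (which is generated by such exponentials). To upgrade invariance from $G_0$ to all of $G$ I would average over the finite quotient $G/G_0$: choosing coset representatives $g_1,\dots,g_n$, I set $B(x,y) := \frac1n\sum_i B_0(\mathrm{Ad}_{g_i}x, \mathrm{Ad}_{g_i}y)$. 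Each summand is definite of a common sign, since $\mathrm{Ad}_{g_i}$ is invertible, so the average $B$ is again definite and symmetric.

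The main obstacle, and the only place the finiteness hypothesis enters, is verifying that this averaged form is genuinely $\mathrm{Ad}_G$-invariant and independent of the chosen representatives. The key observation is that, because $G_0$ is normal and $B_0$ is $\mathrm{Ad}_{G_0}$-invariant, the function $g \mapsto B_0(\mathrm{Ad}_g x, \mathrm{Ad}_g y)$ descends to a function of the coset $gG_0$: for $h \in G_0$ one writes $gh = (ghg^{-1})g$ with $ghg^{-1}\in G_0$ and invokes invariance of $B_0$. Right translation by any $g' \in G$ then permutes the left cosets $G/G_0$, so the sum is merely rearranged and $B(\mathrm{Ad}_{g'}x, \mathrm{Ad}_{g'}y) = B(x,y)$. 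With an infinite component group this averaging breaks down and one would instead need a compactness (Haar-measure) argument, which is precisely why the hypothesis on the component group is imposed.
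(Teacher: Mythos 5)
Your argument is correct, but be aware that the paper does not actually supply a proof of \cref{compactgroup}: its ``proof'' is a bare citation to page 284 of Bourbaki, so there is nothing in the text to compare against step by step. What you have written is the standard self-contained argument that the cited source carries out. Both directions check out: differentiating $t \mapsto B\left( \mathrm{Ad}_{\exp(tz)}x, \mathrm{Ad}_{\exp(tz)}y \right)$ at $t=0$ gives $B\left( [z,x],y \right) + B\left( x,[z,y] \right) = 0$, which after relabelling and using antisymmetry of the bracket is exactly the associativity condition of \cref{compactalgebra}; conversely, integrating that identity along one-parameter subgroups gives $\mathrm{Ad}_{G_0}$-invariance (using that the identity component is generated by exponentials), and your finite average over $G/G_0$ is well defined because, as you observe, normality of $G_0$ makes $g \mapsto B_0\left( \mathrm{Ad}_g x, \mathrm{Ad}_g y \right)$ constant on left cosets, while right translation permutes those cosets; definiteness survives the average since each $\mathrm{Ad}_{g_i}$ is invertible and all summands have a common sign. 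You also correctly isolate where the finiteness of the component group is used (only in the averaging step, which for a general compact component group would be replaced by Haar integration — Weyl's unitarian trick). The only thing your write-up buys beyond the paper is self-containedness; the only thing the paper's citation buys is brevity.
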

\begin{proof}
Page 284 in \cite{Bourbaki}.
\end{proof}
We state now without proof the following not-well-known results of linear algebra.
\begin{lemma}\label{doubleperp}
Let $V$ be a real vector space. 
Let $\left( \cdot,\cdot \right)$ be a symmetric bilinear form.
Let $\bar{V}$ be a subspace of $V$ such that $V = {V}^{\perp} \oplus \bar{V}$. 
Let $W$ be a subspace of $V$.
\begin{equation*}
{\left( \bar{V} \cap {W}^{\perp} \right)}^{\perp}
=
{V}^{\perp} + W.
\end{equation*}
\end{lemma}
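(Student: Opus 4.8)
The plan is to prove the two inclusions of $(\bar{V} \cap W^\perp)^\perp = V^\perp + W$ separately, with the inclusion $V^\perp + W \subseteq (\bar{V} \cap W^\perp)^\perp$ being essentially immediate and the reverse inclusion carrying all the work. For the easy direction I would note that $V^\perp$ is orthogonal to every vector, so $V^\perp \subseteq (\bar{V} \cap W^\perp)^\perp$, while $W \subseteq (\bar{V} \cap W^\perp)^\perp$ follows from symmetry of the form: every $w \in W$ is orthogonal to each element of $W^\perp$, in particular to each element of $\bar{V} \cap W^\perp$. Since the right-hand side is a subspace, it then contains the whole sum $V^\perp + W$.

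For the reverse inclusion the genuine obstacle is that $(\cdot,\cdot)$ may be \emph{degenerate}, so the textbook identity $(U^\perp)^\perp = U$ is not available on $V$ itself; this is why the hypothesis singles out a complement $\bar{V}$ of the radical $V^\perp$. My strategy is to transfer the computation to $\bar{V}$, where the form is nondegenerate. First I would verify this nondegeneracy: if $v \in \bar{V}$ is orthogonal to all of $\bar{V}$, then since every vector is also orthogonal to $V^\perp$, the vector $v$ is orthogonal to $V = V^\perp \oplus \bar{V}$, hence $v \in V^\perp$; as $\bar{V} \cap V^\perp = \{0\}$ by hypothesis, $v = 0$.

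Next I would introduce the projection $P \colon V \to \bar{V}$ along $V = V^\perp \oplus \bar{V}$ and record the key identity $(v,u) = (v,Pu)$ for all $v \in \bar{V}$ and $u \in V$, valid because $u - Pu \in V^\perp$. This yields two bookkeeping facts, for an arbitrary subspace $U$: writing $\perp_{\bar{V}}$ for the orthogonal complement taken \emph{inside} $\bar{V}$, one has $U^\perp \cap \bar{V} = (P(U))^{\perp_{\bar{V}}}$, and moreover $U^\perp = V^\perp \oplus (U^\perp \cap \bar{V})$ (the inclusion $V^\perp \subseteq U^\perp$ is clear, and any $x \in U^\perp$ splits as $x_{V^\perp} + x_{\bar{V}}$ with $x_{\bar{V}} \in U^\perp \cap \bar{V}$).

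Applying the first fact with $U = W$ gives $\bar{V} \cap W^\perp = (P(W))^{\perp_{\bar{V}}}$. Now the nondegeneracy of the form on the finite-dimensional space $\bar{V}$ lets me invoke the double-perp identity $\bigl(B^{\perp_{\bar{V}}}\bigr)^{\perp_{\bar{V}}} = B$ with $B = P(W)$, which gives $(\bar{V} \cap W^\perp)^{\perp_{\bar{V}}} = P(W)$. Feeding this into the structure identity with $U = \bar{V} \cap W^\perp$ (for which $U \subseteq \bar{V}$, so the $\bar{V}$-perp agrees with the relevant intersection) yields $(\bar{V} \cap W^\perp)^\perp = V^\perp \oplus P(W)$. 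The proof then closes by the elementary remark $V^\perp + P(W) = V^\perp + W$, which holds because $w - Pw \in V^\perp$ for every $w \in W$. The main obstacle throughout is precisely the degeneracy of the form, so isolating the radical $V^\perp$ and reducing to the nondegenerate restriction on $\bar{V}$ is the crux; I would also flag the (harmless in this Lie-theoretic context) assumption that $V$ is finite-dimensional, which is what the double-perp step requires.
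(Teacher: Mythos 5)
The paper states \cref{doubleperp} explicitly \emph{without proof} (it is introduced with ``We state now without proof the following not-well-known results of linear algebra''), so there is no in-paper argument to compare yours against; I can only assess your proposal on its own terms. It is correct and complete. The easy inclusion $V^{\perp} + W \subseteq (\bar{V} \cap W^{\perp})^{\perp}$ is handled properly, and the reverse inclusion is carried out by exactly the natural route: you verify that the form restricted to $\bar{V}$ is nondegenerate, use the projection $P$ onto $\bar{V}$ along $V^{\perp}$ together with the identity $(v,u)=(v,Pu)$ to get $U^{\perp} = V^{\perp} \oplus (P(U))^{\perp_{\bar{V}}}$, apply the double-perp identity on the nondegenerate space $\bar{V}$ to obtain $(\bar{V}\cap W^{\perp})^{\perp} = V^{\perp} \oplus P(W)$, and close with $V^{\perp}+P(W)=V^{\perp}+W$. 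Each step checks out. Your flag about finite-dimensionality is also well taken: the double-perp identity $\bigl(B^{\perp_{\bar{V}}}\bigr)^{\perp_{\bar{V}}}=B$ genuinely requires $\dim \bar{V} < \infty$, and the lemma as stated (for an arbitrary real vector space) is false without that hypothesis, though this is harmless here since the lemma is only ever applied to finite-dimensional compact Lie algebras.
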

\begin{lemma}\label{semidefinite}
Let $V$ be a real vector space. 
Let $\left( \cdot,\cdot \right)$ be a \emph{semidefinite} symmetric bilinear form. 
Let $W$ be a subspace of $V$.
\begin{equation*}
V = W \oplus {W}^{\perp}
\iff
{V}^{\perp} \cap W = \left\{ 0 \right\}.
\end{equation*}
\end{lemma}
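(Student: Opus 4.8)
The plan is to prove the two implications separately; only the backward direction will require the semidefiniteness hypothesis, and I assume throughout that $\dim V < \infty$, as is the case in all our applications. The common thread is the elementary inclusion $V^\perp \subseteq W^\perp$, valid for any symmetric form: a vector orthogonal to all of $V$ is a fortiori orthogonal to $W$.

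For the forward implication, suppose $V = W \oplus W^\perp$; then in particular $W \cap W^\perp = \{0\}$. Combining this with $V^\perp \subseteq W^\perp$ gives $V^\perp \cap W \subseteq W^\perp \cap W = \{0\}$, which is the desired conclusion. This direction uses nothing about the form beyond symmetry.

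For the backward implication I assume $V^\perp \cap W = \{0\}$. The crux is a consequence of semidefiniteness: if the form is (say) positive semidefinite, then the Cauchy--Schwarz inequality $|(v,u)| \le \sqrt{(v,v)}\,\sqrt{(u,u)}$ shows that $(v,v)=0$ forces $(v,u)=0$ for all $u$, so the ``null cone'' $\{v : (v,v)=0\}$ coincides with the radical $V^\perp$ (for a negative semidefinite form one argues with $-(\cdot,\cdot)$). I would use this to check the two defining properties of a direct sum. First, directness: if $v \in W \cap W^\perp$ then $(v,v)=0$, since $v$ is orthogonal to $W \ni v$; hence $v \in V^\perp$ by the observation, so $v \in V^\perp \cap W = \{0\}$. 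Second, spanning: here I pass to the quotient $\bar V := V/V^\perp$, on which the form descends to a nondegenerate --- and still semidefinite, hence definite --- inner product. Writing $\pi : V \to \bar V$ for the projection and $\bar W := \pi(W)$, a direct check gives $W^\perp = \pi^{-1}(\bar W^{\perp})$, where $\bar W^{\perp}$ is the orthogonal complement in $\bar V$. Since $\bar V$ carries an inner product, the standard orthogonal decomposition $\bar V = \bar W \oplus \bar W^{\perp}$ holds; as $\pi$ is surjective this yields $V = W + W^\perp + V^\perp$, and using $V^\perp \subseteq W^\perp$ we get $W + W^\perp = V$. Together with directness this gives $V = W \oplus W^\perp$.

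The step that genuinely needs the hypothesis --- and hence the main obstacle --- is the spanning property: it is exactly there that the descended form must be definite, so that $\bar V = \bar W \oplus \bar W^{\perp}$ is guaranteed for \emph{every} subspace $\bar W$. For an indefinite form this can fail, since an isotropic vector of $W$ lying outside $V^\perp$ would contribute a nonzero element to $W \cap W^\perp$; this is precisely why the lemma is special to semidefinite forms. An equivalent route avoiding the quotient is a dimension count based on the identity $\dim W^\perp = \dim V - \dim W + \dim(W \cap V^\perp)$: under $W \cap V^\perp = \{0\}$ this reads $\dim W^\perp = \dim V - \dim W$, which combined with the directness $W \cap W^\perp = \{0\}$ forces $W \oplus W^\perp = V$ on dimensional grounds.
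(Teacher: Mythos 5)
Your proof is correct, but there is nothing in the paper to compare it against: the authors introduce this lemma with the phrase ``We state now without proof the following not-well-known results of linear algebra,'' so \cref{semidefinite} is asserted without justification. Your argument fills that gap soundly. The forward direction via $V^{\perp}\subseteq W^{\perp}$ is immediate; in the backward direction the two essential uses of semidefiniteness are exactly where you locate them, namely that Cauchy--Schwarz identifies the null cone with the radical (giving $W\cap W^{\perp}\subseteq V^{\perp}\cap W=\{0\}$) and that the induced form on $V/V^{\perp}$ is definite (giving the spanning). Both your quotient argument and the alternative dimension count $\dim W^{\perp}=\dim V-\dim W+\dim\left(W\cap V^{\perp}\right)$ are valid. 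Your explicit restriction to $\dim V<\infty$ is not merely cosmetic --- the backward implication genuinely fails in infinite dimensions even for definite forms --- and it is harmless here, since the lemma is only applied with $V$ a compact (hence finite-dimensional) Lie algebra $\mathfrak{k}$.
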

We then apply \cref{doubleperp} and \cref{semidefinite} for Lie algebras.
\begin{lemma}\label{centralizerperp}

Let $\mathfrak{k}$ be a \emph{reductive} Lie algebra. 
Let $\mathcal{X}$ be a subset of $\mathfrak{k}$.
\begin{equation*}
{\left( {\mathfrak{k}}^{\prime} \cap {{C}_{\mathfrak{k}} \left( \mathcal{X} \right)}^{\perp} \right)}^{\perp}
=
{C}_{\mathfrak{k}} \left( \mathcal{X} \right).
\end{equation*}
\end{lemma}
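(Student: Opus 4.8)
The plan is to reduce the statement to the purely linear-algebraic \cref{doubleperp}, which with the Killing form as the bilinear form $(\cdot,\cdot)$ already has almost exactly the shape of the claimed identity. Concretely, I would take $V = \mathfrak{k}$, let $W = C_{\mathfrak{k}}(\mathcal{X})$, and choose the complementary subspace $\bar{V} = \mathfrak{k}'$. Then \cref{doubleperp} gives immediately $\bigl( \mathfrak{k}' \cap C_{\mathfrak{k}}(\mathcal{X})^{\perp} \bigr)^{\perp} = \mathfrak{k}^{\perp} + C_{\mathfrak{k}}(\mathcal{X})$, and the entire task becomes justifying that this choice of $\bar{V}$ is legitimate and that the right-hand side collapses to $C_{\mathfrak{k}}(\mathcal{X})$.

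The first thing I would verify is the hypothesis $\mathfrak{k} = \mathfrak{k}^{\perp} \oplus \bar{V}$ of \cref{doubleperp}. Here the key observation is that for a reductive Lie algebra the chain of inclusions in \cref{radicals}, namely $C_{\mathfrak{k}}(\mathfrak{k}) \subseteq \mathfrak{k}^{\perp} \subseteq \operatorname{Rad}(\mathfrak{k})$, collapses: by definition of reductive (\cref{reductivealgebra}) the two ends coincide, $C_{\mathfrak{k}}(\mathfrak{k}) = \operatorname{Rad}(\mathfrak{k})$, forcing $\mathfrak{k}^{\perp} = C_{\mathfrak{k}}(\mathfrak{k})$. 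On the other hand \cref{reductivedecomposition} supplies the decomposition $\mathfrak{k} = C_{\mathfrak{k}}(\mathfrak{k}) \oplus \mathfrak{k}'$. Combining the two gives $\mathfrak{k} = \mathfrak{k}^{\perp} \oplus \mathfrak{k}'$, so that $\bar{V} = \mathfrak{k}'$ is indeed a complement of $\mathfrak{k}^{\perp}$ and \cref{doubleperp} applies.

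With \cref{doubleperp} in hand the proof finishes by simplifying $\mathfrak{k}^{\perp} + C_{\mathfrak{k}}(\mathcal{X})$. Since $\mathfrak{k}^{\perp} = C_{\mathfrak{k}}(\mathfrak{k})$ is the center of $\mathfrak{k}$, it commutes with every element of $\mathfrak{k}$ and in particular with every element of $\mathcal{X}$, so $C_{\mathfrak{k}}(\mathfrak{k}) \subseteq C_{\mathfrak{k}}(\mathcal{X})$. Hence the sum $\mathfrak{k}^{\perp} + C_{\mathfrak{k}}(\mathcal{X}) = C_{\mathfrak{k}}(\mathfrak{k}) + C_{\mathfrak{k}}(\mathcal{X})$ equals $C_{\mathfrak{k}}(\mathcal{X})$, yielding $\bigl( \mathfrak{k}' \cap C_{\mathfrak{k}}(\mathcal{X})^{\perp} \bigr)^{\perp} = C_{\mathfrak{k}}(\mathcal{X})$ as desired.

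I expect the only genuinely delicate point to be the identification $\mathfrak{k}^{\perp} = C_{\mathfrak{k}}(\mathfrak{k})$ together with the matching of $\bar{V}$ to the hypothesis of \cref{doubleperp}; once the reductive decomposition is invoked correctly, everything else is formal. Note that \cref{semidefinite} is not strictly needed, since \cref{doubleperp} requires only a symmetric bilinear form and the Killing form of a reductive algebra need not be semidefinite; the direct-sum hypothesis is obtained directly from \cref{reductivedecomposition} rather than from a nondegeneracy argument.
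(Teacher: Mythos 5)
Your proof is correct and follows essentially the same route as the paper: both apply \cref{doubleperp} with $V = \mathfrak{k}$, $\bar{V} = \mathfrak{k}'$, $W = C_{\mathfrak{k}}(\mathcal{X})$, using reductivity to get $\mathfrak{k} = \mathfrak{k}^{\perp} \oplus \mathfrak{k}'$ and $\mathfrak{k}^{\perp} = C_{\mathfrak{k}}(\mathfrak{k})$, and then collapse $C_{\mathfrak{k}}(\mathfrak{k}) + C_{\mathfrak{k}}(\mathcal{X})$ to $C_{\mathfrak{k}}(\mathcal{X})$. You merely spell out the justifications (the collapse of the chain in \cref{radicals} and the inclusion $C_{\mathfrak{k}}(\mathfrak{k}) \subseteq C_{\mathfrak{k}}(\mathcal{X})$) that the paper leaves implicit.
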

\begin{proof}
Since $\mathfrak{k}$ is reductive, $\mathfrak{k} = {\mathfrak{k}}^{\perp} \oplus {\mathfrak{k}}^{\prime}$ and ${\mathfrak{k}}^{\perp} = {C}_{\mathfrak{k}} \left( \mathfrak{k} \right)$, therefore by \cref{doubleperp}
\begin{equation*}
{\left( {\mathfrak{k}}^{\prime} \cap {{C}_{\mathfrak{k}} \left( \mathcal{X} \right)}^{\perp} \right)}^{\perp}
=
{\mathfrak{k}}^{\perp} + {C}_{\mathfrak{k}} \left( \mathcal{X} \right)
=
{C}_{\mathfrak{k}} \left( \mathfrak{k} \right) + {C}_{\mathfrak{k}} \left( \mathcal{X} \right)
=
{C}_{\mathfrak{k}} \left( \mathcal{X} \right).
\end{equation*}
\end{proof}
\begin{theorem}\label{compactdecomposition}

Let $\mathfrak{k}$ be a \emph{compact} Lie algebra. 
Let $\mathcal{X}$ be a subset of $\mathfrak{k}$.
\begin{equation*}
\mathfrak{k}
=
{C}_{\mathfrak{k}} \left( \mathcal{X} \right) \oplus {\mathfrak{k}}^{\prime} \cap {{C}_{\mathfrak{k}} \left( \mathcal{X} \right)}^{\perp}.
\end{equation*}
\end{theorem}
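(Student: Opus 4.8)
The plan is to read the claimed decomposition as nothing more than an orthogonal-complement splitting of $\mathfrak{k}$ with respect to the Killing form $K$, and then to reduce everything to the abstract criterion \cref{semidefinite}. Since $\mathfrak{k}$ is compact, the first characterization in \cref{compactalgebra} tells us that $\mathfrak{k}$ is reductive and that $K$ is negative semidefinite, hence semidefinite; this is exactly the hypothesis under which \cref{semidefinite} operates, so we are entitled to use it with $V = \mathfrak{k}$ and the form $K$.

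The first step is to name the relevant subspace. Set $W \coloneqq \mathfrak{k}' \cap C_{\mathfrak{k}}(\mathcal{X})^{\perp}$, which is the second summand in the target formula. Because $\mathfrak{k}$ is reductive, \cref{centralizerperp} applies verbatim and yields $W^{\perp} = C_{\mathfrak{k}}(\mathcal{X})$. Thus the desired identity $\mathfrak{k} = C_{\mathfrak{k}}(\mathcal{X}) \oplus \bigl(\mathfrak{k}' \cap C_{\mathfrak{k}}(\mathcal{X})^{\perp}\bigr)$ is literally the statement $\mathfrak{k} = W^{\perp} \oplus W$, and since direct sums are symmetric in their summands, it suffices to establish $\mathfrak{k} = W \oplus W^{\perp}$.

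By \cref{semidefinite} this splitting is equivalent to the single condition $\mathfrak{k}^{\perp} \cap W = \{0\}$, so the crux of the argument is to verify this intersection is trivial. Here I would assemble the reductive-decomposition facts: on one hand, reductivity forces $\mathfrak{k}^{\perp} = C_{\mathfrak{k}}(\mathfrak{k})$ (the chain $C_{\mathfrak{k}}(\mathfrak{k}) \subseteq \mathfrak{k}^{\perp} \subseteq \mathrm{Rad}(\mathfrak{k})$ of \cref{radicals} collapses, because $\mathrm{Rad}(\mathfrak{k}) = C_{\mathfrak{k}}(\mathfrak{k})$ by \cref{reductivealgebra}); on the other hand, \cref{reductivedecomposition} gives the \emph{direct} sum $\mathfrak{k} = C_{\mathfrak{k}}(\mathfrak{k}) \oplus \mathfrak{k}'$, whence $C_{\mathfrak{k}}(\mathfrak{k}) \cap \mathfrak{k}' = \{0\}$. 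Combining these, and using $W \subseteq \mathfrak{k}'$, I obtain $\mathfrak{k}^{\perp} \cap W = C_{\mathfrak{k}}(\mathfrak{k}) \cap W \subseteq C_{\mathfrak{k}}(\mathfrak{k}) \cap \mathfrak{k}' = \{0\}$.

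Feeding this back into \cref{semidefinite} gives $\mathfrak{k} = W \oplus W^{\perp}$, and substituting $W = \mathfrak{k}' \cap C_{\mathfrak{k}}(\mathcal{X})^{\perp}$ and $W^{\perp} = C_{\mathfrak{k}}(\mathcal{X})$ produces the asserted decomposition. I do not expect a genuine obstacle here: every ingredient is already in place, and the only point requiring a little care is the bookkeeping that identifies $\mathfrak{k}^{\perp}$ with $C_{\mathfrak{k}}(\mathfrak{k})$ and exploits the transversality of $C_{\mathfrak{k}}(\mathfrak{k})$ and $\mathfrak{k}'$ coming from reductivity. The mild subtlety worth flagging is that $K$ is only semidefinite (not definite), which is precisely why one must route the argument through \cref{semidefinite} rather than appealing to a naive ``$V = W \oplus W^{\perp}$ for inner-product spaces''.
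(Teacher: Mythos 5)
Your proposal is correct and follows essentially the same route as the paper's proof: establish $\mathfrak{k}^{\perp} \cap \left( \mathfrak{k}^{\prime} \cap {C}_{\mathfrak{k}} \left( \mathcal{X} \right)^{\perp} \right) = \left\{ 0 \right\}$ from reductivity, invoke \cref{semidefinite} with the semidefinite Killing form to obtain the orthogonal splitting, and identify the complement via \cref{centralizerperp}. The only difference is cosmetic ordering (you identify $W^{\perp} = {C}_{\mathfrak{k}} \left( \mathcal{X} \right)$ first, the paper last) and that you spell out the intersection-triviality step in more detail than the paper does.
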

\begin{proof}

Since $\mathfrak{k}$ is reductive
\begin{equation*}
{\mathfrak{k}}^{\perp} \cap \left( {\mathfrak{k}}^{\prime} \cap {{C}_{\mathfrak{k}} \left( \mathcal{X} \right)}^{\perp} \right)
=
\left( {\mathfrak{k}}^{\perp} \cap {\mathfrak{k}}^{\prime} \right) \cap {{C}_{\mathfrak{k}} \left( \mathcal{X} \right)}^{\perp}
=
\left\{ 0 \right\}.
\end{equation*}
Since the Killing form of $\mathfrak{k}$ is semidefinite, by \cref{semidefinite}
\begin{equation*}
\mathfrak{k}
=
{\left( {\mathfrak{k}}^{\prime} \cap {{C}_{\mathfrak{k}} \left( \mathcal{X} \right)}^{\perp} \right)}^{\perp} \oplus {\mathfrak{k}}^{\prime} \cap {{C}_{\mathfrak{k}} \left( \mathcal{X} \right)}^{\perp}.
\end{equation*}
Again since $\mathfrak{k}$ is reductive, by \cref{centralizerperp}
\begin{equation*}
\mathfrak{k}
=
{C}_{\mathfrak{k}} \left( \mathcal{X} \right) \oplus {\mathfrak{k}}^{\prime} \cap {{C}_{\mathfrak{k}} \left( \mathcal{X} \right)}^{\perp}.
\end{equation*}
\end{proof}
\subsection{Prerequisites for Lie groups}
We recall the so-called adjoint representation $\mathrm{Ad}:K\rightarrow GL(\mathfrak{k})$ which for matrix Lie groups is defined by the conjugation $\mathrm{Ad}_g(X):=gXg^{-1}$.
We need the following lemmas.
\begin{lemma}\label{parameter}

Let $V$ be a real vector space. 
Let $A$ and $B$ be endomorphisms of $V$.
\begin{equation*}
\left[ A,B \right] = 0
\iff
\forall \ t \in \mathbb{R} \left[ A,{e}^{t B} \right] = 0.
\end{equation*}
\end{lemma}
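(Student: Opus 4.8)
The plan is to prove the two implications separately, both resting on the power series $e^{tB} = \sum_{n=0}^{\infty} \frac{t^n}{n!} B^n$, which converges because in the applications $V$ is finite dimensional (here $V = \mathfrak{gl}(\mathfrak{k})$).

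For the forward direction, I would assume $[A,B]=0$ and argue by induction that $A$ commutes with every power of $B$, since $A B^n = B A B^{n-1} = \cdots = B^n A$. Hence $A$ commutes with each partial sum $\sum_{n=0}^{N} \frac{t^n}{n!} B^n$, and because $A$ acts continuously and the series converges, passing to the limit gives $[A,e^{tB}]=0$ for every $t\in\mathbb{R}$.

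For the backward direction, I would assume $[A,e^{tB}]=0$ for all $t$, rewrite it as $e^{-tB}Ae^{tB}=A$, and differentiate in $t$. Using $\frac{d}{dt}e^{tB}=Be^{tB}=e^{tB}B$ one obtains
\begin{equation*}
\frac{d}{dt}\bigl(e^{-tB}Ae^{tB}\bigr)=e^{-tB}(AB-BA)e^{tB}=e^{-tB}[A,B]e^{tB}.
\end{equation*}
Since the left-hand side is the derivative of the constant map $t\mapsto A$, it vanishes identically, and evaluating at $t=0$ yields $[A,B]=0$. (Equivalently, one may differentiate $[A,e^{tB}]=0$ directly and use $\frac{d}{dt}\big|_{t=0}e^{tB}=B$ to read off $[A,B]=0$ at once.)

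The only real point requiring care is analytic rather than algebraic: justifying convergence of the exponential series and the interchange of $A$ (resp.\ differentiation) with the series. Both are immediate once $V$ is finite dimensional, so no Lie-theoretic structure is needed and I expect no genuine obstacle here.
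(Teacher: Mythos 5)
Your proof is correct. The paper actually states \cref{parameter} without any proof at all, so there is nothing to compare against; your argument is the standard one (induction on powers of $B$ plus continuity for the easy direction, differentiation of $t \mapsto [A, e^{tB}]$ at $t=0$ for the converse), and you rightly flag that finite-dimensionality of $V$ is what makes the exponential series and the limit interchanges unproblematic -- which is the only hypothesis the lemma's literal statement quietly suppresses.
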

\begin{lemma}\label{centralizerexponential}
Let $\mathfrak{k}$ be a Lie algebra. 
Let $W$ be a subspace of $\mathfrak{k}$.
\begin{equation*}
{C}_{\mathfrak{gl \left( k \right)}} \left( {\mathrm{ad}}_{W} \right)
=
{C}_{\mathfrak{gl \left( k \right)}} \left( {e}^{{\mathrm{ad}}_{W}} \right).
\end{equation*}
\end{lemma}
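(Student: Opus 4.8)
The plan is to prove both inclusions elementwise, reducing each to \cref{parameter} applied with $B = {\mathrm{ad}}_{w}$ for the various $w \in W$. Fix an arbitrary $A \in \mathfrak{gl}\left(\mathfrak{k}\right)$; by definition $A \in {C}_{\mathfrak{gl \left( k \right)}} \left( {\mathrm{ad}}_{W} \right)$ means $\left[ A, {\mathrm{ad}}_{w} \right] = 0$ for every $w \in W$, while $A \in {C}_{\mathfrak{gl \left( k \right)}} \left( {e}^{{\mathrm{ad}}_{W}} \right)$ means $\left[ A, {e}^{{\mathrm{ad}}_{w}} \right] = 0$ for every $w \in W$. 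So the whole statement is the claim that these two families of commutation relations are equivalent.

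For the inclusion ${C}_{\mathfrak{gl \left( k \right)}} \left( {\mathrm{ad}}_{W} \right) \subseteq {C}_{\mathfrak{gl \left( k \right)}} \left( {e}^{{\mathrm{ad}}_{W}} \right)$, I would fix $w \in W$ and apply \cref{parameter} with $B = {\mathrm{ad}}_{w}$: from $\left[ A, {\mathrm{ad}}_{w} \right] = 0$ we get $\left[ A, {e}^{t {\mathrm{ad}}_{w}} \right] = 0$ for all $t \in \mathbb{R}$, and specializing to $t = 1$ gives $\left[ A, {e}^{{\mathrm{ad}}_{w}} \right] = 0$. Since $w$ was arbitrary, this yields $A \in {C}_{\mathfrak{gl \left( k \right)}} \left( {e}^{{\mathrm{ad}}_{W}} \right)$.

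For the reverse inclusion the essential point, and what I expect to be the only real subtlety, is that \cref{parameter} requires $\left[ A, {e}^{t {\mathrm{ad}}_{w}} \right] = 0$ for \emph{all} $t \in \mathbb{R}$, whereas the hypothesis $A \in {C}_{\mathfrak{gl \left( k \right)}} \left( {e}^{{\mathrm{ad}}_{W}} \right)$ a priori only supplies the value $t = 1$. Here I would use that $W$ is a \emph{subspace} and that ${\mathrm{ad}}$ is linear: for fixed $w \in W$ and any $t \in \mathbb{R}$ the element $tw$ again lies in $W$, and ${\mathrm{ad}}_{tw} = t\,{\mathrm{ad}}_{w}$, so ${e}^{{\mathrm{ad}}_{tw}} = {e}^{t {\mathrm{ad}}_{w}}$. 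Applying the hypothesis to the element $tw \in W$ therefore gives $\left[ A, {e}^{t {\mathrm{ad}}_{w}} \right] = 0$ for every $t$, and \cref{parameter} (again with $B = {\mathrm{ad}}_{w}$) returns $\left[ A, {\mathrm{ad}}_{w} \right] = 0$.

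Combining the two inclusions gives the equality of centralizers. I would flag that the subspace hypothesis on $W$ is used in an indispensable way in the reverse direction: for a general subset $\mathcal{X}$ the identity can fail, since one could not recover the full one-parameter family ${e}^{t {\mathrm{ad}}_{w}}$ from the single exponential ${e}^{{\mathrm{ad}}_{w}}$, and this is exactly why the lemma is phrased for a subspace $W$ rather than an arbitrary set.
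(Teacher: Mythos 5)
Your proposal is correct and follows essentially the same route as the paper: the forward inclusion via \cref{parameter} (which the paper simply calls trivial), and the reverse inclusion by exploiting that $W$ is a subspace so that $tw\in W$ and ${e}^{{\mathrm{ad}}_{tw}}={e}^{t\,{\mathrm{ad}}_{w}}$, then invoking \cref{parameter} again. Your closing remark on why the subspace hypothesis is indispensable is a nice addition but does not change the argument.
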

\begin{proof}
The inclusion ${C}_{\mathfrak{gl \left( k \right)}} \left( {\mathrm{ad}}_{W} \right) \subseteq {C}_{\mathfrak{gl \left( k \right)}} \left( {e}^{{\mathrm{ad}}_{W}} \right)$ is trivial.
For the inclusion ${C}_{\mathfrak{gl \left( k \right)}} \left( {e}^{{\mathrm{ad}}_{W}} \right) \subseteq {C}_{\mathfrak{gl \left( k \right)}} \left( {\mathrm{ad}}_{W} \right)$,
$\forall \ A \in {C}_{\mathfrak{gl \left( k \right)}} \left( {e}^{{\mathrm{ad}}_{W}} \right)$
\begin{equation*}
\begin{split}
& \forall \ x \in W \left[ A,{e}^{{\mathrm{ad}}_{x}} \right] = 0, \\
& \forall \ y \in W \forall \ t \in \mathbb{R} \left[ A,{e}^{{\mathrm{ad}}_{t y}} \right] = 0, \\
& \forall \ y \in W \forall \ t \in \mathbb{R} \left[ A,{e}^{t {\mathrm{ad}}_{y}} \right] = 0.
\end{split}
\end{equation*}
By \cref{parameter}
\begin{equation*}
\forall \ y \in W \left[ A,{\mathrm{ad}}_{y} \right] = 0,
\end{equation*}
namely
\begin{equation*}
A \in {C}_{\mathfrak{gl \left( k \right)}} \left( {\mathrm{ad}}_{W} \right).
\end{equation*}
\end{proof}
We introduce now the reader to the main machinery of this work.
\begin{lemma}\label{inequality1}
Let $A$ and $B$ be unitary matrices.
\begin{equation*}
\left\| {\left[ A,B \right]}_{\bullet} - \mathbb{1} \right\|
\leq
\sqrt{2} \left\| A - \mathbb{1} \right\| \left\| B - \mathbb{1} \right\|,
\end{equation*}
%
\begin{equation*}
{\left[ A,{\left[ A,B \right]}_{\bullet} \right]}_{\bullet} = \mathbb{1} \quad \text{ and } \quad \left\| B - \mathbb{1} \right\| < 2
\implies
{\left[ A,B \right]}_{\bullet} = \mathbb{1}.
\end{equation*}
\end{lemma}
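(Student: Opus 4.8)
The two assertions rely on quite different ideas, so I would prove them separately.

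\emph{The inequality.} First I would use the purely algebraic identity $[A,B]_\bullet - \mathbb{1} = ABA^{-1}B^{-1} - \mathbb{1} = (AB - BA)A^{-1}B^{-1}$ together with the invariance of the Frobenius norm under right multiplication by the unitary $A^{-1}B^{-1}$, which reduces the left-hand side to $\|AB - BA\|$. Next I would record the identity $AB - BA = (A - \mathbb{1})(B - \mathbb{1}) - (B - \mathbb{1})(A - \mathbb{1})$, i.e. the \emph{ordinary} commutator of $A - \mathbb{1}$ and $B - \mathbb{1}$. At this point the claim is exactly the Böttcher--Wenzel inequality $\|XY - YX\| \le \sqrt{2}\,\|X\|\,\|Y\|$ applied to $X = A - \mathbb{1}$ and $Y = B - \mathbb{1}$. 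The constant $\sqrt{2}$ here is the sharp Böttcher--Wenzel constant (and is attained), so the whole content of this half is that constant: I would either cite Böttcher--Wenzel or, to stay self-contained, reprove the bound for this particular pair.

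\emph{The implication.} I would argue by contraposition. Write $C := [A,B]_\bullet$; the hypothesis $[A,C]_\bullet = \mathbb{1}$ says precisely that $A$ and $C$ commute. Being commuting unitaries they are simultaneously diagonalizable, so if $C \neq \mathbb{1}$ I can pick a common unit eigenvector $w$ with $Aw = e^{i\alpha}w$ and $Cw = e^{i\gamma}w$ where $\gamma \neq 0$. From $C = ABA^{-1}B^{-1}$ one gets $CB = ABA^{-1}$. Evaluating both sides against $w$ in the inner product gives, on one hand, $\langle w, CBw\rangle = e^{i\gamma}\langle w, Bw\rangle$ and, on the other, $\langle w, ABA^{-1}w\rangle = \langle w, Bw\rangle$; since $\gamma \neq 0$ these force $\langle w, Bw\rangle = 0$. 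Setting $u := B^{-1}w$ (again a unit vector), the same relation yields $\langle w, u\rangle = \overline{\langle w, Bw\rangle} = 0$ and $\langle u, Bu\rangle = \langle B^{-1}w, w\rangle = 0$. Hence $w$ and $u$ are orthonormal with $\|(B - \mathbb{1})w\|^2 = \|(B - \mathbb{1})u\|^2 = 2$, and completing $\{w,u\}$ to an orthonormal basis gives $\|B - \mathbb{1}\|^2 \ge 4$. This contradicts $\|B - \mathbb{1}\| < 2$, and therefore $C = \mathbb{1}$.

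\emph{Main obstacle.} For the inequality the entire difficulty is the sharp constant: the elementary submultiplicativity estimate only yields $2\,\|A-\mathbb{1}\|\,\|B-\mathbb{1}\|$, and upgrading the $2$ to $\sqrt{2}$ is exactly what Böttcher--Wenzel provides. For the implication the delicate point is that \emph{no} smallness of $A$ is assumed, so the first inequality (which degrades as $\|A - \mathbb{1}\|$ grows) is useless here; one must instead exploit the \emph{exact} relation $AC = CA$ through the common eigenvector $w$ and its image $u = B^{-1}w$. The threshold $2$ is sharp precisely because these two orthonormal vectors each contribute exactly $2$ to $\|B - \mathbb{1}\|^2$, which is why the hypothesis is stated as a strict inequality.
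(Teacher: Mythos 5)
Your proposal is correct, and it is worth noting that the paper itself gives no argument for this lemma: its ``proof'' is a bare citation of Theorems 36.15 and 36.16 in Curtis's \emph{Matrix Groups}, so you are supplying content the paper outsources. For the first inequality, your reduction $\|[A,B]_\bullet-\mathbb{1}\|=\|(AB-BA)A^{-1}B^{-1}\|=\|AB-BA\|=\|[A-\mathbb{1},B-\mathbb{1}]\|$ is exactly right, and the B\"ottcher--Wenzel inequality does deliver the constant $\sqrt{2}$; just be aware that this is a genuinely nontrivial import (submultiplicativity plus the triangle inequality gives only $2\|A-\mathbb{1}\|\|B-\mathbb{1}\|$, and even exploiting normality of $A-\mathbb{1}$ and $B-\mathbb{1}$ the elementary estimates I can see still give $2$), so ``reprove it for this pair'' is not a casual fallback --- plan to cite it. The constant matters only because the radius $1/\sqrt{2}$ in Definition~\ref{balls} is calibrated to $\sqrt{2}$; with constant $2$ one would shrink the radius to $1/2$ and the rest of the paper would survive. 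For the implication, your contrapositive argument is complete and correct: commutativity of $A$ with $C=[A,B]_\bullet$ yields a common unit eigenvector $w$ with $Cw=e^{i\gamma}w$, $e^{i\gamma}\neq\mathbb{1}$; the relation $CB=ABA^{-1}$ forces $\langle w,Bw\rangle=0$; and then $w$ and $u=B^{-1}w$ are orthonormal with $\|(B-\mathbb{1})w\|^2=\|(B-\mathbb{1})u\|^2=2$, giving $\|B-\mathbb{1}\|\geq 2$, a contradiction. Your sharpness observation also checks out: for $A=\mathrm{diag}(1,-1)$ and $B$ the swap matrix one has $\|B-\mathbb{1}\|=2$ and $[A,B]_\bullet=-\mathbb{1}$, which is central, so the strict inequality cannot be weakened.
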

\begin{proof}
Theorems 36.15 and 36.16 in \cite{Curtis}.
\end{proof}
We reformulate now \cref{inequality1}.
\begin{theorem}\label{inequality2}
Let $K$ be a subgroup of $U \left( d \right)$. 
Let $g,h \in K$.
\begin{equation*}
\left\| {\left[ g,h \right]}_{\bullet} - \mathbb{1} \right\|
\leq
\sqrt{2} \min_{{c}_{1} \in {C}_{K} \left( K \right)} \left\| g - {c}_{1} \right\| \min_{{c}_{2} \in {C}_{K} \left( K \right)} \left\| h - {c}_{2} \right\|,
\end{equation*}
%
\begin{equation*}
{\left[ h,{\left[ h,g \right]}_{\bullet} \right]}_{\bullet} = \mathbb{1} \quad \text{ and } \quad g \in {\mathcal{B}}_{K}
\implies
{\left[ h,g \right]}_{\bullet} = \mathbb{1}.
\end{equation*}
\end{theorem}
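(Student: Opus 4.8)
The plan is to deduce \cref{inequality2} from \cref{inequality1} by exploiting the fact that the group commutator is insensitive to multiplication by central elements. The key observation is that for any $c_1,c_2 \in C_K(K)$ one has ${[c_1^{-1}g,\,c_2^{-1}h]}_{\bullet} = {[g,h]}_{\bullet}$: since $c_1$ and $c_2$ commute with every element of $K$, they can be moved freely past the other factors and cancelled against their inverses. Combined with the unitary invariance of the Frobenius norm, which gives $\|c^{-1}g - \mathbb{1}\| = \|g - c\|$ for any (in particular central) unitary $c$, this lets us replace $g$ and $h$ by their ``recentred'' versions $c_1^{-1}g$ and $c_2^{-1}h$ without changing the commutator, while converting the distances $\|g-c_i\|$ into distances from the identity.

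For the first inequality I would fix arbitrary $c_1,c_2 \in C_K(K)$ and apply the first bound of \cref{inequality1} to $A = c_1^{-1}g$ and $B = c_2^{-1}h$, obtaining
\[
\|{[g,h]}_{\bullet} - \mathbb{1}\| = \|{[c_1^{-1}g,\,c_2^{-1}h]}_{\bullet} - \mathbb{1}\| \le \sqrt{2}\,\|c_1^{-1}g - \mathbb{1}\|\,\|c_2^{-1}h - \mathbb{1}\| = \sqrt{2}\,\|g-c_1\|\,\|h-c_2\|.
\]
Since the left-hand side does not depend on $c_1$ or $c_2$, taking the infimum over $c_1$ and $c_2$ independently on the right produces exactly the claimed bound; note that this argument does not require the minima to be attained.

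For the implication I would choose $c \in C_K(K)$ with $\|g-c\| < \tfrac{1}{\sqrt{2}}$, which exists precisely because $g \in \mathcal{B}_K$. Setting $A = h$ and $B = c^{-1}g$, the recentring identity gives ${[h,\,c^{-1}g]}_{\bullet} = {[h,g]}_{\bullet}$, so the hypothesis ${[h,{[h,g]}_{\bullet}]}_{\bullet} = \mathbb{1}$ becomes ${[A,{[A,B]}_{\bullet}]}_{\bullet} = \mathbb{1}$, while $\|B - \mathbb{1}\| = \|g-c\| < \tfrac{1}{\sqrt{2}} < 2$. The second part of \cref{inequality1} then forces ${[A,B]}_{\bullet} = {[h,g]}_{\bullet} = \mathbb{1}$, as desired.

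The only real content is the recentring identity for the group commutator; once it is available, both statements are immediate corollaries of \cref{inequality1}, and the gap between the threshold $\tfrac{1}{\sqrt{2}}$ defining $\mathcal{B}_K$ and the threshold $2$ demanded by \cref{inequality1} shows the implication has considerable slack. The one point I would verify explicitly rather than by symmetry is that centrality is genuinely used in \emph{both} cancellations: it is the commutation of $c$ past $h$ (not merely past $g$) that makes ${[h,\,c^{-1}g]}_{\bullet} = {[h,g]}_{\bullet}$, so I would write out the two short conjugation computations by hand.
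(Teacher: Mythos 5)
Your proposal is correct and follows essentially the same route as the paper's own proof: both recentre $g$ and $h$ by central elements, use the invariance of the group commutator and of the Frobenius distance under central translation, and then invoke \cref{inequality1} directly. No gaps.
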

\begin{proof}
By \cref{inequality1}, $\forall \ {c}_{1},{c}_{2} \in {C}_{K} \left( K \right)$
\begin{equation*}
\left\| {\left[ g,h \right]}_{\bullet} - \mathbb{1} \right\|
=
\left\| {\left[ {{c}_{1}}^{- 1} g,{{c}_{2}}^{- 1} h \right]}_{\bullet} - \mathbb{1} \right\|
\leq
\sqrt{2} \left\| {{c}_{1}}^{- 1} g - \mathbb{1} \right\| \left\| {{c}_{2}}^{- 1} h - \mathbb{1} \right\|
=
\sqrt{2} \left\| g - {c}_{1} \right\| \left\| h - {c}_{2} \right\|,
\end{equation*}
therefore
\begin{equation*}
\left\| {\left[ g,h \right]}_{\bullet} - \mathbb{1} \right\|
\leq
\sqrt{2} \min_{{c}_{1} \in {C}_{K} \left( K \right)} \left\| g - {c}_{1} \right\| \min_{{c}_{2} \in {C}_{K} \left( K \right)} \left\| h - {c}_{2} \right\|.
\end{equation*}
%
We also know that
\begin{equation*}
\exists \ c \in {C}_{K} \left( K \right) \mid g \in {\mathcal{B}}_{K} \left( c \right).
\end{equation*}
Now
\begin{equation*}
{\left[ h,{\left[ h,{c}^{- 1} g \right]}_{\bullet} \right]}_{\bullet}
=
{\left[ h,{\left[ h,g \right]}_{\bullet} \right]}_{\bullet}
=
\mathbb{1},
\end{equation*}
and
\begin{equation*}
\left\| {c}^{- 1} g - \mathbb{1} \right\|
=
\left\| g - c \right\|
<
\frac{1}{\sqrt{2}}
<
2.
\end{equation*}
By \cref{inequality1}
\begin{equation*}
{\left[ h,g \right]}_{\bullet}
=
{\left[ h,{c}^{- 1} g \right]}_{\bullet}
=
\mathbb{1}.
\end{equation*}
\end{proof}
\subsection{Logarithm of a special unitary matrix}\label{extra}
It is well known that the exponential of a traceless skew - hermitian matrix is a special unitary matrix.
In this section we ask under which conditions the logarithm of a special unitary matrix is a traceless skew - hermitian matrix, we recall how to define the logarithm of a matrix and we provide an answer to the question in the final theorem.
\begin{definition}
Let $A$ be an invertible matrix. 
The \emph{principal logarithm} of $A$ is the unique logarithm of $A$ with eigenvalues in $\pi< Im \left( z \right) \leq \pi$.
\end{definition}
We start with an upper bound on the trace of the logarithm.
\begin{lemma}
Let $U \in U \left( d \right)$ and $X \in \mathfrak{u} \left( d \right)$ be the principal logarithm of $U$.
\begin{equation*}
\text{ The eigenvalues of } U \text{ are in } Re \left( z \right) \geq 0
\implies
\left\lvert \frac{\mathrm{tr} \left( X \right)}{2 \pi i} \right\rvert \leq \frac{d}{2 \pi} \mathrm{acos} \left( 1 - \frac{{\left\| U - \mathbb{1} \right\|}^{2}}{2 d} \right) \leq \frac{d}{4}.
\end{equation*}
\end{lemma}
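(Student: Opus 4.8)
The plan is to diagonalize $U$ and reduce the whole statement to a one-dimensional convexity estimate about the spectrum. Write the eigenvalues of $U$ as $e^{i\theta_k}$ with $\theta_k \in (-\pi,\pi]$, as dictated by the principal branch, so that $X$ has eigenvalues $i\theta_k$ and $\mathrm{tr} \left( X \right) = i\sum_k \theta_k$. The hypothesis that the spectrum of $U$ lies in $Re \left( z \right)\ge 0$ translates into $\cos\theta_k \ge 0$, i.e. $\theta_k \in [-\pi/2,\pi/2]$ for every $k$. This is the only place where the assumption is used, and it is exactly what confines the relevant quantities to the range where the estimates are clean.

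First I would record two elementary spectral identities. From $\mathrm{tr} \left( X \right) = i\sum_k\theta_k$ one gets $\left\lvert \mathrm{tr} \left( X \right)/(2\pi i)\right\rvert = \frac{1}{2\pi}\left\lvert\sum_k\theta_k\right\rvert \le \frac{1}{2\pi}\sum_k|\theta_k|$, the triangle inequality being the step that discards the signs of the phases. On the other side, since $\left\| U - \mathbb{1} \right\|^2 = \sum_k|e^{i\theta_k}-1|^2 = \sum_k 2 \left( 1-\cos\theta_k \right)$, I obtain the key reformulation $\frac{1}{d}\sum_k\cos\theta_k = 1 - \frac{\left\| U - \mathbb{1} \right\|^2}{2d}$, which identifies the argument appearing inside $\mathrm{acos}$ as the average of the $\cos\theta_k$.

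The heart of the argument is then a single application of Jensen's inequality. Because $\theta_k \in [-\pi/2,\pi/2]$ we have $\cos\theta_k \in [0,1]$ and $|\theta_k| = \mathrm{acos} \left( \cos\theta_k \right)$, so $\sum_k|\theta_k| = \sum_k\mathrm{acos} \left( \cos\theta_k \right)$. A direct computation of the second derivative shows that $\mathrm{acos}$ is concave on $[0,1]$, whence $\frac{1}{d}\sum_k\mathrm{acos} \left( \cos\theta_k \right) \le \mathrm{acos} \left( \frac{1}{d}\sum_k\cos\theta_k \right)$. Substituting the spectral identity and combining with the triangle-inequality bound yields the first inequality $\left\lvert\mathrm{tr} \left( X \right)/(2\pi i)\right\rvert \le \frac{d}{2\pi}\mathrm{acos} \left( 1-\frac{\left\| U - \mathbb{1} \right\|^2}{2d} \right)$. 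I expect the verification of concavity together with the correct orientation of Jensen's inequality to be the one genuinely non-routine point; everything else is bookkeeping.

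Finally, the second inequality is immediate from the same hypothesis: since $\cos\theta_k\ge 0$ for all $k$, the average $\frac{1}{d}\sum_k\cos\theta_k = 1-\frac{\left\| U - \mathbb{1} \right\|^2}{2d}$ is nonnegative, so its arccosine lies in $[0,\pi/2]$, giving $\frac{d}{2\pi}\mathrm{acos} \left( 1-\frac{\left\| U - \mathbb{1} \right\|^2}{2d} \right) \le \frac{d}{2\pi}\cdot\frac{\pi}{2} = \frac{d}{4}$.
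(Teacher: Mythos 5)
Your proof is correct and follows essentially the same route as the paper: diagonalize $U$, record the identities $\mathrm{tr}(X)=i\sum_k\theta_k$ and $1-\tfrac{\|U-\mathbb{1}\|^2}{2d}=\tfrac{1}{d}\sum_k\cos\theta_k$, and conclude with a single application of Jensen's inequality. The only (harmless) difference is that you apply Jensen to the concave function $\mathrm{acos}$ on $[0,1]$ after a triangle inequality, which bounds $\sum_k|\theta_k|$, whereas the paper applies it to $\cos$ on $[-\pi/2,\pi/2]$ and bounds $\left|\sum_k\theta_k\right|$ directly; your intermediate estimate is marginally stronger but the two are equivalent for the stated conclusion.
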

\begin{proof}
Since the eigenvalues of $U$ are of the form ${e}^{i {\theta}_{i}} \left( i = 1,\dots,d \right)$
\begin{equation*}
\begin{split}
& 1 - \frac{{\left\| U - \mathbb{1} \right\|}^{2}}{2 d}
=
\frac{\sum_{i = 1}^{d} \cos{{\theta}_{i}}}{d},
\\
& \frac{\mathrm{tr} \left( X \right)}{d i}
=
\frac{i \sum_{i = 1}^{d} {\theta}_{i}}{d i}
=
\frac{\sum_{i = 1}^{d} {\theta}_{i}}{d}.
\end{split}
\end{equation*}
Since $\left\lvert {\theta}_{i} \right\rvert \leq \frac{\pi}{2} \left( i = 1,\dots,d \right)$
\begin{itemize}
\item
$0 \leq 1 - \frac{{\left\| U - \mathbb{1} \right\|}^{2}}{2 d} \leq 1$,
\item
$\left\lvert \frac{\mathrm{tr} \left( X \right)}{d i} \right\rvert \leq \frac{\pi}{2}$,
\item
since the function $\cos{\theta}$ is concave for $\left\lvert \theta \right\rvert \leq \frac{\pi}{2}$, by Jensen's Inequality
\begin{equation*}
\cos{\frac{\mathrm{tr} \left( X \right)}{d i}}
=
\cos{\frac{\sum_{i = 1}^{d} {\theta}_{i}}{d}}
\geq
\frac{\sum_{i = 1}^{d} \cos{{\theta}_{i}}}{d}
=
1 - \frac{{\left\| U - \mathbb{1} \right\|}^{2}}{2 d}.
\end{equation*}
\end{itemize}
We can sum it up in the chain of inequalities
\begin{equation*}
\left\lvert \frac{\mathrm{tr} \left( X \right)}{d i} \right\rvert
\leq
\mathrm{acos} \left( 1 - \frac{{\left\| U - \mathbb{1} \right\|}^{2}}{2 d} \right)
\leq
\frac{\pi}{2},
\end{equation*}
or equivalently
\begin{equation*}
\left\lvert \frac{\mathrm{tr} \left( X \right)}{2 \pi i} \right\rvert
\leq
\frac{d}{2 \pi} \mathrm{acos} \left( 1 - \frac{{\left\| U - \mathbb{1} \right\|}^{2}}{2 d} \right)
\leq
\frac{d}{4}.
\end{equation*}
\end{proof}
We can now replace in the previous lemma the principal logarithm with the logarithm defined via matrix power series, at the cost of introducing a sufficient condition for the convergence of the series.
\begin{lemma}
Let $r \in \left.\left( 0,1 \right.\right]$ and $U \in U \left( d \right)$ such that $\left\| U - \mathbb{1} \right\| < r$.
\begin{equation*}
\left\lvert \frac{\mathrm{tr} \left( \log{\left( U \right)} \right)}{2 \pi i} \right\rvert
\leq
\frac{d}{2 \pi} \mathrm{acos} \left( 1 - \frac{{\left\| U - \mathbb{1} \right\|}^{2}}{2 d} \right)
<
\frac{d}{2 \pi} \mathrm{acos} \left( 1 - \frac{{r}^{2}}{2 d} \right).
\end{equation*}
%
%
\end{lemma}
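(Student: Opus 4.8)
The plan is to reduce the statement to the preceding lemma, whose conclusion already gives the desired bound for the \emph{principal} logarithm under the hypothesis that the eigenvalues of $U$ lie in the half-plane $\mathrm{Re}(z) \geq 0$. The two things to establish are therefore: first, that the power-series logarithm $\log(U) = \sum_{k=1}^{\infty} \frac{(-1)^{k+1}}{k}(U-\mathbb{1})^k$ is well defined and coincides with the principal logarithm under the hypothesis $\left\| U - \mathbb{1} \right\| < r \leq 1$, so that the earlier lemma applies verbatim to $X = \log(U)$; and second, the final strict inequality, which will follow from a monotonicity argument.

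For the first point I would begin by recording that for a unitary $U$ the Frobenius norm controls the spectrum: writing the eigenvalues as $e^{i\theta_j}$ and using that $U$ is normal, one has $\left\| U - \mathbb{1} \right\|^2 = \sum_{j=1}^{d}\lvert e^{i\theta_j}-1\rvert^2 = 2\sum_{j=1}^{d}(1-\cos\theta_j)$. The hypothesis $\left\| U - \mathbb{1} \right\| < r \leq 1$ then forces $1 - \cos\theta_j < \tfrac12$ for every $j$, i.e. $\cos\theta_j > \tfrac12$ and $\lvert\theta_j\rvert < \tfrac{\pi}{3}$. In particular $\mathrm{Re}(e^{i\theta_j}) = \cos\theta_j > 0$, so the eigenvalues lie in the right half-plane and the hypothesis of the previous lemma is met. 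Since the spectral radius of $U-\mathbb{1}$ is bounded by $\left\| U - \mathbb{1} \right\| < 1$, the Mercator series converges, and because each $\lvert e^{i\theta_j}-1\rvert = 2\lvert\sin(\theta_j/2)\rvert < 1$, the associated scalar series returns the principal branch value $i\theta_j$ with $\lvert\theta_j\rvert < \pi/3 < \pi$; hence the matrix series logarithm coincides with the principal logarithm on the diagonalizing basis.

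With this identification in hand, the first (non-strict) inequality is exactly the conclusion of the previous lemma applied to $X = \log(U)$. For the strict inequality I would invoke that $\mathrm{acos}$ is strictly decreasing on $[-1,1]$: since $\left\| U - \mathbb{1} \right\| < r$ gives $1 - \frac{\left\| U - \mathbb{1} \right\|^2}{2d} > 1 - \frac{r^2}{2d}$, applying $\mathrm{acos}$ reverses the inequality and produces the bound $\frac{d}{2\pi}\,\mathrm{acos}\!\left(1 - \frac{\left\| U - \mathbb{1} \right\|^2}{2d}\right) < \frac{d}{2\pi}\,\mathrm{acos}\!\left(1 - \frac{r^2}{2d}\right)$, as required.

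The step I expect to be most delicate is the identification of the power-series logarithm with the principal logarithm; everything else is either bookkeeping with eigenvalues or the monotonicity of $\mathrm{acos}$. Care is needed to confirm that the convergence-radius condition and the half-plane condition are \emph{simultaneously} guaranteed by the single hypothesis $\left\| U - \mathbb{1} \right\| < 1$, and that the scalar logarithm series genuinely lands in the principal branch range $(-\pi,\pi]$ so that the two logarithms agree eigenvalue-by-eigenvalue.
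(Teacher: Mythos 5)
Your proposal is correct and follows exactly the route the paper intends: the paper states this lemma without an explicit proof, deriving it implicitly from the preceding lemma once one observes that $\left\| U - \mathbb{1} \right\| < r \leq 1$ forces every eigenvalue angle to satisfy $\lvert \theta_j \rvert < \pi/3$ (hence the right-half-plane hypothesis), that the power-series logarithm then converges and agrees with the principal logarithm, and that the final strict inequality is just the strict monotonicity of $\mathrm{acos}$. Your careful verification that the convergence condition and the half-plane condition are simultaneously guaranteed, and that the scalar series lands in the principal branch, is precisely the bookkeeping the paper leaves to the reader.
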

We then introduce some additional conditions which simplify the upper bound.
\begin{lemma}
Let $r \in \left.\left( 0,1 \right.\right]$ and $U \in U \left( d \right)$ such that $\left\| U - \mathbb{1} \right\| < r$.
\begin{equation*}
d \geq \left\lceil \frac{40}{{r}^{2}} \right\rceil \quad \text{ and } \quad \left\| U - \mathbb{1} \right\| < 2 \sqrt{d} \sin{\frac{\pi}{d}} 
\implies
\left\lvert \frac{\mathrm{tr} \left( \log{\left( U \right)} \right)}{2 \pi i} \right\rvert < 1.
\end{equation*}
\end{lemma}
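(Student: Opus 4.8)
The plan is to feed the bound already established in the preceding lemma,
\[
\left\lvert \frac{\mathrm{tr}\left(\log U\right)}{2\pi i}\right\rvert \le \frac{d}{2\pi}\,\mathrm{acos}\!\left(1 - \frac{\left\|U-\mathbb{1}\right\|^{2}}{2d}\right),
\]
into the problem and show directly that its right-hand side is strictly below $1$. The ambient hypothesis $\left\|U-\mathbb{1}\right\| < r \le 1$ is exactly what makes that lemma applicable, so no work is needed to justify its use. I emphasize at the outset that one must start from this \emph{middle} bound (the one featuring $\left\|U-\mathbb{1}\right\|$) rather than the coarser bound $\frac{d}{2\pi}\mathrm{acos}\!\left(1-\frac{r^{2}}{2d}\right)$: under the hypothesis $d \ge \left\lceil 40/r^{2}\right\rceil$ the latter is in fact $\ge 1$, so it cannot close the argument. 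Recognising this is the one genuinely non-routine point in the proof.

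First I would rewrite the target inequality $\frac{d}{2\pi}\mathrm{acos}\!\left(1 - \frac{\left\|U-\mathbb{1}\right\|^{2}}{2d}\right) < 1$ as $\mathrm{acos}\!\left(1 - \frac{\left\|U-\mathbb{1}\right\|^{2}}{2d}\right) < \frac{2\pi}{d}$. Since $r \le 1$ forces $d \ge 40$, we have $\frac{2\pi}{d} \in \left(0,\pi\right)$, so I may invert the strictly decreasing function $\mathrm{acos}$ and pass to the equivalent inequality $1 - \frac{\left\|U-\mathbb{1}\right\|^{2}}{2d} > \cos\frac{2\pi}{d}$; the argument of $\mathrm{acos}$ lies in $\left(-1,1\right)$ because $\left\|U-\mathbb{1}\right\|^{2} < 1 \le 4d$, so the domain check is immediate. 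Applying the double-angle identity $\cos\frac{2\pi}{d} = 1 - 2\sin^{2}\frac{\pi}{d}$ and simplifying turns this into $\left\|U-\mathbb{1}\right\|^{2} < 4d\sin^{2}\frac{\pi}{d}$, i.e. $\left\|U-\mathbb{1}\right\| < 2\sqrt{d}\,\sin\frac{\pi}{d}$, which is precisely the second hypothesis. This chain of equivalences is the entire content of the proof.

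Finally I would record the role of the condition $d \ge \left\lceil 40/r^{2}\right\rceil$, which does not enter the core estimate beyond guaranteeing $\frac{2\pi}{d} < \pi$, but keeps the hypotheses mutually consistent. Using that $d\sin\frac{\pi}{d}$ increases to $\pi$ as $d\to\infty$ (equivalently, $\frac{\sin t}{t}$ is decreasing), one gets $d\sin\frac{\pi}{d} < \pi < \sqrt{10}$ for every $d$, whence $2\sqrt{d}\sin\frac{\pi}{d} < \frac{2\pi}{\sqrt{d}} < \frac{2\sqrt{10}}{\sqrt{d}} = \sqrt{40/d} \le r$ once $d \ge 40/r^{2}$. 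Thus the threshold $2\sqrt{d}\sin\frac{\pi}{d}$ is genuinely smaller than $r$, so the second hypothesis already subsumes the ambient bound $\left\|U-\mathbb{1}\right\| < r$, and (as noted above) the coarser $r$-bound would only yield a value $\ge 1$. I expect no real obstacle: the only points requiring care are choosing the middle bound to start from and checking that the $\mathrm{acos}$ argument stays in its domain, after which the result is a monotonicity argument combined with a single trigonometric identity.
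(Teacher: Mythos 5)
Your proof is correct and follows essentially the same route as the paper's: both feed the second hypothesis into the bound $\frac{d}{2\pi}\mathrm{acos}\bigl(1 - \frac{\left\|U-\mathbb{1}\right\|^{2}}{2d}\bigr)$ from the preceding lemma via the identity $\cos\frac{2\pi}{d} = 1 - 2\sin^{2}\frac{\pi}{d}$, and both observe (you in a closing remark, the paper in its opening computation with $\pi^{2}<10$) that the coarser $r$-dependent bound is $\geq 1$ under the hypothesis $d \geq \left\lceil 40/r^{2}\right\rceil$ and hence cannot close the argument. The only difference is presentational: you run the chain as a backward sequence of equivalences while the paper argues forward.
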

\begin{proof}
Since ${\pi}^{2} < 10$
\begin{equation*}
{\left( \frac{2 \pi}{d} \right)}^{2}
=
\frac{4 {\pi}^{2}}{{d}^{2}}
<
\frac{40}{{d}^{2}}
=
\frac{{r}^{2}}{{d}^{2}} \frac{40}{{r}^{2}}
\leq
\frac{{r}^{2}}{{d}^{2}} \left\lceil \frac{40}{{r}^{2}} \right\rceil
\leq
\frac{{r}^{2}}{d}.
\end{equation*}
Therefore
\begin{equation*}
\begin{split}
& \cos{\frac{2 \pi}{d}}
\geq
1 - \frac{1}{2} {\left( \frac{2 \pi}{d} \right)}^{2}
>
1 - \frac{{r}^{2}}{2 d},
\\
& \frac{2 \pi}{d}
<
\mathrm{acos} \left( 1 - \frac{{r}^{2}}{2 d} \right),
\\
& \frac{d}{2 \pi} \mathrm{acos} \left( 1 - \frac{{r}^{2}}{2 d} \right)
>
1.
\end{split}
\end{equation*}
This inequality leaves room for the opposite conclusion, but the second condition comes to the rescue
\begin{equation*}
\frac{{\left\| U - \mathbb{1} \right\|}^{2}}{2 d}
<
2 \sin^2{\frac{\pi}{d}}
=
1 - \cos{\frac{2 \pi}{d}}.
\end{equation*}
Therefore
\begin{equation*}
\begin{split}
& \cos{\frac{2 \pi}{d}}
<
1 - \frac{{\left\| U - \mathbb{1} \right\|}^{2}}{2 d},
\\
& \frac{2 \pi}{d}
>
\mathrm{acos} \left( 1 - \frac{{\left\| U - \mathbb{1} \right\|}^{2}}{2 d} \right),
\\
& \frac{d}{2 \pi} \mathrm{acos} \left( 1 - \frac{{\left\| U - \mathbb{1} \right\|}^{2}}{2 d} \right)
<
1.
\end{split}
\end{equation*}
\end{proof}
\begin{remark}
For $d < \left\lceil \frac{40}{{r}^{2}} \right\rceil$ the condition $\left\| U - \mathbb{1} \right\| < 2 \sqrt{d} \sin{\frac{\pi}{d}}$ might still be needed (it strongly depends on $r$).
\end{remark}
The following is the result we wanted.
\begin{theorem}
Let $r \in \left.\left( 0,1 \right.\right]$ and $U \in SU \left( d \right)$ such that $\left\| U - \mathbb{1} \right\| < r$.
\begin{equation*}
d \geq \left\lceil \frac{40}{{r}^{2}} \right\rceil \quad \text{ and } \quad \left\| U - \mathbb{1} \right\| < 2 \sqrt{d} \sin{\frac{\pi}{d}}
\implies
\log{\left( U \right)} \in \mathfrak{su} \left( d \right).
\end{equation*}
\end{theorem}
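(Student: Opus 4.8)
The plan is to read off the conclusion from the preceding lemma together with the defining property $\det U = 1$ of the special unitary group. Since $\|U - \mathbb{1}\| < r \leq 1$ and the Frobenius norm dominates the spectral norm, the power series $X \coloneqq \log U = \sum_{k \geq 1} \frac{(-1)^{k+1}}{k}(U - \mathbb{1})^{k}$ converges, so $X$ is a well-defined matrix satisfying $\exp X = U$. To prove $X \in \mathfrak{su}(d)$ I would verify separately that $X$ is skew-Hermitian and that $X$ is traceless.

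First I would establish skew-Hermiticity from the unitarity of $U$ via its spectral decomposition. Writing $U = V D V^{\dagger}$ with $V$ unitary and $D = \mathrm{diag}(e^{i\theta_1}, \dots, e^{i\theta_d})$, the bound $\|U - \mathbb{1}\| < 1$ forces each eigenvalue of the normal matrix $U - \mathbb{1}$ to satisfy $|e^{i\theta_j} - 1| < 1$, i.e. $2|\sin(\theta_j/2)| < 1$, hence $|\theta_j| < \pi/3$. On this range the scalar series returns the principal value $i\theta_j$, which is purely imaginary, so $\log D = \mathrm{diag}(i\theta_1, \dots, i\theta_d)$ is skew-Hermitian. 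Since $X = V (\log D) V^{\dagger}$ is a unitary conjugate of a skew-Hermitian matrix, $X$ is itself skew-Hermitian.

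Next I would obtain tracelessness from the identity $\det(\exp X) = \exp(\mathrm{tr}\, X)$. As $U \in SU(d)$ we have $\det U = 1$, and since $\exp X = U$ this gives $\exp(\mathrm{tr}\, X) = 1$, so $\mathrm{tr}\, X \in 2\pi i\, \mathbb{Z}$. The hypotheses of the theorem are precisely the hypotheses of the preceding lemma, which therefore supplies $\bigl| \mathrm{tr}\, X / (2\pi i) \bigr| < 1$. The only integer strictly below $1$ in absolute value is $0$, so $\mathrm{tr}\, X = 0$. Combining the two facts, $X = \log U$ is traceless and skew-Hermitian, i.e. $X \in \mathfrak{su}(d)$.

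The hard part has in effect already been done in the preceding lemma, which provides the delicate trace estimate; the remainder is essentially bookkeeping. The one step that calls for genuine care is the skew-Hermiticity argument, where one must confirm that the convergence radius keeps every eigenvalue-angle $\theta_j$ safely inside $(-\pi, \pi)$ so that the series logarithm coincides with the principal branch and returns the purely imaginary value $i\theta_j$; once the eigenvalues are localized this way the conclusion is immediate.
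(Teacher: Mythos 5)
Your proof is correct and follows essentially the same route the paper intends: the theorem is stated without an explicit proof precisely because it is the direct assembly of the preceding trace-bound lemma with the observation that $\det U = 1$ forces $\mathrm{tr}\left(\log U\right) \in 2\pi i\,\mathbb{Z}$, which the strict bound $\left\lvert \mathrm{tr}\left(\log U\right)/(2\pi i)\right\rvert < 1$ then pins to zero. Your additional spectral-decomposition argument for skew-Hermiticity correctly fills in a step the paper takes for granted.
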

\begin{remark}
The condition $\left\| U - \mathbb{1} \right\| < 2 \sqrt{d} \sin{\frac{\pi}{d}}$ in the previous corollary is tight.
Consider indeed the following matrix.
\begin{equation*}
U = 
\begin{pmatrix}
{e}^{\frac{2 \pi i}{d}} &        & 0                       \\
                        & \ddots &                         \\
0                       &        & {e}^{\frac{2 \pi i}{d}}
\end{pmatrix},
\end{equation*}
where $d \geq \left\lceil \frac{40}{{r}^{2}} \right\rceil$.
We see that $\left\| U - \mathbb{1} \right\| = 2 \sqrt{d} \sin{\frac{\pi}{d}} < r$ but $\log{\left( U \right)} \notin \mathfrak{su} \left( d \right)$.
\end{remark}
\end{document}